\def\Im {\mathop{\rm Im}\nolimits}
\def\arg {\mathop{\rm arg}\nolimits}
\def\Re {\mathop{\rm Re}\nolimits}
\def\Ai {{\rm Ai}}
\newcommand{\D}{\displaystyle}
\newtheorem{pro}{PROPOSITION}
\newtheorem{cor}{COROLLARY}
\newtheorem{lem}{LEMMA}
\newtheorem{thm}{THEOREM}
\newtheorem{rem}{REMARK}
\newtheorem{rhp}{RH problem}[section]
\numberwithin{equation}{section}
\begin{document}

\title{Tracy-Widom distributions in critical unitary random matrix ensembles and the coupled Painlev\'{e} II system}
\author{Shuai-Xia Xu\footnotemark[1] ~and Dan Dai\footnotemark[2] ~}

\renewcommand{\thefootnote}{\fnsymbol{footnote}}
\footnotetext[1]{Institut Franco-Chinois de l'Energie Nucl\'{e}aire, Sun Yat-sen University,
Guangzhou 510275, China. E-mail: \texttt{xushx3@mail.sysu.edu.cn}}
\footnotetext[2]{Department of Mathematics, City University of Hong Kong, Tat Chee
Avenue, Kowloon, Hong Kong. E-mail: \texttt{dandai@cityu.edu.hk} (Corresponding author)}
\date{}
\maketitle

\maketitle

\noindent \hrule width 6.27in \vskip .3cm

\noindent {\bf{Abstract }} We study Fredholm determinants of the Painlev\'{e} II and Painlev\'{e} XXXIV kernels. In certain critical unitary random matrix ensembles, these determinants describe special gap probabilities of eigenvalues. We obtain Tracy-Widom formulas for the Fredholm determinants, which are explicitly given in terms of integrals involving a family of distinguished solutions to the coupled Painlev\'{e} II system  in dimension four. Moreover, the large gap asymptotics for these Fredholm determinants are derived, where the constant terms are given explicitly in terms of the Riemann zeta-function.

\vskip .5cm
\noindent {\it{2010 Mathematics Subject Classification:}} 33E17; 34M55; 41A60

\vspace{.2in}

\noindent {\it{Keywords and phrases:}} unitary ensembles; Painlev\'{e} and coupled Painlev\'{e} equations; Tracy-Widom distribution; Riemann-Hilbert problem; Deift-Zhou method; Riemann zeta-function.

\vskip .3cm

\noindent \hrule width 6.27in\vskip 1.3cm

\tableofcontents

\section{Introduction} \setcounter{section} {1}

We consider the space of $n \times n$ Hermitian matrices $M$ with probability distribution
\begin{equation}\label{GUE}
\frac{1}{\mathcal{Z}_{n}}|\det M|^{2\alpha}e^{-n \mathrm{ Tr}V(M)}dM \qquad \textrm{for } \alpha>-\frac {1}{2},
\end{equation}
where $V:  \mathbb{R} \to \mathbb{R}$ is a real analytic function and satisfies
\begin{equation}
  \lim_{x\to\pm\infty}\frac {V(x)}{\log(1+x^2)}=+\infty.
\end{equation}
Here $dM$  is the Lebesgue measure for Hermitian matrices and $\mathcal{Z}_{n}$ is the normalization constant. It is well-known that the joint probability density function for the eigenvalues of $M$ is given by
\begin{equation}\label{pdf}
p_n(\lambda_1,\cdots,\lambda_n)=\frac{1}{Z_{n}}
\prod_{i=1}^n |\lambda_i|^{2\alpha}e^{- nV(\lambda_i)}\prod_{i<j}(\lambda_i-\lambda_j)^2,
\end{equation}
which can be put into a determinantal form
\begin{equation}\label{GUE-pdf-d}
p_n(\lambda_1,\cdots,\lambda_n)=\frac{1}{n!}
\det(K_{n}(\lambda_i,\lambda_j))_{i,j=1}^n,
\end{equation}
with the correlation kernel
\begin{equation}\label{Weighted OP kernel}
K_{n}(x,y)=|xy|^\alpha  e^{-\frac {n}{2}V(x)} e^{-\frac {n}{2}
V(y)}\sum_{k=0}^{n-1}P_{k}(x)P_{k}(y);
\end{equation}
see for example \cite{deift} and \cite{m}. The above kernel is so-called \emph{orthogonal polynomial kernel}, and $P_{k}(x)$ is the $k$-th degree orthonormal polynomial with
respect to the weight $|x|^{2\alpha}e^{-nV(x)}$. 

In the global regime, the limiting mean density of eigenvalues is
\begin{equation}\label{circle law}
\rho_V(x)=\lim_{n\to \infty }\frac {1}{n}K_{n}(x,x),
\end{equation}
which depends on the exact potential $V(x)$; see \cite{deift} and \cite{dkm:Jat1998}. However, the local statistics of eigenvalues only rely on some general characteristics of the density function $\rho_V(x)$ and satisfy fascinating universal behaviors. For example, given a general real analytic potential $V$ and $\alpha=0$, the \emph{bulk university} holds for any point $x^*$  in the bulk of the spectrum, which implies that the limiting correlation kernel is the sine kernel
\begin{equation}\label{Sine kernel limit}
 \lim_{n\rightarrow\infty}
\frac {1}{n\rho_V(x^*)}K_n(x^*+\frac {x}{n\rho_V(x^*)}, x^*+\frac {y}{n\rho_V(x^*)})=  K^{\sin}(x,y) =\frac{\sin\pi(x-y)}{\pi(x-y)},
\end{equation}
uniformly for $x$ and $y$ in compact subsets of $\mathbb{R}$ whenever $\rho_V(x^*)>0$. Moreover, the \emph{soft edge universality} holds at a right regular edge point $b$ of the support of $\rho_V(x)$. This means that, when $\rho_V(x) \sim \frac{c}{\pi} (b-x)^{\frac{1}{2}}$ for $c>0$ as $x \to b-$, the limiting correlation kernel is the Airy kernel
\begin{equation}\label{Airy kernel limit}
 \lim_{n\rightarrow\infty}
\frac {1} {(cn)^{2/3}}K_n(b+\frac {x} {(cn)^{2/3}}, b+\frac{y} {(cn)^{2/3}})=K^{\Ai}(x,y)=\frac{\mathrm{Ai}(x)\mathrm{Ai}'(y)-\mathrm{Ai}'(x) \mathrm{Ai}(y)}{x-y},
\end{equation}
uniformly for $x$ and $y$ in compact subsets of $\mathbb{R}$. The above results for the bulk and soft edge universality were proved in \cite{bi,dkmv1,Pas:Shc1997}.

Like other determinantal point processes, all information of unitary random matrix ensembles is contained in the correlation kernel $K$. If we consider the \emph{gap probability} that there is no eigenvalue near the point $x^*$ in the bulk of the spectrum, it is given in terms of a Fredholm determinant as follows
\begin{equation}\label{gap pro-sine}
\lim_{n\rightarrow\infty} \mbox{Prob}\left[\mbox{$M$ has no eigenvalues in }(x^*-\frac{s}{n\rho_V(x^*)}, x^*+\frac {s}{n\rho_V(x^*)} )
\right]=\det[I-K^{\sin}_{s}],
\end{equation}
where $K^{\sin}_s$ is the trace-class operator acting on $L^2(-s,s)$ with the sine kernel $K^{\sin}(x,y)$ in \eqref{Sine kernel limit} and
the Fredholm determinant $\det[I-K^{\sin}_{s}]$ is given by the following series
\begin{equation}\label{FD-expansiion}
\det[I-K^{\sin}_s]=\sum_{k=0}^{+\infty}\frac {(-1)^k}{k!}\int_{-s}^{s}...\int_{-s}^{s}\det(K^{\sin}(x_i,x_j))_{i,j=1}^{k}dx_1 \cdots dx_k.
\end{equation}
We may also consider the gap probability near the rightmost edge point, i.e., the distribution of the largest eigenvalue. Let $\lambda_n$ be the largest eigenvalue of the matrix $M$ and $b$ be the rightmost regular edge point. Then, the limiting distribution of $\lambda_n$ is given by the following Fredholm determinant
\begin{equation} \label{eq:gap:larges-eigen}
\lim_{n\rightarrow\infty} \mbox{Prob}[(c n)^{\frac{2}{3}}(\lambda_n-b)<s
]=\det[I-K^{\mathrm{Ai}}_s ],
\end{equation}
where $K^{\Ai}_s$ is the trace-class operator acting on $L^2(s,\infty)$ with the Airy kernel $K^{\Ai}(x,y)$ in \eqref{Airy kernel limit} and the Fredholm determinant $\det[I-K^{\mathrm{Ai}}_s ]$
has a similar series expansion as that in \eqref{gap pro-sine}.

In \cite{TW}, Tracy and Widom discovered that the Fredholm determinant $\det[I-K^{\mathrm{Ai}}_s ]$ has a more explicit form as follows
 \begin{equation}\label{Tracy-Widom formula}
\det[I-K^{\Ai}_s]=F_{\mathrm{TW}}(s) := \exp \left( -\int_s^{+\infty}(x-s)y^2(x;0)dx \right),
 \end{equation}
where $y(x;0)$ is the Hastings-McLeod solution to the homogeneous Painlev\'{e} II ($\mathrm{P}_{2}$) equation ($\alpha = 0$)
\begin{equation}\label{Painleve II}
y''(x;\alpha)=xy(x;\alpha)+2y^3(x;\alpha)-\alpha,
\end{equation}
satisfying the following asymptotic behaviors
\begin{align}
  y(x;0)&\sim \mathrm{Ai} (x), & \textrm{as } x\rightarrow+\infty,  \label{HM solution-infty} \\
  y(x;0)&\sim\sqrt{ \frac{-x}{2}} \left(1+\frac{1}{8x^3}+O(x^{-6}) \right), & \textrm{as } x\rightarrow-\infty. \label{HM solution--infty}
\end{align}
It is remarkable that the Tracy-Widom distribution $F_{\mathrm{TW}}(s)$ appears not only in random matrices, but also in random permutations \cite{b}, totally asymmetric simple exclusion process \cite{j} and many other areas.

\subsection*{Painlev\'{e} II universality}

For the unitary ensembles \eqref{GUE}, when the limiting density function $\rho_V(x)$ vanishes quadratically at an interior point $x^*$, the \emph{Painlev\'{e} II universality} emerges; see \cite{bi,bi1,ck,ckv}. For example, consider the unitary ensemble \eqref{GUE} with the following quartic potential
\begin{equation}\label{quartic-UE}
V(x)=\frac {x^4}{4}+\frac {g}{2}x^2.
\end{equation}
When $g_{cr}=-2$, the limiting density function is
\begin{equation*}
  \rho_V(x) = \frac{x^2}{2 \pi} \sqrt{4-x^2}, \qquad \textrm{for } x \in [-2,2].
\end{equation*}
And there is a one-cut to two-cut transition near the point $x=0$ when the parameter $g$ varies in the neighbourhood of $g_{cr}$. This type of phase transition is described by the Painlev\'{e} II kernels. More precisely, if $n\to \infty$ in the way such that $2^{-1/3}n^{1/3}(g+2)\to t$,
the double scaling limit of the correlation kernel near the origin is given by
\begin{equation}\label{P2 kernel-asymptotic}
\lim_{n\to\infty}\frac {2^{2/3}}{n^{1/3}} K_n(\frac {x}{2^{-2/3}n^{1/3}},\frac {y}{2^{-2/3}n^{1/3}})=K_{\alpha}^{P2}(x,y;t),
\end{equation}
uniformly for $x$ and $y$ in compact subsets of $\mathbb{R}$; see \cite{bi1,ckv}. The limiting kernel is constructed out of the $\psi$-functions associated with the Hastings-McLeod solution to the $\textrm{P}_2$ equation \eqref{Painleve II}. The precise description of the $\mathrm{P}_{2}$ kernel will be given later.

Similar to \eqref{gap pro-sine}, once the limiting kernel is obtained, the gap probability near the origin is given as follows
\begin{equation}\label{gap pro zero}
\lim_{n\rightarrow\infty} \mbox{Prob}\left[\mbox{$M$ has no eigenvalues in }(-\frac {s}{2^{-2/3}n^{1/3}},\frac {s}{2^{-2/3}n^{1/3}})
\right]=\det[I-K^{P2}_{\alpha,s}],
\end{equation}
where $K^{P2}_{\alpha,s}$ is the trace-class operator acting on $L^2(-s,s)$ with the $\textrm{P}_2$ kernel in \eqref{P2 kernel-asymptotic}.

\subsection*{Painlev\'{e} XXXIV universality}

Now we turn to effect of the algebraic singular term $|\det M|^{2\alpha}$ in \eqref{GUE} near the soft edge. Although this singular term does not change the eigenvalue distributions in the global regime, it modifies the local eigenvalue statistics. Indeed, if there is a potential $V(x)$ such that the origin is a right regular edge point, then instead of the Airy kernel in \eqref{Airy kernel limit}, the limiting eigenvalue correlation kernel becomes the Painlev\'{e} XXXIV ($\mathrm{P}_{34}$ for short) kernel; see Its, Kuijlaars and \"{O}stensson \cite{ik1}. Later, a more general $\mathrm{P}_{34}$ kernel with two parameters was obtained in the critical situation where a Fisher-Hartwig singularity of both root and jump types appears near the soft edge of a perturbed Gaussian unitary ensemble (GUE). More precisely, the joint probability density function for the eigenvalues in this model is given by
\begin{equation}\label{pG-root type}
p_n(\lambda_1,\cdots,\lambda_n)=\frac{1}{Z_{n,\alpha,\omega}}
\prod_{i=1}^n |\lambda_i-\mu|^{2\alpha}\chi(\lambda_i-\mu)e^{- 2n\lambda_i^2}\prod_{i<j}(\lambda_i-\lambda_j)^2 \qquad \textrm{for }  \alpha>-\frac {1}{2},
\end{equation}
where $\chi(\lambda) = \begin{cases} \omega, & \lambda >0 \\ 1, & \lambda < 0 \end{cases} $  and $\omega\in \mathbb{C}\setminus(-\infty,0)$; see \cite{wxz} and \cite{bci,xz}.

When $\alpha \in \mathbb{N}$ and $\omega \in [0,1]$, it is interesting to note that the above model
 can be interpreted as a thinned and conditioned GUE. More precisely, let us consider a thinned process for the GUE by removing each eigenvalue independently with probability $\omega \in [0,1]$; see \cite{Boh:Pato2004,Boh:Pato2006}. Then, the eigenvalue distribution under the conditions that $\mu$ is an eigenvalue with multiplicity $\alpha$ in GUE and all other thinned eigenvalues are smaller than $\mu$ is given by \eqref{pG-root type}. Recently, the thinning and conditioning models have appeared in many situations. For example, gap and conditional probabilities for the thinned unitary ensembles are derived in \cite{bci,Char:Claeys2017}; the asymptotic behavior of mesoscopic fluctuations in the thinned CUE is studied in \cite{Ber:Duits2016}; the transition between the Tracy-Widom distribution and the Weibull distribution as the probability $\omega \downarrow 0$ is considered in \cite{Bot:Buck2017};  see also  \cite{Bot:Dei:Its:Kra2015} for another interesting transition. A nice application in the study of the Riemann zeros can be found in \cite{Born:Forr2017}.

Now let us consider the limiting kernel and take $n\to \infty$ in a way such that
$$\lim_{n\to \infty}2n^{2/3}(\mu_n-1)=t.$$
The double scaling limit of the correlation kernel near the soft edge $\lambda = 1$ is given by
\begin{equation}\label{kernel-asymptotic}
\lim_{n\to\infty}\frac 1{2n^{2/3}}K_n(\mu_n+\frac {x}{2n^{2/3}},\mu_n+\frac {y}{2n^{2/3}})=K_{\alpha, \omega}^{P34}(x,y;t),
\end{equation}
uniformly for $x$ and $y$ in compact subsets of $\mathbb{R}\setminus \{0\}$, with
\begin{equation}\label{psi-kernel-int}
 K_{\alpha, \omega}^{P34}(x,y;t)=\frac{\psi_2(x;t)\psi_1(y;t)-\psi_1(x;t)\psi_2(y;t)}{2\pi i(x-y)},
\end{equation}
where $(\psi_1(x;t),\psi_2(x;t))^{T}$ satisfies the Lax pair associated with the $\mathrm{P}_{34}$ equation. (The detailed information about the functions $(\psi_1(x;t),\psi_2(x;t))^{T}$ will be provided later in Section \ref{sec-p34}.) For $\alpha=0$ and $\omega = 1$, as the density function $p_n(\lambda_1,\cdots,\lambda_n)$ in \eqref{pG-root type} is reduced to that of GUE, the $\mathrm{P}_{34}$ kernel becomes the shifted Airy kernel accordingly
\begin{equation}\label{shifted Airy kernel}
K_{0,1}^{P34}(x,y;t)=\frac {\Ai(x+t)\Ai'(y+t)-\Ai'(x+t)\Ai(y+t)}{x-y};
\end{equation}
see \cite[Eq. (1.11)]{ik1}. So, the $\mathrm{P}_{34}$ kernel furnishes as a generalization of the Airy kernel. Moreover, the distribution of the largest eigenvalue in \eqref{eq:gap:larges-eigen} is replaced by
\begin{equation}\label{dist largest ev}
\lim_{n\rightarrow\infty} \mbox{Prob} [2n^{2/3}(\lambda_n-1)<s
]=\det[I-K^{P34}_{\alpha, \omega, s}],
\end{equation}
where $K^{P34}_{\alpha, \omega,s}$ is the trace-class operator acting on $L^2(s,\infty)$ with the $\textrm{P}_{34}$ kernel in \eqref{psi-kernel-int}. When $\alpha \in \mathbb{N}$ and $\omega \in [0,1]$, the above formula describe the largest eigenvalue distribution of the conditional GUE \eqref{pG-root type}.

\bigskip

In the past a few years, various Painlev\'{e} kernels have been adopted to characterize new universality classes in different critical
random matrix models; for example, see \cite{acm,cik,ck,ckv,xdz,xz-2015}. However, there are very few results about the Tracy-Widom type formulas for these kernels or their large gap asymptotics. To the best of our knowledge, the only higher-order analogues of the Tracy-Widom formula for the Fredholm determinant associated with the Painlev\'{e} I hierarchy was obtained by Claeys, Its and Krasovsky in \cite{cik2}, where the density function $\rho_V(x)$ in \eqref{circle law} vanishes with the order $2k + \frac{1}{2}, k \in \mathbb{N}$ at an endpoint of its support. Besides the large gap asymptotics in \cite{cik2}, the only other asymptotic result is obtained by Bothner and Its \cite{bothner-i} in the study of the $\mathrm{P}_{2}$ kernel in \eqref{P2 kernel-asymptotic} with the parameter $\alpha=0$. However, an analogous expression of the Tracy-Widom formula for the $\mathrm{P}_{2}$ kernel is still to be discovered.

In the present paper, we study the Fredholm determinant of  the $\textrm{P}_2$ and $\textrm{P}_{34}$ kernels with general parameters. We aim to find analogous expressions of the Tracy-Widom formula for these determinants and evaluate their large gap asymptotics.  We also plan to study the $\textrm{P}_3$ kernel at the hard edge with pole singularities in the potential in a forthcoming publication \cite{dai:xu:zhang}.


\subsection{Expressions of the $\textrm{P}_{2}$ and $\textrm{P}_{34}$ kernels}\label{sec-p34}

Before the statement of our main results, let us fist give the specific representation of the $\textrm{P}_{34}$ kernel in \eqref{psi-kernel-int}.
The $\textrm{P}_{2}$ kernel will be provided through its relation with the $\textrm{P}_{34}$ kernel.

The functions $\psi_1(x;t)$ and $\psi_2(x;t)$ in the $\textrm{P}_{34}$ kernel \eqref{psi-kernel-int} appear as solutions of the Lax pair associated with the $\textrm{P}_{34}$ equation. In addition, it is convenient to characterize this kernel in terms of the following  Riemann-Hilbert (RH) problem; see Its, Kuijlaars  and \"{O}stensson \cite{ik1}.

\begin{figure}[h]
 \begin{center}
   \includegraphics[width=6 cm]{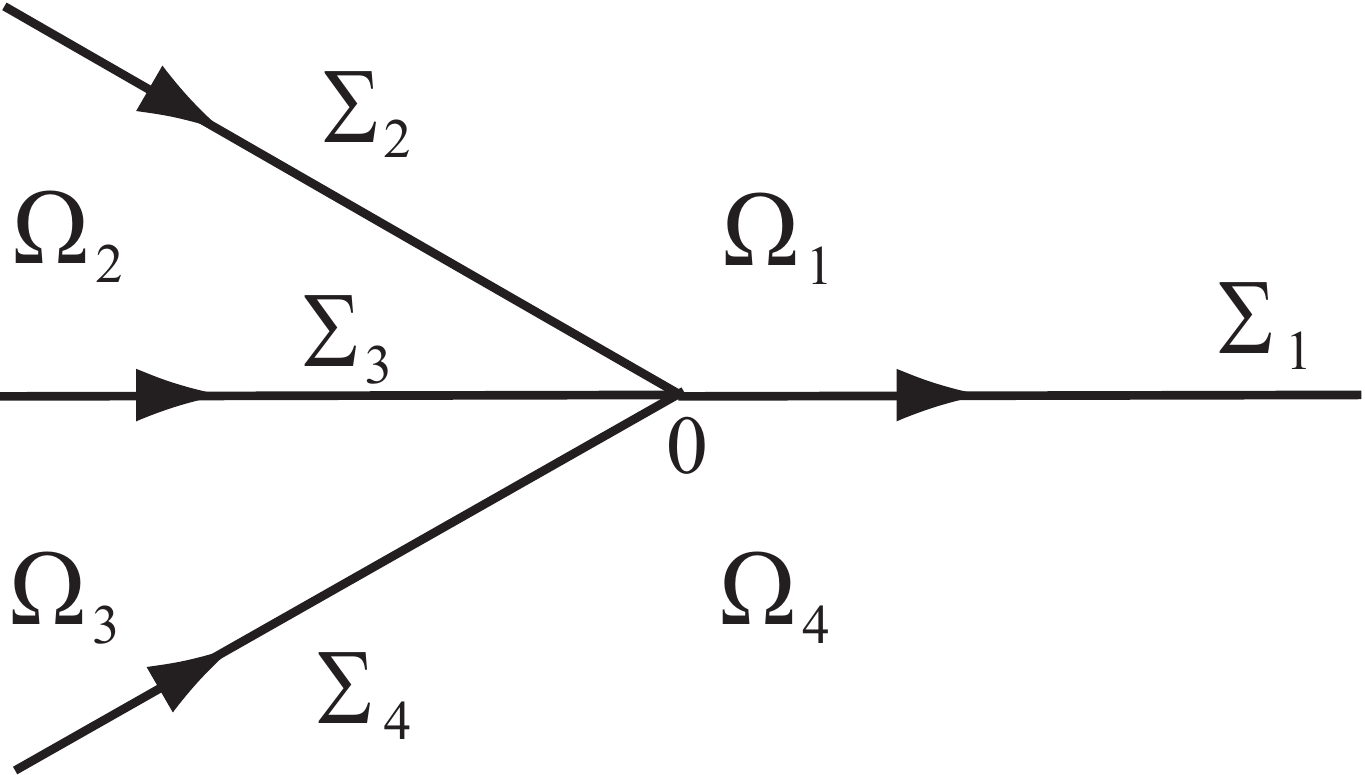} \end{center}
 \caption{\small{Contours  for the RH problem for $\Psi$: $\Sigma_1=\{\arg \zeta=0\}$, $\Sigma_2=\{\arg \zeta=\frac{2}{3}\pi\}$, $\Sigma_3=\{\arg \zeta=\pi\}$ and $\Sigma_4=\{\arg \zeta=-\frac{2}{3}\pi\}$.}}
 \label{fig-contour for psi}
\end{figure}

\medskip
\noindent\textbf{RH problem for $\Psi$:}
\begin{itemize}
  \item[(a)] $\Psi(\zeta):=\Psi(\zeta;t;\alpha,\omega)$ is analytic for $\zeta \in \mathbb{C}\backslash \cup_{i=1}^4 \Sigma_i$, where  the contours $\Sigma_i$ are depicted in Fig. \ref{fig-contour for psi};

  \item[(b)] $\Psi(\zeta)$  satisfies the following jump conditions
  \begin{eqnarray}
    \Psi_+ (\zeta)&=&\Psi_- (\zeta)
 \left(
                               \begin{array}{cc}
                                 1 & \omega \\
                                 0 & 1 \\
                                 \end{array}
                             \right),
 \qquad   \zeta \in \Sigma_1, \label{psi-alpha-jumps-1}  \\
 \Psi_+(\zeta)&=&\Psi_-(\zeta)\left(
                               \begin{array}{cc}
                                 1 & 0 \\
                                 e^{2\pi i\alpha} & 1 \\
                                 \end{array}
                             \right),  \quad  \zeta \in \Sigma_2, \label{psi-alpha-jumps-2} \\
  \Psi_+ (\zeta)&=&\Psi_- (\zeta)
 \left(
                               \begin{array}{cc}
                                 0 & 1 \\
                                 -1 & 0 \\
                                 \end{array}
                             \right),
 \qquad \zeta \in \Sigma_3 , \label{psi-alpha-jumps-3} \\
 \Psi_+(\zeta)&=&\Psi_-(\zeta)\left(
                               \begin{array}{cc}
                                 1 & 0 \\
                                 e^{-2\pi i\alpha} & 1 \\
                                 \end{array}
                             \right),  \quad  \zeta \in \Sigma_4; \label{psi-alpha-jumps-4}
  \end{eqnarray}

\item[(c)] As $\zeta \to \infty$, the asymptotic behavior of $\Psi(\zeta)$  is given by
  \begin{align}\label{psi-alpha infinity}
\Psi(\zeta)=
  \left(
    \begin{array}{cc}
      1 & 0 \\
     -im_2(t) & 1 \\
    \end{array}
  \right)
     \left [I+\frac 1{\zeta}
     \left(\begin{array}{cc} m_1(t) & im_2(t)\\ -im_3(t) & -m_1(t)\end{array} \right)+O(\zeta^{-2})\right]
   \zeta^{-\frac{1}{4}\sigma_3}\frac{I+i\sigma_1}{\sqrt{2}} e^{-\theta(\zeta,t)\sigma_3} ,
 \end{align}
  where $\theta(\zeta,t)=\frac{2}{3}\zeta^{\frac{3}{2}}+t\zeta^{\frac{1}{2}}$, $\arg \zeta \in(-\pi,\pi)$ and $m_i(t):= m_i(t;\alpha,\omega)$, $i=1,\cdots, 3$;

\item[(d)] The behavior of $\Psi(\zeta)$  at the origin is
\begin{equation}\label{psi-alpha at zero}
\Psi(\zeta)=\left(
              \begin{array}{cc}
                O(\zeta^{\alpha}) & O(\zeta^{\alpha}) \\
                O(\zeta^{\alpha}) & O(\zeta^{\alpha}) \\
              \end{array}
            \right),\quad \textrm{if } -\frac{1}{2}<\alpha<0
\end{equation}
and
\begin{equation}\label{psi-alpha at zero-1}
\Psi(\zeta)= \begin{cases}
  \left(
              \begin{array}{cc}
                O(\zeta^{\alpha}) & O(\zeta^{-\alpha}) \\
                O(\zeta^{\alpha}) & O(\zeta^{-\alpha}) \\
              \end{array}
            \right), &  \zeta\in \Omega_1\cup \Omega_4, \\
   \left(
              \begin{array}{cc}
                O(\zeta^{-\alpha}) & O(\zeta^{-\alpha}) \\
                O(\zeta^{-\alpha}) & O(\zeta^{-\alpha}) \\
              \end{array}
            \right), & \zeta\in \Omega_2\cup \Omega_3,
\end{cases}
\quad \textrm{if } \alpha\geq 0.
\end{equation}

\end{itemize}

Then, the $\textrm{P}_{34}$ kernel
\begin{equation}\label{psi-kernel}
K_{\alpha, \omega}^{P34}(x,y;t)=\frac{\psi_2(x;t)\psi_1(y;t)-\psi_1(x;t)\psi_2(y;t)}{2\pi i(x-y)},
\end{equation}
is given in terms of the functions
\begin{equation}\label{generalized airy kernel}
\left(
                                \begin{array}{c}
                                  \psi_1(x;t) \\
                                  \psi_2(x;t)\\
                                \end{array}
                              \right)
:= \begin{cases}
  \omega^{\frac12}\Psi_{+}(x; t)\left(\begin{array}{c}
                                                         1 \\
                                                         0
                                                       \end{array}\right), & \textrm{if } x>0, \\
  e^{-\alpha\pi i}\Psi_{+}(x;t)\left(\begin{array}{c}
                                                         1 \\
                                                         e^{2\alpha\pi i}
                                                       \end{array}\right), & \textrm{if } x<0.
\end{cases}
\end{equation}
Obviously, from the behavior of $\Psi(\zeta)$ at infinity in \eqref{psi-alpha infinity}, the function $(\psi_1(x;t),\psi_2(x;t))^{T}$ satisfies the following behaviours:
\begin{equation}\label{psi- kernel at positive  infinity}
\left(
                                \begin{array}{c}
                                  \psi_1(x;t) \\
                                  \psi_2(x;t)\\
                                \end{array}
                              \right)
=\frac{\omega^{\frac12}}{\sqrt{2}}e^{-(\frac{2}{3}x^{\frac{3}{2}}+t\sqrt{x})}\left(
   \begin{array}{c}
    x^{-\frac{1}{4}}+O(x^{-\frac 34})\\
     ix^{\frac{1}{4}}+O(x^{-\frac 14})\\
   \end{array}
 \right),
 \end{equation}
as $x\to +\infty$; and
\begin{equation}\label{psi- kernel at negative infinity}\left(
                                \begin{array}{c}
                                  \psi_1(x;t) \\
                                  \psi_2(x;t)\\
                                \end{array}
                              \right)
=\sqrt{2}\left(
   \begin{array}{c}
    |x|^{-\frac{1}{4}}\cos(\frac{2}{3}|x|^{\frac{3}{2}}-t\sqrt{|x|}-\alpha\pi -\pi/4)+O(|x|^{-\frac 34})\\
     -i |x|^{\frac{1}{4}}\sin(\frac{2}{3}|x|^{\frac{3}{2}}-t\sqrt{|x|}-\alpha\pi -\pi/4)+O(|x|^{-\frac 14})\\
   \end{array}
 \right),
\end{equation}
as $x\to -\infty$. When $\alpha=0,\omega=1$, the above RH problem for $\Psi$ is indeed the RH problem for the Airy functions. 
Therefore, $\textrm{P}_{34}$ kernel \eqref{psi-kernel} is reduced to the shifted Airy kernel \eqref{shifted Airy kernel}; see \cite[Eq. (1.11)]{ik1}.

Until now, we have not explained how the kernel in \eqref{psi-kernel} is related to the $\textrm{P}_{34}$ equation. The following proposition reveals their relations.¡¡

{\pro [\cite{ik1,ik2}]{\label{pro- p34} For $\alpha>-\frac 12, \omega\in \mathbb{C}\setminus (-\infty,0)$ and $t\in \mathbb{R}$, the model RH problem for $\Psi$ is uniquely solvable. In addition, let $m_2(t)$ be the function given in \eqref{psi-alpha infinity}
and
\begin{equation} \label{m2t-ut}
u(t;2\alpha,\omega) = m_2'(t) -\frac {t}{2},
\end{equation}
then $u(t;2\alpha,\omega)$ ($u(t)$ for short) satisfies the $\textrm{P}_{34}$ equation
\begin{equation}\label{int-painleve 34}
u''(t)=4u^2(t)+2tu(t) +\frac{u'(t)^2-(2\alpha)^2}{2u(t)}.
\end{equation}
Furthermore, the solution $u(t;2\alpha,\omega)$ is analytic on the real axis and uniquely determined by following asymptotic behaviors:
\begin{equation} \label{thm: painleve  positive infinity}
  \begin{split}
    u(t;2\alpha,\omega)& = \frac {\alpha}{\sqrt{t}}\left(\sum_{k=0}^n\frac {a_k}{t^{3k/2}}+O(t^{-3(n+1)/2}) \right)  \\
  & \quad +(e^{2\pi i\alpha}-\omega)\frac {\Gamma(2\alpha+1)}{2^{2+6\alpha}\pi }t^{-(3\alpha+\frac 12)}e^{-\frac 43 t^{3/2}}(1+O(t^{-1/4}))
  \end{split}
\end{equation}
as $t\to +\infty$,
with $a_0=1$ and $a_1=-\alpha$; and
\begin{equation}\label{thm: painleve negative infinity}
u(t;2\alpha,\omega)= \begin{cases}
  -\frac {t}{2}+\frac {16\alpha^2-1} {8}t^{-2}+O(t^{-7/2}), & \textrm{if } \omega = 0, \\
  \frac{1}{\sqrt{-t}}\biggl[i\beta+\frac 12 \frac{\Gamma(1+\alpha-\beta)}{\Gamma(\alpha+\beta)}
e^{i\vartheta(t;\alpha,\beta)} \\
 \qquad +\frac 12 \frac{\Gamma(1+\alpha+\beta)}{\Gamma(\alpha-\beta)}
e^{-i\vartheta(t;\alpha,\beta)}\biggr]+O(t^{3|\Re \beta|-2}), & \textrm{if } \omega=e^{-2\pi i\beta},
\end{cases}
\end{equation}
as $t\to-\infty$, with $ |\Re \beta|<1/2$ and $\vartheta(t;\alpha,\beta)=\frac 43 |t|^{3/2}-\alpha \pi-6i \beta \ln 2-3i\beta\ln|t|$.
}}

\begin{rem} \label{rmk:p34-tron}
  The $\textrm{P}_{34}$ transcendent $u(t;2\alpha,\omega)$ in the above Proposition satisfies the relation
\begin{equation} \label{p34-p2-relation}
  2^{1/3}u(-2^{-1/3}t;2\alpha, \omega)=y'(t;2\alpha+\frac {1}{2})+y^2(t;2\alpha+\frac {1}{2})+\frac {t}{2},
\end{equation}
where $y(z;2\alpha+\frac {1}{2})$ is one-parameter family of tronque\'e solutions of the $\textrm{P}_{2}$ equation \eqref{Painleve II} with the following asymptotics
\begin{equation}\label{P2-asymptotic}
y(t;\alpha)=\sqrt{\frac{-t}{2}}-\frac {\alpha}{2t} + O(t^{-5/2}) + c(\omega) (-t)^{-3\alpha-1} e^{-\frac{2 \sqrt{2}}{3} (-t)^{\frac{3}{2}}} (1+O(t^{-1/4})),
\end{equation}
$\textrm{as } t  \to \infty$ and $\arg(-t) \in (-\frac{\pi}{3}, \frac{\pi}{3}]$; see \cite[(11.5.56)]{fikn}. When $\omega = 0$,  this solution reduces to the Hastings-McLeod solution for the $\textrm{P}_{2}$ equation \eqref{Painleve II}, which is uniquely determined by the boundary conditions
\begin{equation} \label{HM solution}
  y(t;\alpha) \sim \sqrt{-t/2}, \quad t \to -\infty, \qquad  y(t;\alpha) \sim \alpha/t, \quad t \to +\infty;
\end{equation}
see \cite[Remark 11.12]{fikn}.
\end{rem}

Like the other Painlev\'e equations, the RH problem for $\Psi(\zeta)$ in \eqref{psi-alpha-jumps-1}-\eqref{psi-alpha at zero-1} implies a Lax pair for the $\textrm{P}_{34}$ equation. More precisely, we have
\begin{align}\label{Lax pair}
&\frac {\partial}{\partial\zeta}\Psi(\zeta)=\left(
      \begin{array}{cc}
        \frac {u_t}{2\zeta} & i-i\frac {u}{\zeta} \\
        -i \zeta-i(u+t)-i\frac {(u_t)^2-(2\alpha)^2}{4u \zeta} & -\frac {u_t}{2\zeta} \\
      \end{array}
    \right)\Psi(\zeta),\\\label{Lax pair-t}
&\frac {\partial}{\partial t}\Psi(\zeta)=\left(
      \begin{array}{cc}
       0 & i\\
        -i\zeta-2i(u+\frac t2) & 0 \\
      \end{array}
    \right)\Psi(\zeta),
\end{align}
where $u(t)$ satisfies the $\textrm{P}_{34}$ equation \eqref{int-painleve 34}; see \cite[Theorem 1.5]{ik1}.



One can define the $\textrm{P}_{2}$ kernel via the RH problem in a similar way. On the other hand, the $\textrm{P}_{2}$ kernel in \eqref{P2 kernel-asymptotic}
can be determined by the $\textrm{P}_{34}$ kernel with the parameter $\omega = 0$ as follows:
\begin{equation} \label{P2-kernel}
  \begin{split}
    K_{\alpha}^{P2}(x,\pm y;t)=&(xy)^{1/2}\biggl( K_{\frac{\alpha}{2}-\frac {1}{4},0}^{P34}(-2^{2/3}x^2,-2^{2/3}y^2;-2^{-1/3}t)  \\
  & \qquad \qquad \pm K_{\frac{\alpha}{2}+\frac {1}{4},0}^{P34}(-2^{2/3}x^2,-2^{2/3}y^2;-2^{-1/3}t) \biggr)
  \end{split}
\end{equation}
for $x > 0$. When $x<0$, the $\textrm{P}_{2}$ kernel is obtained from the above formula and the symmetry relation
\begin{equation}\label{P2-kernel-odd}
 K_{\alpha}^{P2}(x,y;t) = -K_{\alpha}^{P2}(-x,-y;t);
\end{equation}
see Claeys and Kuijlaars \cite{ck-2}.





\section{Statement of results}

We will derive Tracy-Widom type formulas for Fredholm determinants of the $\mathrm{P}_{2}$ and $\mathrm{P}_{34}$ kernels, as well as their large gap asymptotics.

\subsection{The coupled Painlev\'{e} II system and their asymptotics}

To express our Tracy-Widom type formulas, we need to introduce the following  coupled $\textrm{P}_2$ systems in dimension four
\renewcommand{\arraystretch}{1.3}
\begin{equation}\label{eq:Hamiltonian system}
\left\{\begin{array}{l}
         \frac {dw_1}{dx}=-\frac{\partial H}{\partial v_1} =2(v_1+v_2+\frac{x}{2})-w_1^2 \\
         \frac {dv_1}{dx}=\frac{\partial H}{\partial w_1} = 2v_1w_1\\
          \frac {dw_2}{dx}=-\frac{\partial H}{\partial v_2} =2(v_1+v_2+\frac{x-s}{2})-w_2^2 \\
         \frac {dv_2}{dx}=\frac{\partial H}{\partial w_2} = 2v_2w_2+2\alpha\\
       \end{array}
\right.
\end{equation}
where $v_i:=v_i (x;s, 2\alpha)$, $w_i:=w_i (x;s, 2\alpha+\frac{1}{2})$ and  the Hamiltonian $H:=H(v_1,v_2,w_1,w_2;x,s,2\alpha)$ is given by
\begin{equation}\label{def:Hamiltonian-CPII}
H(v_1,v_2,w_1,w_2;x,s,2\alpha)=-(v_1+v_2)^2-(v_1+v_2)x+v_1w_1^2+v_2w_2^2+sv_2+2\alpha w_2.
\end{equation}
With the transformations $v_i(x;s,2\alpha)=2^{-\frac {1}{3}}p_i(-2^{\frac {1}{3}}x;-2^{\frac {1}{3}}s,2\alpha)$ and  $w_i(x;s,2\alpha+\frac{1}{2})=-2^{\frac {1}{3}}q_i(-2^{\frac {1}{3}}x;-2^{\frac {1}{3}}s,2\alpha+\frac{1}{2})$, the above Hamiltonian is equivalent to the one studied in Sasano \cite{Sasano} via the following simple relation
\begin{equation}
  H^{Sasano}(p_1,p_2,q_1,q_2;x,s,2\alpha)=-2^{-\frac {1}{3}}H(v_1,v_2,w_1,w_2;-2^{-\frac {1}{3}}x,-2^{-\frac {1}{3}}s,2\alpha).
\end{equation}


The above coupled $\textrm{P}_2$ system \eqref{eq:Hamiltonian system} was first introduced and studied by Sasano \cite{Sasano}. It is regarded as a fourth-order extension of the classical $\textrm{P}_2$ equation. The studies of other coupled Painev\'{e} systems  in dimension four can be found in \cite{NY,S-1,S-2}.
In recent years, the program to classify the four-dimensional Painlev\'{e}-type equations has been carried out by  Kawakami, Nakamura and Sakai.
In \cite{Sakai}, from the  isomonodromic deformation theory of the Fuchsian equations, Sakai derived  four source systems for 4-dimensional Painlev\'e type equations, namely, the Garnier system in two variables, the Fuji-Suzuki system, the Sasano system and the matrix Painlev\'e system. Later, the complete degeneration scheme of these  four source systems was obtained in Kawakami, Nakamura and Sakai \cite{kns-1} and Kawakami \cite{k-1,k-2,k-3}.  The coupled $\textrm{P}_2$ system \eqref{eq:Hamiltonian system}
appears in both of the degeneration schemes of the
Garnier system  in two variables \cite[(3.5)-(3.7))]{k-3} and  the Sasano system \cite[(3.22)-(3.23)]{k-2}. Note that applications of
the coupled $\textrm{P}_2$ system in the study of  the Airy point process was discovered by Claeys and Doeraene \cite{Claeys:Doer2017} very recently.

Eliminating $w_i$ and $v_i$ from the Hamiltonian system \eqref{eq:Hamiltonian system}, respectively, gives us the following nonlinear equations for $v_i$:
\renewcommand{\arraystretch}{1.1}
\begin{equation}\label{int-equation v}
\left\{\begin{array}{l}
         v_{1xx}-\D\frac{v_{1x}^2}{2v_1} -4v_1(v_1+v_2+\frac{x}{2})=0\\
          v_{2xx}-\D\frac{v_{2x}^2-4\alpha^2}{2v_2} -4v_2(v_1+v_2+\frac{x-s}{2})=0\\
       \end{array}
\right.
\end{equation}
and equations for $w_i$:
\begin{equation}\label{int-equation-q }
\left\{\begin{array}{l}
         w_{1xx}-2w_1^3+2xw_1+4v_2(w_1-w_2)-(4\alpha+1)=0 \\
         w_{2xx}-2w_2^3+2(x-s)w_2-4v_1(w_1-w_2)-(4\alpha+1)=0
       \end{array}
\right.
\end{equation}
The coupled equations \eqref{int-equation v} are similar to \cite[(2.1)]{Hone2001}, which was obtained from similarity reduction of the Hirota-Satsuma system. Eliminating either one of the functions $v_1$ or $v_2$ from the above equation, one gets a fourth-order nonlinear differential equation.
Besides, if we set $v_1(x)=y^2(x)$, then the first equation becomes
\begin{equation}\label{GpII}
y''-2y^3-xy=2v_2y.
\end{equation}
When $\alpha=0$, taking the admissible solution $v_2=0$ in \eqref{int-equation v}, then \eqref{GpII} is
reduced to the standard  $\mathrm{P}_{2}$ equation \eqref{Painleve II}.


On the way to our Tracy-Widom type formulas, we need a class of distinguished solutions to the couple $\textrm{P}_2$ system. The solutions satisfy the following properties.


{\thm{\label{theorem-v-1}For $\alpha>-\frac 12$, $\omega\in \mathbb{C}\setminus (-\infty,0)$ and $s \in \mathbb{R}$, there exist real analytic solutions $v_i(x)$ to the coupled $\textrm{P}_{2}$ equations \eqref{int-equation v}.  
Moreover, the solutions $v_i(x,s;2\alpha,\omega)$ satisfy the following asymptotic behaviors:
\begin{equation}\label{int-v-1-asy-+infty}
v_1(x,s;2\alpha,\omega)=\frac {1}{4\pi \sqrt{x}}e^{-\frac {4}{3}x^{\frac {3}{2}}}\frac {c}{2^{4\alpha}}\left|\frac {s}{x}\right|^{2\alpha}
\left(1+\alpha\frac {s}{x}+O\left(x^{-3/2}\right)\right), \quad \textrm{as } x\to+\infty,
\end{equation}
with the constant $c= \begin{cases}
  \omega, & s>0, \\
  1, & s<0;
\end{cases}$ and
\begin{equation}\label{int-v-2-asy-+infty}
v_2(x,s;2\alpha,\omega)=\frac{\alpha}{\sqrt{x-s}}-\frac{\alpha^2}{(x-s)^2}+O(1/x^3), \quad \textrm{as } x\to+\infty. 
\end{equation}
In addition, the functions $w_i(x,s;2\alpha + \frac{1}{2},\omega)$ and the Hamiltonian $H(x;s):=H(v_1,v_2,w_1,w_2;x,s,\alpha)$ in \eqref{eq:Hamiltonian system} satisfy the following asymptotic behaviors:
\begin{equation}\label{eq:asy-w-1}
w_1(x,s;2\alpha+ \frac{1}{2},\omega)=-\sqrt{x}+O(\ln x), \quad \textrm{as } x\to+\infty,
\end{equation}
\begin{equation}\label{eq:asy-w-2}
w_2(x,s;2\alpha+ \frac{1}{2},\omega)=-\sqrt{x-s}-\frac{\alpha+\frac{1}{4}}{x-s}+O(1/x^2), \quad \textrm{as } x\to+\infty,
\end{equation}
\begin{equation}\label{eq:asy-H}
H(x;s)=-2\alpha\sqrt{x-s}-\frac{\alpha^2}{x-s}+O(1/x^2), \quad \textrm{as } x\to+\infty.
\end{equation}
}}

\begin{rem} \label{remark:p2HM}
For $\alpha=0$ and $\omega=1$, then $v_2=0$ and the equation \eqref{GpII} is
reduced to the $\textrm{P}_{2}$   equation \eqref{Painleve II}.
From the asymptotic behaviors in \eqref{int-v-1-asy-+infty}, one immediately sees that $v_1(x)=y^2(x;0)$,  where y(x;0) is the
classical Hastings-McLeod solution to the $\textrm{P}_2$ equation given in \eqref{HM solution-infty}-\eqref{HM solution--infty}.
\end{rem}

{\rem{When $s=0$, we have $v_1\equiv0$ and $v_2(x,0;2\alpha,\omega)=u(x;2\alpha,0)$, where $u(x;2\alpha,0)$ is the solution to the $\textrm{P}_{34}$ equation and satisfies the  asymptotic behaviors given in Proposition \ref{pro- p34}.}}



\subsection{Tracy-Widom type expressions for the Fredholm determinants}

Now we have the following integral representations for the Fredholm determinants of the $\textrm{P}_{2}$ and $\textrm{P}_{34}$ kernels.

\begin{thm}\label{thm:TW-P34}
For $\alpha>-\frac 12$, $\omega\in \mathbb{C}\setminus (-\infty,0)$ and $s,t \in \mathbb{R}$, let $K^{P34}_{\alpha, \omega,s}$ be the trace-class operator acting on $L^2(s,\infty)$ with the $\textrm{P}_{34}$ kernel $K_{\alpha,\omega}^{P34}(x,y;t)$ in \eqref{psi-kernel}, then we have
\begin{equation}\label{int-TW-p34-1}
\det[I-K^{P34}_{\alpha, \omega, s}] = \exp\left(-\int_{t}^{+\infty}(v_1(x+s)+v_2(x+s)-u(x))(x-t)dx\right),
\end{equation}
where $v_i(x)=v_i(x,s;2\alpha,\omega)$ are the smooth solutions to the coupled $\textrm{P}_{2}$ equations \eqref{int-equation v} with the properties specified in Theorem \ref{theorem-v-1} and $u(x)=u(x;2\alpha,\omega)$ is the $\textrm{P}_{34}$ transcendent given in Proposition \ref{pro- p34}.
\end{thm}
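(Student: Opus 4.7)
The strategy is to reduce the theorem to the differential identity
\begin{equation*}
\frac{\partial^{2}}{\partial t^{2}}\log \det[I-K^{P34}_{\alpha,\omega,s}]
= u(t)-v_{1}(t+s)-v_{2}(t+s),
\end{equation*}
supplemented by the boundary conditions $\log \det[I-K^{P34}_{\alpha,\omega,s}] \to 0$ and $\partial_{t}\log\det \to 0$ as $t \to +\infty$. Indeed, differentiating the right-hand side of \eqref{int-TW-p34-1} twice in $t$ produces exactly this expression, while the boundary conditions follow from the asymptotics \eqref{int-v-1-asy-+infty}, \eqref{int-v-2-asy-+infty} and Proposition \ref{pro- p34}: the leading $\alpha/\sqrt{x}$ and $-\alpha^{2}/x^{2}$ contributions in $v_{2}(x+s)$ and $u(x)$ cancel identically, $v_{1}(x+s)$ is exponentially small, and the resulting integrand decays like $O(x^{-5/2})$, making the double integral absolutely convergent and vanishing in the limit.

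To derive the differential identity, I would first view $K^{P34}_{\alpha,\omega,s}$ as an integrable kernel in the sense of Its-Izergin-Korepin-Slavnov, since the kernel in \eqref{psi-kernel} is of the form $\vec f(x)^{T}\vec g(y)/(x-y)$ with $\vec f,\vec g$ built from the rows of $\Psi_{+}$ via \eqref{generalized airy kernel}. The resolvent of $I-K^{P34}_{\alpha,\omega,s}$ is then encoded in an auxiliary $2\times 2$ Riemann--Hilbert problem $\Phi(\zeta;t,s)$ whose contour augments the $\Sigma_{i}$ of Figure \ref{fig-contour for psi} with an extra jump issued from $s$. I would seek $\Phi$ in the factored form $\Phi(\zeta;t,s)=R(\zeta;t,s)\,\Psi(\zeta;t)$, so that $R$ has trivial jumps on the $\Sigma_{i}$ and retains only the jump on the new contour. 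The large-$\zeta$ expansion of $\Phi$ then produces additional coefficients beyond those appearing in \eqref{psi-alpha infinity}, and the plan is to identify these coefficients with the Hamiltonian variables $v_{1},v_{2},w_{1},w_{2}$ of the coupled $\textrm{P}_{2}$ system \eqref{eq:Hamiltonian system} evaluated at $t+s$.

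Given this identification, the Jimbo-Miwa-Ueno tau function formalism applied to the $\Phi$-problem (using the $t$-equation analogous to \eqref{Lax pair-t}) yields
\begin{equation*}
\frac{\partial}{\partial t}\log \det[I-K^{P34}_{\alpha,\omega,s}]
= H(t+s;s) - H^{\mathrm{P}_{34}}(t),
\end{equation*}
where $H$ is the Hamiltonian \eqref{def:Hamiltonian-CPII} evaluated along the distinguished trajectory of Theorem \ref{theorem-v-1} and $H^{\mathrm{P}_{34}}$ is the corresponding Hamiltonian of the $\mathrm{P}_{34}$ system attached to $\Psi$. Differentiating once more in $t$ and using the canonical equations of \eqref{eq:Hamiltonian system}, which imply $\tfrac{d}{dx}H(x;s)=-v_{1}(x)-v_{2}(x)$ along the trajectory, together with the analogous identity $\tfrac{d}{dt}H^{\mathrm{P}_{34}}(t)=-u(t)$, produces the required second-derivative formula.

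The principal obstacle is the construction of the RH problem for $\Phi$ and the rigorous identification of its large-$\zeta$ expansion with the Hamiltonian variables of the coupled $\mathrm{P}_{2}$ system. Matching these two structures is where the effective dimension jump from the scalar $\mathrm{P}_{34}$ to the four-dimensional Sasano system \cite{Sasano,k-2} actually occurs, and verifying it requires showing that the residual factor $R$ carries exactly the Lax pair compatible with \eqref{eq:Hamiltonian system}. Once this identification is in place, the unique solvability and smoothness of the coupled $\mathrm{P}_{2}$ system asserted in Theorem \ref{theorem-v-1}, together with the asymptotic behaviors \eqref{int-v-1-asy-+infty}-\eqref{eq:asy-H}, control the boundary values and complete the integration of the differential identity.
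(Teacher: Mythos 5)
Your proposal follows essentially the same route as the paper: the determinant is represented through the IIKS/Deift--Its--Zhou RH problem for the resolvent (the paper's $Y$, which is exactly your residual factor $R$ since $\widetilde{\Phi}=Y\Psi$ up to local modifications), the large-$z$ coefficient $r_2$ of the resulting $\Phi$-problem is identified with the coupled $\mathrm{P}_2$ Hamiltonian via $H=\frac{x^2}{4}-r_2$ (Proposition \ref{prop-Hamilton}), the first $t$-derivative of $\ln\det$ is the difference of the two Hamiltonians as in \eqref{differential identity-2}, the second derivative is $u(t)-v_1(s+t)-v_2(s+t)$ as in \eqref{sencond Dt}, and the formula is obtained by integrating twice from $t=+\infty$. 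The Lax-pair/Sasano identification you flag as the "principal obstacle" is precisely the content of the paper's Sections 3--4, which the theorem's proof then invokes, so your outline and the paper's argument coincide in structure.

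The one step whose justification does not hold as written is the boundary condition. The asymptotics \eqref{int-v-1-asy-+infty}, \eqref{int-v-2-asy-+infty} and \eqref{thm: painleve  positive infinity} only show that the integrand $v_1(x+s)+v_2(x+s)-u(x)$ decays and hence that the right-hand side of \eqref{int-TW-p34-1} converges and tends to $0$ as $t\to+\infty$; they give no information about $\ln\det[I-K^{P34}_{\alpha,\omega,s}]$ itself. Since you integrate a second-order identity, an affine-in-$t$ discrepancy between the two sides is not excluded unless you prove separately that $\ln\det[I-K^{P34}_{\alpha,\omega,s}]\to 0$ (and its $t$-derivative as well, or equivalently integrate twice against the vanishing limit) as $t\to+\infty$ for fixed $s$. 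That requires a decay estimate for the kernel in $t$, uniform in the spatial variables on $(s,\infty)$ — in the paper this is \eqref{asy of Psi}, an estimate on $\psi_i(z;t)$ as $t\to+\infty$ uniformly for $z>c_0$, taken from the asymptotic analysis of the $\Psi$-problem in Its--Kuijlaars--\"Ostensson — and it does not follow from Proposition \ref{pro- p34} or Theorem \ref{theorem-v-1}, which concern the transcendents $u$, $v_i$ rather than the operator norm of $K^{P34}_{\alpha,\omega,s}$ as a function of $t$. Supplying this estimate closes the gap and the rest of your argument goes through as in the paper.
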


%
%

\begin{rem}
For $\alpha=0$ and $\omega=1$, we recover the celebrated Tracy-Widom distribution \eqref{Tracy-Widom formula} from Theorem \ref{theorem-v-1} and \ref{thm:TW-P34}. Indeed, from the properties of the functions $v_i(x)=v_i(x,s;2\alpha,\omega)$ described in Remark \ref{remark:p2HM}, the integral representation \eqref{int-TW-p34-1} becomes
\begin{equation}
\det[I-K^{\Ai}_{s+t}]=\exp\left(-\int_{s+t}^{+\infty}(\tau-t-s) y^2(\tau;0)d\tau \right),
\end{equation}
where $K^{\Ai}_{s+t}$ is the trace-class operator acting on $L^2(s+t,\infty)$ with the Airy kernel in \eqref{Airy kernel limit}
and $y(x;0)$ is the Hastings-McLeod solution to the $\textrm{P}_{2}$ equation given in \eqref{HM solution-infty}-\eqref{HM solution--infty}.
\end{rem}

\begin{rem}
  When $\alpha \in \mathbb{N}$ and $\omega \in [0,1]$, the Tracy-Widom type formula \eqref{int-TW-p34-1} is the largest eigenvalue distribution of the conditional GUE in \eqref{pG-root type}. When $\alpha = 0$, it agrees with
   the results obtained by Claeys and Doeraene in \cite[(2.8)]{Claeys:Doer2017}. Our result in \eqref{int-TW-p34-1} holds for general parameter $\alpha>-\frac 12$, which enables us to establish the Tracy-Widom type formula for the $\textrm{P}_2$ kernel below.
\end{rem}

Next, we establish a relation between the Fredholm determinants for the $\textrm{P}_{2}$ and $\textrm{P}_{34}$ kernels.

\begin{lem} \label{lem:kp2-kp34}
  Let $K^{P2}_{\alpha,s}$ and $K^{P34}_{\alpha,0,s'}$ be the  trace-class operator acting on $L^2(-s,s)$ and $L^2(s',+\infty)$ with the $P_2$ kernel $K_{\alpha}^{P2}(x,y;t)$ in \eqref{P2 kernel-asymptotic} and $\textrm{P}_{34}$ kernel  $ K^{P34}_{\frac {\alpha}{2}\pm\frac 14, 0}(x,y; -2^{-1/3}t)$ in  \eqref{psi-kernel}, respectively, we have
\begin{equation}\label{Fredholm det-P2-P34}
\det[I-K^{P2}_{\alpha,s}]=\det[I-K_{\frac {\alpha}{2}+\frac {1}{4},0,s'}^{P34}]\det[I-K_{\frac {\alpha}{2}-\frac {1}{4},0,s'}^{P34}],
\end{equation}
where $s'=-2^{\frac{2}{3}}s^2$, $s>0$ and $\alpha>-\frac 12$.
\end{lem}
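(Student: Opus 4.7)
The plan is to exploit the $\mathbb{Z}_2$ symmetry \eqref{P2-kernel-odd} together with the explicit relation \eqref{P2-kernel} between $K^{P2}_\alpha$ and the two $P_{34}$ kernels with parameters $\alpha/2 \pm 1/4$ at $\omega = 0$. The argument proceeds in three steps: (i) reduce the Fredholm determinant on $L^2(-s, s)$ to a determinant on $L^2(0, s)$ via the even/odd decomposition; (ii) transport the result to $L^2(s', 0)$ via the substitution $u = -2^{2/3}x^2$, identifying the pieces with the $P_{34}$ operators $\hat K^\pm := K^{P34}_{\alpha/2 \pm 1/4, 0, s'}$; (iii) show that the resulting determinant factorises as in \eqref{Fredholm det-P2-P34}.

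I first decompose $L^2(-s, s) = L^2_+ \oplus L^2_-$ into even and odd functions. Equation \eqref{P2-kernel-odd} gives $J K^{P2}_{\alpha, s} J = -K^{P2}_{\alpha, s}$ for the parity $J$, so $K^{P2}_{\alpha, s}$ is block anti-diagonal in this decomposition. Reading off the two off-diagonal blocks from \eqref{P2-kernel} via the combinations $K^{P2}(x, y) \pm K^{P2}(x, -y)$ for $x, y > 0$, and using the standard isometries between $L^2_\pm$ and $L^2(0, s)$ provided by even and odd extensions, the blocks become integral operators on $L^2(0, s)$ with kernels proportional to $\sqrt{xy}\,K^{P34}_{\alpha/2 \mp 1/4, 0}(-2^{2/3}x^2, -2^{2/3}y^2; -2^{-1/3}t)$.

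Next I substitute $u = -2^{2/3}x^2$ to transport the picture to $L^2(s', 0)$. The Jacobian $|dx/du|$ combines with the $\sqrt{xy}$ prefactor in \eqref{P2-kernel} so that, after the corresponding unitary conjugation, each off-diagonal block becomes a multiple of the $P_{34}$ operator $\hat K^\pm$ restricted to $L^2(s', 0)$. Because $K^{P34}_{\cdot, 0}$ vanishes on the positive real axis when $\omega = 0$ (the jump on $\Sigma_1$ becomes trivial so $(\psi_1,\psi_2)^T$ in \eqref{generalized airy kernel} vanishes for $x>0$), this restriction coincides with $\hat K^\pm$ acting on the full $L^2(s', \infty)$ that appears in the statement.

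The final step closes the factorisation. The block anti-diagonal structure forces the non-zero eigenvalues of $K^{P2}_{\alpha, s}$ to come in $\pm \lambda$ pairs, so $\det(I - K^{P2}_{\alpha, s})$ is a product over the positive singular values of the off-diagonal action. I expect the principal obstacle here: the Schur complement naturally produces the composition of the two blocks, whereas \eqref{Fredholm det-P2-P34} asserts a product of two separate Fredholm determinants. Bridging this gap should use the $\mathbb{Z}_2$ decoupling of the $P_2$ RH problem under $\zeta \mapsto -\zeta$: the two RH problems emerging from this decoupling produce $\hat K^-$ and $\hat K^+$ separately, and in an adapted basis the composition takes a block-diagonal form whose two diagonal entries are conjugate to $\hat K^-$ and $\hat K^+$, yielding the desired product factorisation \eqref{Fredholm det-P2-P34}.
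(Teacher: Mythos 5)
Your reduction is set up correctly up to the decisive point, but the decisive point is exactly where the argument stops being a proof. Taking the odd symmetry \eqref{P2-kernel-odd} at face value, the operator $K^{P2}_{\alpha,s}$ anticommutes with the parity operator, so (as you say) it is block \emph{anti}-diagonal in the even/odd decomposition, with the two off-diagonal blocks $\mathcal{A},\mathcal{B}$ identified (after the substitution $u=-2^{2/3}x^2$) with conjugates of $\hat K^{+}$ and $\hat K^{-}$. But then the pairing of eigenvalues in $\pm\lambda$ gives $\det[I-K^{P2}_{\alpha,s}]=\det[I-\mathcal{A}\mathcal{B}]$, and the lemma asserts $\det[I-\hat K^{+}]\det[I-\hat K^{-}]=\det[(I-\hat K^{+})(I-\hat K^{-})]$, which differs from $\det[I-\hat K^{+}\hat K^{-}]$ by the terms $\hat K^{+}+\hat K^{-}$; these do not cancel for general operators, and your proposed bridge --- that ``in an adapted basis the composition takes a block-diagonal form whose two diagonal entries are conjugate to $\hat K^{-}$ and $\hat K^{+}$'' --- is an unsupported assertion, not a computation: a composition acting on a single copy of $L^2$ has no a priori reason to be similar to the direct sum $\hat K^{-}\oplus\hat K^{+}$, and no property of the $\zeta\mapsto-\zeta$ symmetry of the RH problem is exhibited that would force the required identity. (If the kernel in fact \emph{commutes} with parity, the decomposition is block diagonal and the product formula is immediate; but that contradicts the symmetry \eqref{P2-kernel-odd} you start from, so either way your argument as written does not close.) A second, smaller issue: you assert that the Jacobian of $u=-2^{2/3}x^2$ ``combines with the $\sqrt{xy}$ prefactor'' in \eqref{P2-kernel} so that the blocks become exactly the $P_{34}$ operators; carrying out the bookkeeping with the constants as printed leaves a residual multiplicative factor, and since $\det(I-cK)\neq\det(I-K)$ this must be checked, not asserted. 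Your observation that for $\omega=0$ the kernel vanishes when either argument is positive (so the determinant over $(s',0)$ equals that over $(s',\infty)$) is correct and fine.

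For comparison, the paper does not argue at the level of the limiting kernels at all. It writes the finite-$n$ gap probability of the quartic ensemble as a Hankel determinant, uses the parity of the moments to split that determinant into an even and an odd block, recognizes the two factors as gap probabilities of the positive-definite ensembles \eqref{pGUE-positive} with weights $|x|^{\alpha\pm\frac12}e^{-nV(\sqrt{x})}$ (see \eqref{gap pro zero-Hankel-split}), and then passes to the double scaling limit on both sides, invoking \eqref{gap pro zero} for the $\mathrm{P}_2$ side and the hard-to-soft edge universality of Claeys--Kuijlaars for the $\mathrm{P}_{34}$ side \eqref{gap pro-pGUE-pos-p34}; uniqueness of the limit yields \eqref{Fredholm det-P2-P34}. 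In other words, the product structure is obtained \emph{before} the limit, from the elementary even/odd splitting of a Hankel determinant, which is precisely the step your kernel-level approach fails to reproduce. To rescue a direct operator proof you would need either to verify that (after an admissible conjugation) the $\mathrm{P}_2$ operator genuinely commutes with parity, or to establish a concrete algebraic identity (e.g.\ a Hankel-square type factorization of the $\mathrm{P}_{34}$ operators) implying $\det[I-\mathcal{A}\mathcal{B}]=\det[I-\hat K^{+}]\det[I-\hat K^{-}]$; neither is provided.
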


Then, Theorem \ref{thm:TW-P34} and the above lemma gives us the Tracy-Widom type formula for the Fredholm determinant of the $\textrm{P}_{2}$ kernel.

{\thm{\label{thm:TW-P2}For $\alpha>-\frac 12$, $s>0$ and $t \in \mathbb{R}$,  let  $K^{P2}_{\alpha,s}$ be the trace-class operator
acting on $L^2(-s,s)$ with the $\textrm{P}_{2}$ kernel $K_{\alpha}^{P2}(x,y;t)$ given in \eqref{P2-kernel}, then we have
\begin{equation}\label{int-TW-pII}
 \det[I-K^{P2}_{\alpha,s}]= \exp\left(-\int^{t}_{-\infty}(y^2(x;\alpha)-2^{-2/3} w_2^2(-2^{-1/3} x+s',s';\alpha) )(x-t)dx\right),
\end{equation}
where  $s'=-2^{2/3}s^2$,  $y(x;\alpha)$ is the Hastings-McLeod solution to the  $\textrm{P}_{2}$ equation described in \eqref{HM solution}, and
$w_2(x,s;\alpha)$ is the smooth solution to the equation \eqref{int-equation-q } with the asymptotic behaviors given in \eqref{eq:asy-w-2}.}}



\subsection {Large gap asymptotics  for the Fredholm determinants }

From \eqref{psi- kernel at positive  infinity}, it is easy to see that the $\textrm{P}_{34}$ kernel \eqref{psi-kernel} satisfies the following asymptotic behavior as $x,y\to +\infty$
$$K_{\alpha,\omega}^{P34}(x,y)=O(e^{-c(x^{\frac {3}{2}}+y^{\frac {3}{2}})}),$$
for certain constant $c>0$. From the series expansion of the Fredholm determinant (for example, see the expansion for the sine kernel in \eqref{FD-expansiion}), we obtain
$$\ln \det[I-K^{P34}_{\alpha,\omega,s}]=O(e^{-cs^{\frac {3}{2}}}), \quad  \mbox{as} \quad s\to+\infty,$$
where $K^{P34}_{\alpha,\omega,s}$ is  the trace-class operator with kernel $K_{\alpha,\omega}^{P34}$ acting on $L^2(s,\infty)$.
The large gap asymptotics for the Fredholm determinant of the $\textrm{P}_{34}$ kernel as $s\to -\infty$
are much more involved and given in the following theorem.

{\thm{\label{theorem-large gap asy-P34}For $\alpha>-\frac 12$, $\omega\in \mathbb{C}\setminus (-\infty,0)$ and  $t\in \mathbb{R}$,
let $K^{P34}_{\alpha,\omega,s}$ be the trace-class operator  acting on $L^2(s,\infty)$ with the kernel  $K_{\alpha,\omega}^{P34}(x,y;t)$ in \eqref{psi-kernel}, we have the asymptotic
 expansion for the Fredholm determinant as $s\to -\infty$
 \begin{align}
   \ln \det[I-K^{P34}_{\alpha,\omega,s}]=&-\frac 1{12}|s+t|^3 +\frac 23 \alpha|s|^{\frac 32}-2\alpha|s|^{\frac{1}{2}}t-(\alpha^2+\frac {1}{8}) \ln|s+t| \label{lag gap asy} \\
 &+\frac 43 \alpha \, \mathrm{sgn}(t) |t|^{\frac 32}  +\alpha^2\ln|t|+\int_t^{+\infty}(\tau-t)\left(u(\tau)-\frac {\alpha}{|\tau|^{\frac{1}{2}}}+\frac {\alpha^2}{\tau^2}\right )d\tau +c_0 +o(1), \nonumber
 \end{align}
where $u(x)=u(x;2\alpha,\omega)$ is the $\textrm{P}_{34}$ transcendent given in Proposition \ref{pro- p34}. The constant $c_0$ in the above formula is given explicitly as
\begin{equation}\label{def:constant-0}
c_0=\frac{1}{24} \ln 2+\zeta'(-1).
\end{equation}
where $\zeta'(z)$ is the derivative of the Riemann zeta-function. }}

\begin{rem}
  For $\alpha=0$, we recover the large gap asymptotics for the Fredholm determinant associated with the Airy kernel as $s+t\to -\infty$
\begin{equation}\label{large gap asy-airy}
\ln \det[I-K^{\Ai}_{s+t}]=-\frac 1{12}|s+t|^3-\frac 1{8}\ln|s+t|+ c_0 +o(1),
\end{equation}
where $K^{\Ai}_{s+t}$ is the trace-class operator with the Airy kernel \eqref{Airy kernel limit} acting on $L^2(s+t,+\infty)$ and $c_0$ is given in \eqref{def:constant-0}; see \cite{DIK2008,TW}.
\end{rem}

We also have the large gap asymptotics for the Fredholm determinant of the $\textrm{P}_{2}$ kernel.
{\thm{\label{theorem-large gap asy-P2}For $\alpha>-\frac 12$,
let $K^{P2}_{\alpha,s}$ be the trace-class operator  acting on $L^2(-s,s)$ with the kernel  $K_{\alpha}^{P2}(x,y;t)$ given in \eqref{P2-kernel},  we have the asymptotic  expansion for the Fredholm determinant as $s\to +\infty$
 \begin{equation}\label{lag gap asy-P2}
 \begin{split}
 \ln \det[I-K^{P2}_{\alpha,s}]&=-\frac {2}{3}(s^2+\frac {t}{2})^3 +\frac 43 \alpha s^{3}+2\alpha st-(\alpha^2+\frac {3}{4}) \ln s +\frac{2\sqrt{2}}{3}\alpha  \, \mathrm{sgn}(t) |t|^{\frac{3}{2}} \\
 &+(\frac{\alpha^2}{2}+\frac{1}{8})\ln|t| -\int^{t}_{-\infty}(\tau-t) \left( y^2(\tau;\alpha) +\frac{\tau}{2}-\frac{\alpha}{\sqrt{|2\tau|}}+\frac{\frac{\alpha^2}{2}+\frac{1}{8}}{\tau^2}\right)d\tau+c_1+o(1),
 \end{split}
 \end{equation}
where $y(x;\alpha)$ is the Hastings-McLeod solution to the $\textrm{P}_{2}$ equation described in \eqref{HM solution} and  the constant $c_1$ is given explicitly as
\begin{equation} \label{def:constant-1}
c_1=-(\alpha^2+\frac{5}{24}) \ln 2+2\zeta'(-1).
\end{equation}
}}

{\rem{For $\alpha=0$, the above large gap asymptotics can be reduced to that in 
Bothner and Its \cite{bothner-i}
\begin{equation}\label{lag gap asy-P2-0}
 \ln \det[I-K^{P2}_{0,s}]=-\frac {2}{3} s^6-s^4t-\frac {1}{2}(st)^2 -\frac {3}{4} \ln s+\int_{t}^{+\infty}(\tau-t)y^2(\tau;0)d\tau+ c_2 +o(1),
 \end{equation}
where $y(x;0) $ is the Hastings-McLeod solution to the $\textrm{P}_{2}$ equation and the constant term
$$ c_2=-\frac {1}{6} \ln 2+3\zeta'(-1).$$
In order to reduce \eqref{lag gap asy-P2} to \eqref{lag gap asy-P2-0}, one needs the following total integral of the Hastings-McLeod solution $y(x;0)$ to the $\textrm{P}_{2}$ equation
\begin{equation}\label{eq: total integral pii-0}
\int_t^{\infty}(\tau-t)y^2(\tau;0)d\tau+\int_{-\infty}^{t}(\tau-t)(y^2(
\tau;0)+\frac{\tau}{2}+\frac{1}{8\tau^2})d\tau=-\frac{t^3}{12}+\frac{1}{8}\ln|t|-c_0,
\end{equation}
where $c_0$ is defined in \eqref{def:constant-0}. Note that the above formula directly follows from \eqref{Tracy-Widom formula} and \eqref{large gap asy-airy}.
}}

The rest of the paper is organized as follows. In Section \ref{sec:RHP for Fredholm det}, we provide a representation of the Fredholm determinant in terms of a RH problem and derive a differential identity for the Fredholm determinant. Then, we transform this RH problem to a model one to facilitate our future study. In  Section \ref{Sec:Lax Pair}, a Lax pair is derived from the model RH problem obtained in the Section \ref{sec:RHP for Fredholm det}. The compatibility condition of the Lax pair is described by
 the coupled $\textrm{P}_2$ equations. The B\"{a}cklund transformations for the coupled $\textrm{P}_2$ system are also studied.
In Section \ref{sec:asy of v1-infty},  we  study the asymptotic behaviors of
  the functions in the Hamiltonian system \eqref{eq:Hamiltonian system} as $x \to + \infty$ and prove Theorem \ref{theorem-v-1}. Section \ref{sec: tw} is then devoted to the proof of Theorems \ref{thm:TW-P34}-\ref{thm:TW-P2}. In Section \ref{sec:large gap asy}, we evaluate the large gap asymptotics for the Fredholm determinants. Finally, for the sake of completeness and possible interests, the asymptotics of the functions $v_i(x)$ in Theorem \ref{theorem-v-1} as $x \to - \infty$ are derived in Appendix \ref{sec:asy of v1--infty}.

\section{Riemann-Hilbert problem for the Fredholm determinant and differential identity}\label{sec:RHP for Fredholm det}

\subsection{Riemann-Hilbert problem for the Fredholm determinant and differential identity}

 Let  $K^{P34}_{\alpha,\omega,s}$ be the trace-class operator   acting on $L^2(s,\infty)$  with the kernel  $K_{\alpha,\omega}^{P34}$ in \eqref{psi-kernel}, the Fredholm determinant $\det[I-K^{P34}_{\alpha,\omega,s}]$  can be characterized  in terms of the solution of certain RH problems; see Deift, Its and Zhou \cite{diz} and Its et al. \cite{Its:Ize:Koe:Sla1990}.

\medskip
\noindent\textbf{RH problem for $Y$:}
\begin{itemize}
  \item[(a)]   $Y(z)$ is analytic for $ z \in \mathbb{C}\backslash (s, \infty)$;

  \item[(b)]   $Y(z)$  satisfies the jump condition on $(s, \infty)$
  \begin{equation}\label{Y-jump}
  Y_+(x)=Y_-(x)J_{Y}(x),\quad J_{Y}(x)=\left(
                  \begin{array}{cc}
                    1+\psi_1(x)\psi_2(x) &- \psi_1(x)^2 \\
                    \psi_2(x)^2 & 1-\psi_1(x)\psi_2(x) \\
                  \end{array}
                \right),
  \end{equation}
  where $\psi_i(x)$ are defined in \eqref{generalized airy kernel}
  \item[(c)]   The asymptotic behavior of $Y(z)$  at infinity:
  \begin{equation}\label{Y at infinity}
  Y(z)=I+ \frac{Y_{-1}}{z}+O(1/z^2) \qquad \textrm{as } z \rightarrow \infty ;
  \end{equation}

\item[(d)] At possible endpoints $z=0$ and $z = s$, $Y (z)$ is square integrable.
\end{itemize}

{\lem[\cite{diz}]{Let  $K^{P34}_{\alpha,\omega,s}$ be the trace-class operator   acting on $L^2(s,\infty)$  with the kernel  $K_{\alpha,\omega}^{P34}$ in \eqref{psi-kernel} and assume $(I-K^{P34}_{\alpha,\omega,s})^{-1}$ exists, then the solution for the above RH problem is given by
 \begin{equation}\label{Y solution}
  Y(z)=I-\frac{1}{2\pi i}\int_s^{+\infty} \frac{M(x)}{x-z} dx, \quad M(x)=\left(
                                                                      \begin{array}{cc}
                                                                        -F_1(x)\psi_2(x) & F_1(x)\psi_1(x) \\
                                                                        -F_2(x)\psi_2(x) & F_2(x)\psi_1(x) \\
                                                                      \end{array}
                                                                    \right),\end{equation}
  where $\psi_i(x)$ are defined in \eqref{generalized airy kernel} and $F_i(x)=(I-K^{P34}_{\alpha,\omega,s})^{-1}\psi_i(x)$. Conversely, the functions $F_i(x)$ can be expressed in terms of $Y$ as follows
\begin{equation}\label{relation Y+ and F}
 \left(
               \begin{array}{c}
               F_1(x) \\
                F_2(x) \\
               \end{array}
             \right)= Y_+(x)\left(\begin{array}{c}
          \psi_1(x) \\
          \psi_2(x)
        \end{array}\right),\quad x>s, x\neq 0.
\end{equation}}}

From the above RH problem for $Y$, we derive a differential identity for the Fredholm determinant of the $\textrm{P}_{34}$ kernel as follows.
\begin{pro}
For $\alpha>-\frac 12$, $\omega\in \mathbb{C}\setminus (-\infty,0)$ and $s,t \in \mathbb{R}$, we have
\begin{equation}\label{ logrithmic derivaive t related to Y}
\frac{d}{dt}\ln \det[I-K^{P34}_{\alpha,\omega,s}]=i(Y_{-1})_{12},
\end{equation}
where $Y_{-1}$ is the coefficient of the $1/z$ term as $z \to \infty$; cf. \eqref{Y at infinity}.
\end{pro}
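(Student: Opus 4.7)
The plan is to combine Jacobi's formula for Fredholm determinants with the $t$-part of the Lax pair \eqref{Lax pair-t} and then read off the answer from the integral representation of $Y$ given in \eqref{Y solution}.

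First, I would apply the standard trace identity
\begin{equation*}
\frac{d}{dt}\ln\det[I-K^{P34}_{\alpha,\omega,s}]=-\mathrm{Tr}\!\left((I-K^{P34}_{\alpha,\omega,s})^{-1}\frac{\partial K^{P34}_{\alpha,\omega,s}}{\partial t}\right),
\end{equation*}
and reduce the problem to computing $\partial_t K^{P34}_{\alpha,\omega}(x,y;t)$ where the kernel has the integrable form \eqref{psi-kernel}. From the definition \eqref{generalized airy kernel}, the vector $(\psi_1,\psi_2)^T$ is obtained by acting on $\Psi_+(x;t)$ by constant (in $t$) vectors on each of the rays $x>0$ and $x<0$, so $\partial_t(\psi_1,\psi_2)^T$ is governed directly by the $t$-equation in \eqref{Lax pair-t}. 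This yields $\partial_t\psi_1=i\psi_2$ and $\partial_t\psi_2=-i(x+2u+t)\psi_1$ on both half-lines.

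Next I would compute $\partial_t[\psi_2(x)\psi_1(y)-\psi_1(x)\psi_2(y)]$ using these relations. The two terms containing $u+t/2$ combine into $i(y-x)\psi_1(x)\psi_1(y)$, while the cross terms $\psi_2(x)\psi_2(y)$ cancel. Dividing by $2\pi i(x-y)$ as in \eqref{psi-kernel} produces the clean identity
\begin{equation*}
\frac{\partial K^{P34}_{\alpha,\omega}(x,y;t)}{\partial t}=-\frac{1}{2\pi}\,\psi_1(x;t)\psi_1(y;t).
\end{equation*}
Because this is a rank-one perturbation, the resolvent trace collapses to
\begin{equation*}
\frac{d}{dt}\ln\det[I-K^{P34}_{\alpha,\omega,s}]=\frac{1}{2\pi}\int_s^{+\infty}F_1(x)\,\psi_1(x;t)\,dx,
\end{equation*}
where $F_1=(I-K^{P34}_{\alpha,\omega,s})^{-1}\psi_1$ as in \eqref{Y solution}.

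To finish, I would expand the integral representation \eqref{Y solution} at $z=\infty$, which gives
\begin{equation*}
Y_{-1}=\frac{1}{2\pi i}\int_s^{+\infty}M(x)\,dx,\qquad (Y_{-1})_{12}=\frac{1}{2\pi i}\int_s^{+\infty}F_1(x)\,\psi_1(x;t)\,dx,
\end{equation*}
so that multiplying by $i$ exactly matches the trace computation above, establishing \eqref{ logrithmic derivaive t related to Y}. The main subtlety, rather than the calculation itself, is justifying the formal manipulations: one must verify the invertibility of $I-K^{P34}_{\alpha,\omega,s}$ required to apply Jacobi's formula (and hence the existence of $Y$), check that $\partial_t\psi_i$ and the resulting kernel derivative produce a trace-class operator so that differentiation under the trace is legitimate, and ensure that the $t$-evolution is uniform enough near $x=0$ (where the $\alpha$-singularity lives) and near $x=s$ that the endpoint contributions vanish; the asymptotic behavior \eqref{psi- kernel at positive  infinity}--\eqref{psi- kernel at negative infinity} gives the decay at $+\infty$ needed for integrability.
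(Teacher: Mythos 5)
Your proposal is correct and follows essentially the same route as the paper: differentiate the kernel via the $t$-part of the Lax pair to get $\partial_t K^{P34}_{\alpha,\omega}=-\frac{1}{2\pi}\psi_1(x)\psi_1(y)$, apply the trace identity, and identify the resulting integral with $(Y_{-1})_{12}$ from the representation \eqref{Y solution}. You simply spell out the intermediate computations (and the technical caveats) more explicitly than the paper does.
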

\begin{proof}
From \eqref{psi-kernel} and \eqref{Lax pair-t}, we have
\begin{equation}\label{D t of kernel}
\frac{d}{dt}K_{\alpha,\omega}^{P34}(x,y;t)=-\frac {1}{2\pi}\psi_1(x)\psi_1(y).
\end{equation}
Using properties of trace-class operators, we get
\begin{equation}
\frac{d}{dt}\ln \det[I-K^{P34}_{\alpha,\omega,s}]=-\mathrm{tr}((I-K^{P34}_{\alpha,\omega,s})^{-1}\frac{d K^{P34}_{\alpha,\omega,s}}{dt})=\frac {1}{2\pi}\int_s^{\infty}((I-K^{P34}_{\alpha,\omega,s})^{-1}\psi_1)(x)\psi_1(x)dx.
\end{equation}
Then, \eqref{ logrithmic derivaive t related to Y} follows from \eqref{Y solution} and \eqref{relation Y+ and F}.
\end{proof}

\subsection{Model Riemann-Hilbert problem for $\Phi$} \label{sec:rhp-phi}

Next, we transform the original RH problem for $Y$ into a new one with constant jumps. Observe that the jump matrix $J_{Y}(x)$ in \eqref{Y-jump} can be factorized as follows:
\begin{equation}\label{Y-jumps related to psi-alpha-1 }
J_{Y}(x) =\Psi_{+}(x)\left(
                                       \begin{array}{cc}
                                         1 & -\omega \\
                                        0 & 1 \\
                                       \end{array}
                                     \right)
\Psi_{+}(x)^{-1} \qquad \textrm{for } x>0
\end{equation}
and
\begin{equation}\label{Y-jumps related to psi-alpha-2}
J_{Y}(x) =\Psi_{+}(x)\left(\begin{array}{cc}
                                                             1 & 0 \\
                                                             e^{2\pi i \alpha} & 1
                                                           \end{array}\right)
\left(
  \begin{array}{cc}
    1 & -e^{-2\pi i \alpha} \\
    0 & 1\\
  \end{array}
\right)
\left(\begin{array}{cc}
                                                             1 & 0 \\
                                                             -e^{2\pi i \alpha} & 1
                                                           \end{array}\right)\Psi_{+}(x)^{-1} \ \ \textrm{for } x<0.
\end{equation}
This evokes us to introduce the following transformation
\begin{equation}\label{U-1}
    \widetilde{\Phi}(z)= Y(z)\Psi(z), \qquad \textrm{if } s\geq 0
\end{equation}
and
\begin{equation}\label{U}
    \widetilde{\Phi}(z)= \begin{cases}
     Y(z)\Psi(z), & z\in \Omega_1^R\cup \Omega_2\cup\Omega_3\cup\Omega_4^R \\
     Y(z) \Psi(z)\left(\begin{array}{cc}
                                                             1 & 0 \\
                                                             e^{2\pi i \alpha} & 1 \end{array}\right), & z\in \Omega_1^L \\
     Y(z) \Psi(z)\left(\begin{array}{cc}
                                                             1 & 0 \\
                                                             -e^{-2\pi i \alpha} & 1
                                                           \end{array}\right), &   z\in \Omega_4^L,
    \end{cases}, \quad \textrm{if } s<0,
\end{equation}
where the regions are indicated in Fig. \ref{figure-U}.
\begin{figure}[ht]
 \begin{center}
   \includegraphics[width=6cm]{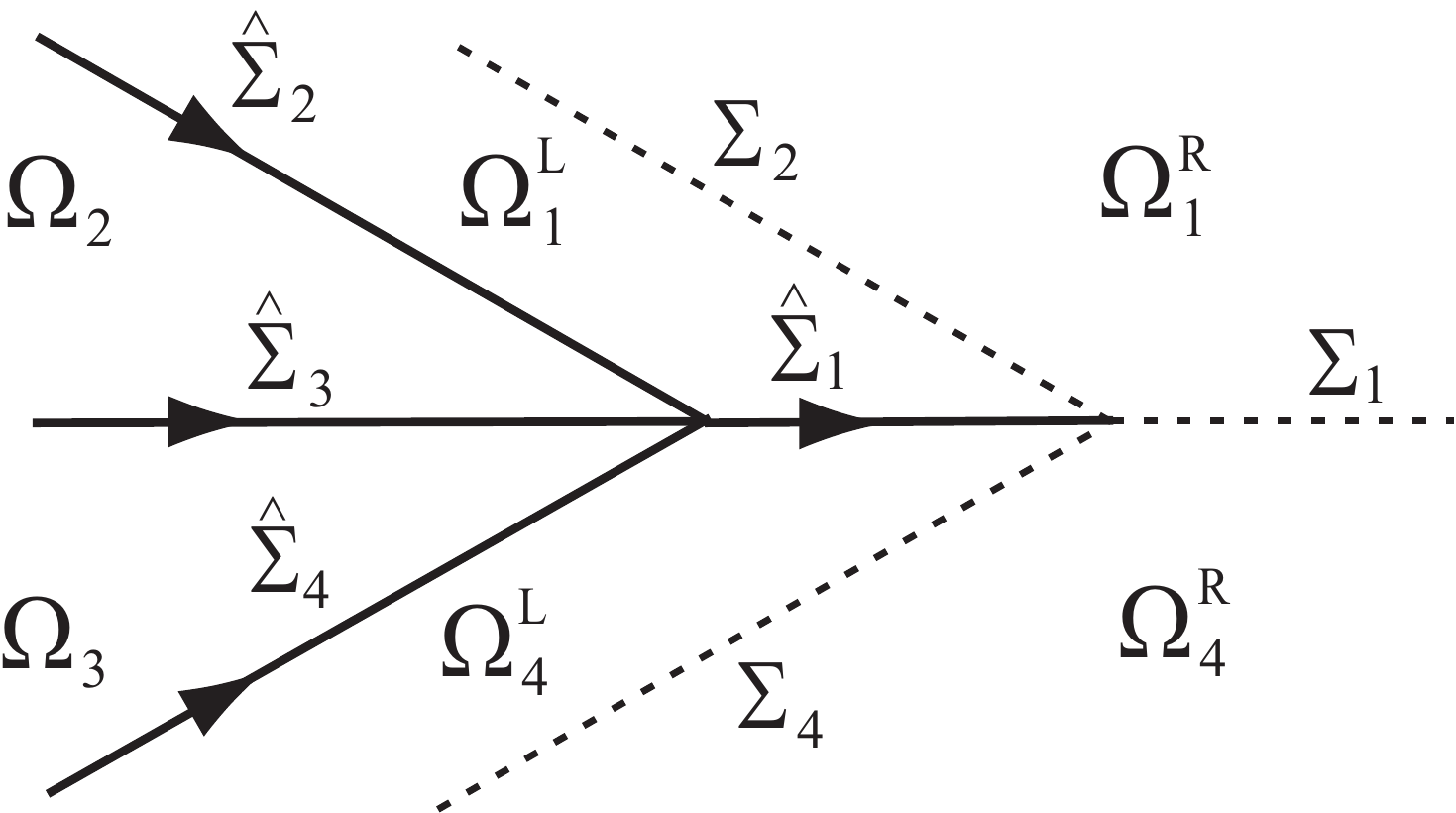} \end{center}
 \caption{\small{Regions and contours  for  $\widetilde{\Phi}$ ($\hat{\Sigma}_1=[0,s]$ for $s>0$ and $\hat{\Sigma}_1=[s,0]$ for $s<0$).}}
 \label{figure-U}
 \end{figure}
Then, $\widetilde{\Phi}$ satisfies a RH problem as follows.
\medskip
\noindent\textbf{RH problem for $\widetilde{\Phi}$:}
\begin{itemize}
  \item[(a)]   $\widetilde{\Phi}(z):=\widetilde{\Phi}(z;x,s;\alpha,\omega)$ is analytic for  $ z \in \mathbb{C}\backslash\hat{ \Sigma}_{i}$; see Fig. \ref{figure-U};

  \item[(b)]   $\widetilde{\Phi}(z)$  satisfies the jump condition
  \begin{equation}\label{U -jumps}
  \widetilde{\Phi}_+(z)=\widetilde{\Phi}_-(z) J_{i}(z), \qquad \textrm{for } z\in \hat{\Sigma}_{i}
  \end{equation}
with
$$J_{1}(z)=\left(\begin{array}{cc}
                                                              e^{2\pi i \alpha} & 0 \\
                                                            0 &  e^{-2\pi i \alpha}
                                                           \end{array}\right),\quad \mbox{if} \quad  s<0,
                                                           \quad J_{1}(z)=\left(\begin{array}{cc}
                                                              1 & \omega \\
                                                            0 &  1
                                                           \end{array}\right), \quad \mbox{if} \quad s>0, $$
$$ J_{2}(z)=\left(\begin{array}{cc}
                                                             1 & 0 \\
                                                             e^{2\pi i \alpha} & 1
                                                           \end{array}\right),
                                                            \quad J_{3}(z)=\left(\begin{array}{cc}
                                                             0 & 1 \\
                                                            -1 & 0
                                                           \end{array}\right),
                                                           \quad  J_{4}(z)=\left(\begin{array}{cc}
                                                             1 & 0 \\
                                                             e^{-2\pi i \alpha} & 1
                                                           \end{array}\right);$$

  \item[(c)]   The asymptotic behavior of $\widetilde{\Phi}(z)$  at infinity:
  \begin{equation}\label{U at infinity}
  \widetilde{\Phi}(z)= \left(
    \begin{array}{cc}
      1 & 0 \\
     -im_2(t) & 1 \\
    \end{array}
  \right)
    \left (I+O\left (\frac 1 {z}\right )\right)
  z^{-\frac{1}{4}\sigma_3}\frac{I+i\sigma_1}{\sqrt{2}}
  e^{-\theta(z,t)\sigma_3}, \qquad \textrm{as } z \rightarrow \infty,
  \end{equation}
  where   $\theta(z,t)=\frac{2}{3}z^{\frac{3}{2}}+tz^{\frac{1}{2}}, \ \arg z\in(-\pi, \pi)$;

\item[(d)] The asymptotic behavior of $\widetilde{\Phi}(z)$  at the node point $s$:
\begin{equation}\label{U at s}
\widetilde{\Phi}(z)=\widetilde{\Phi}^{(s)}(z) \begin{cases}
 \left(
                                           \begin{array}{cc}
                                             1 &\frac{\omega\ln(z-s)}{2\pi i} \\
                                             0 & 1 \\
                                           \end{array}
                                         \right) ,  \quad \textrm{if } s>0 \\
  \left(
                                           \begin{array}{cc}
                                             1 &\frac {\ln(z-s)}{2\pi i} \\
                                             0 & 1 \\
                                           \end{array}
                                         \right)C, \quad \textrm{if } s<0
\end{cases}
 \qquad \textrm{as } z \rightarrow s,
\end{equation}
where $\widetilde{\Phi}^{(s)}(z)=\widetilde{\Phi}^{(s)}_0(I+\widetilde{\Phi}^{(s)}_1(z-s)+ \cdots)$ is analytic at $z=s$ and the constant matrix $C$ is
\begin{equation} \label{matrix-c-formula}
  C=\left\{\begin{array}{cc}
                 (I-\sigma_-) e^{\pi i\alpha\sigma_3} & z\in \Omega_2 \\
                  (I+\sigma_-)  e^{-\pi i\alpha\sigma_3}& z\in \Omega_3 \\
                 e^{\pi i\alpha\sigma_3} & z\in \Omega_1^L\\
                e^{-\pi i\alpha\sigma_3} & z\in \Omega_4^L
               \end{array}
\right.;
\end{equation}

\item[(e)] The asymptotic behavior of $\widetilde{\Phi}(z)$  at $z=0$:
\begin{equation}\label{U at zero}
\widetilde{\Phi}(z)=O(1)\Psi(z), \ s>0, \quad \widetilde{\Phi}(z)=\widetilde{\Phi}^{(0)}(z)z^{\alpha\sigma_3}, \  s<0, \qquad \textrm{as } z \rightarrow 0,
\end{equation}
where $\widetilde{\Phi}^{(0)}(z)$ is analytic at $z=0$.
\end{itemize}

To facilitate our future derivation of the associated Lax pair, let us introduce one more transformation:
\begin{equation}\label{U to Phi}
\Phi(z)=\left(
         \begin{array}{cc}
           1 & 0 \\
          \eta & 1 \\
         \end{array}
       \right)
\widetilde{\Phi}(s+z).
\end{equation}
Then, $\Phi(z)$ solves the following model RH problem.

\medskip
\noindent\textbf{RH problem for $\Phi$:}
\begin{itemize}
  \item[(a)]   $\Phi(z):=\Phi(z;x,s;\alpha,\omega)$ is analytic for $z \in \mathbb{C}\backslash \hat{\Sigma}_{i}$; see Fig. \ref{figure-Phi-model};

  \begin{figure}[h]
 \begin{center}
   \includegraphics[width=5cm]{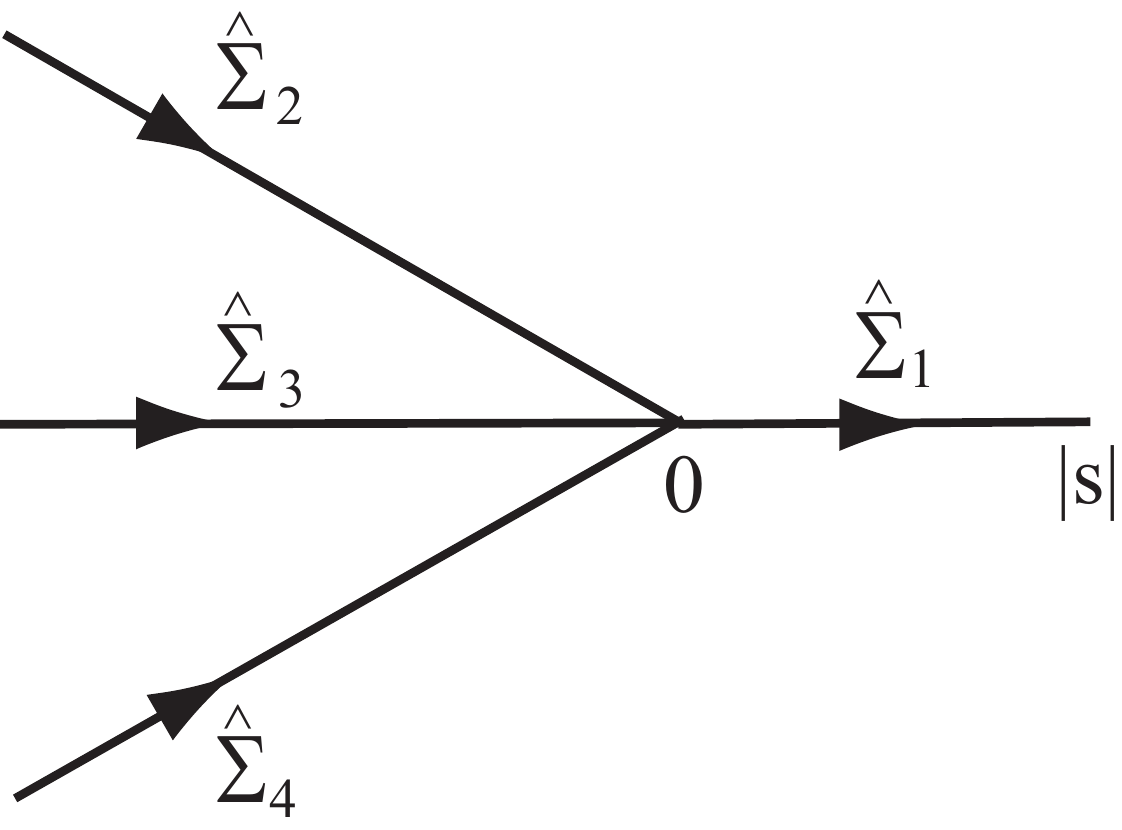} \qquad \includegraphics[width=5cm]{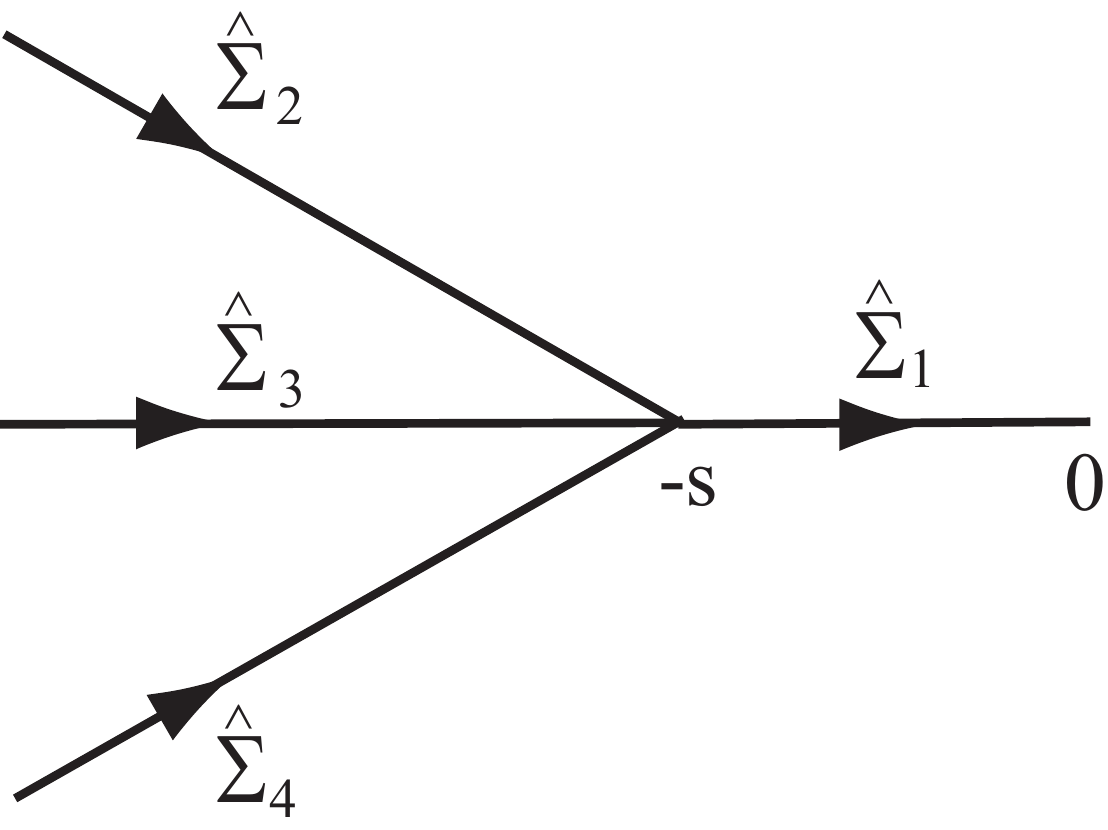} \end{center}
 \caption{\small{The contour  for  $\Phi$ (left: $s<0$; right: $s>0$).}} \label{figure-Phi-model}
  \end{figure}

  \item[(b)]   $\Phi(z)$  satisfies the same jump conditions as $\widetilde{\Phi}(s+z)$ on $\hat{\Sigma}_{i}$;

  \item[(c)] The asymptotic behavior of $\Phi(z)$  at infinity:
\begin{equation}\label{Phi at infinity}
\Phi(z)=\left(
         \begin{array}{cc}
           1 & 0 \\
           -ir_2 & 1 \\
         \end{array}
       \right)
    \left (I+\frac{\Phi_{-1}}{z} + \frac{\Phi_{-2}}{z^2}  + \frac{\Phi_{-3}}{z^3} +O\left (\frac 1 {z^4}\right )\right)
  z^{-\frac{1}{4}\sigma_3}\frac{I+i\sigma_1}{\sqrt{2}}
  e^{-\theta(z,x)\sigma_3} ,
  \end{equation}
  where  $\theta(z,x)=\frac{2}{3}z^{\frac{3}{2}}+x z^{\frac{1}{2}}, \ \arg z\in(-\pi, \pi)$ and
  \begin{equation}\label{Phi at infinity Phi-1}
  \Phi_{-1}=
    \left(
      \begin{array}{cc}
        r_1(x) & ir_2(x) \\
       -ir_3(x) & -r_1(x) \\
      \end{array}
    \right), \qquad  \Phi_{-2}=
    \left(
      \begin{array}{cc}
        k_1(x) & ik_2(x) \\
       -ik_3(x) & -k_1(x) \\
      \end{array}
    \right),
  \end{equation}
 $x=t+s$ and the pre-factor in \eqref{U to Phi} is chosen to be $\eta=im_2(t)-ir_2+\frac {s(2x-s)}{4}i $  to simplify the differential equations for $\Phi$;

 \item[(d)] The asymptotic behavior of $\Phi(z)$  at $z=0$:
\begin{equation}\label{Phi at 0}
\Phi(z)=\Phi^{(0)}(z) \begin{cases}
 \left(
                                           \begin{array}{cc}
                                             1 &\frac{\omega\ln z}{2\pi i} \\
                                             0 & 1 \\
                                           \end{array}
                                         \right) ,  \quad \textrm{if } s>0 \\
  \left(
                                           \begin{array}{cc}
                                             1 &\frac {\ln z}{2\pi i} \\
                                             0 & 1 \\
                                           \end{array}
                                         \right)C, \quad \textrm{if } s<0
\end{cases}
 \qquad \textrm{as } z \rightarrow 0,
\end{equation}
where $\Phi^{(0)}=\Phi^{(0)}_0(I+\Phi^{(0)}_1z+\cdots)$ is analytic at $z=0$ and the constant matrix $C$ is defined in \eqref{matrix-c-formula};

\item[(e)] The asymptotic behavior of $\Phi(z)$  as $z \rightarrow -s$
\begin{equation} \label{Phi at -s}
  \begin{split}
  & \hspace{-0.2cm} \Phi(z)=O(1)\Psi(z+s) \qquad \textrm{for } s>0,  \\
  & \hspace{-0.2cm}  \Phi(z)=\Phi^{(-s)}_0(s) \biggl(I+\Phi^{(-s)}_1 (s ) \,(z+s)+O(z+s)^2 \biggr)(z+s)^{\alpha\sigma_3} \quad \textrm{for } s<0,
  \end{split}
\end{equation}
where the behavior of $\Psi$ near origin is given in \eqref{psi-alpha at zero} and \eqref{psi-alpha at zero-1}.
\end{itemize}

For $s=0$, the jump conditions for $\Phi(z;t,0;\alpha,\omega)$ in \eqref{U to Phi} is the same as that of $\Psi(z;t;\alpha,0)$. From \eqref{psi-alpha at zero} and \eqref{Y solution}, one can see that $\Phi(z;t,0;\alpha,\omega) \Psi(z;t;\alpha,0)^{-1}=Y(z)\Psi(z;t;\alpha,\omega)\Psi(z;t;\alpha,0)^{-1}$ is analytic in the complex plane with a removable singularity at the origin and tends to the identity matrix as $z \to \infty$. Thus, we have the following remark.

\begin{rem}
  For $s=0$, we have
$$
\Phi(z;t,0;\alpha,\omega)=\Psi(z;t;\alpha,0),
$$
where $\Psi(z;t; \alpha,\omega)$ is the solution to the RH problem associated with the $\textrm{P}_{34}$ equation given in Section \ref{sec-p34}.
\end{rem}


\section{Coupled Painlev\'e II equations} \label{Sec:Lax Pair}

In this section, we derive a Lax pair from the model RH problem for $\Phi$, which is a Garnier system in two variables. The compatibility condition of the Lax pair gives us the coupled $\textrm{P}_2$ equations. We also show an important relation between the Hamiltonian $H$ in \eqref{eq:Hamiltonian system} and the RH problem for $\Phi$. The solvability of the model RH problem for $\Phi$ and the existence of real analytic solution to the $\textrm{P}_2$ equations are  justified. The B\"{a}cklund transformations for the coupled $\textrm{P}_2$ system are also studied.

\subsection{Lax pair}

We derive a  Lax pair  from the RH problem for $\Phi$. It furnishes as a
generalization of the Lax pair for the $\textrm{P}_{34}$ equation given in \eqref{Lax pair}.
\begin{pro}\label{Lax pair for Phi}
We have the  following Garnier system in two variables
\begin{equation}\label{Pro:Lax pair-Phi-z}
 \Phi_z(z;x,s)=\left(
       \begin{array}{cc}
       \frac {v_{1x}}{2z}+\frac {v_{2x}}{2(z+s)} & i-\frac {iv_1}{z}-\frac {iv_2}{z+s} \\
        -i(z+x+v_1+v_2+ \frac {v_{1x}^2}{4v_1z} +\frac {v_{2x}^2-4\alpha^2}{4v_2(z+s)})& -\frac {v_{1x}}{2z}-\frac {v_{2x}}{2(z+s)} \\
       \end{array}
     \right)\Phi(z;x,s),
 \end{equation}
 \begin{equation}\label{Pro:Lax pair-Phi-x}
\Phi_{x}(z;x,s)=\left(
      \begin{array}{cc}
        0 & i \\
        -iz-2i(v_1+v_2+\frac {x}{2}) & 0 \\
      \end{array}
    \right)
\Phi(z;x,s),
\end{equation}
and
\begin{equation}\label{Pro:Lax pair-Phi-s}
\Phi_{s}(z;x,s)=\left(
      \begin{array}{cc}
        \frac {v_{2x}}{2(z+s)} &-\frac {iv_2}{z+s}  \\
       iv_2-i \frac {v_{2x}^2-4\alpha^2}{4v_2(z+s)}& -\frac {v_{2x}}{2(z+s)} \\
      \end{array}
    \right)
\Phi(z;x,s).
\end{equation}
Moreover, the compatibility conditions $\Phi_{zx}=\Phi_{x z}$ and $\Phi_{xs}=\Phi_{s x}$ gives us the couple $\textrm{P}_2$ equations \eqref{int-equation v} and
\begin{equation} \label{compatibility2}
  v_{2x} = -( v_{1s} + v_{2s}),
\end{equation}
respectively.
\end{pro}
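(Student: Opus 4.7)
The plan is a textbook isomonodromy derivation. First I set
\[
A(z) = \Phi_z \Phi^{-1}, \quad B(z) = \Phi_x \Phi^{-1}, \quad C(z) = \Phi_s \Phi^{-1}.
\]
All jump matrices in the RH problem for $\Phi$ in Section \ref{sec:rhp-phi} are independent of $z$, $x$, $s$, so $A$, $B$, $C$ have no jumps across $\hat{\Sigma}_i$ and extend to meromorphic functions on $\mathbb{C}$ whose only possible poles lie in $\{0,-s,\infty\}$.

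Next I determine the precise shape of each Lax matrix from the local behavior of $\Phi$. At $\infty$, I substitute \eqref{Phi at infinity} and use $\theta_x = z^{1/2}$, $\theta_z = z^{1/2}+\frac{x}{2z^{1/2}}$, $\theta_s = 0$; after conjugating by $z^{-\sigma_3/4}(I+i\sigma_1)/\sqrt{2}$ the half-integer powers of $z$ cancel, leaving $B$ linear in $z$, $A$ linear in $z$ modulo possible finite poles, and $C$ bounded at $\infty$. From \eqref{Phi at 0} the logarithmic prefactor at $z=0$ is $s$-independent, so $B$ and $C$ are regular there while $A$ picks up a simple pole with rank-one residue (both eigenvalues equal to $0$). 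From \eqref{Phi at -s} the factor $(z+s)^{\alpha\sigma_3}$ produces simple poles of $A$ and $C$ at $z=-s$ with residues of eigenvalues $\pm\alpha$, while $B$ remains regular. Introducing $v_1, v_2$ through the $(1,2)$-entries of the two residues of $A$, the determinant-zero constraint at $z=0$ and the determinant-$(-\alpha^2)$ constraint at $z=-s$ force the remaining entries to take exactly the form in \eqref{Pro:Lax pair-Phi-z}, while matching the polynomial part at infinity against $\Phi_{-1}$ and $\Phi_{-2}$ in \eqref{Phi at infinity Phi-1} yields \eqref{Pro:Lax pair-Phi-x} (the prefactor $\eta$ in \eqref{U to Phi} is calibrated precisely so that $B$ collapses to its stated simple form). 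The $C$-equation \eqref{Pro:Lax pair-Phi-s} then follows because $C$ has no polynomial part and its only residue is essentially the negative of the $(z+s)$-residue of $A$ -- a general consequence of the fact that the singular point at $-s$ moves with $s$.

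Finally, the zero-curvature relations
\begin{equation*}
A_x - B_z + [A,B] = 0, \qquad B_s - C_x + [B,C] = 0,
\end{equation*}
which encode $\Phi_{zx} = \Phi_{xz}$ and $\Phi_{xs} = \Phi_{sx}$, are expanded pole by pole. Separating the first identity at $z=0$ and $z=-s$ yields the two second-order equations in \eqref{int-equation v}, and the second identity yields the constraint \eqref{compatibility2}. The main technical obstacle is the bookkeeping in step two: verifying that the diagonal and $(2,1)$-entries of the residues of $A$ really take the form $\pm v_{ix}/2$ and $-i\bigl(v_{ix}^2-(\text{const})\bigr)/(4 v_i)$, where $v_{ix}$ is the $x$-derivative of $v_i$ originally defined through the off-diagonal entry. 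This internal consistency follows from comparing coefficients of $1/z$ and $1/(z+s)$ in $A_x - B_z + [A,B] = 0$ and requires tracking the subleading local coefficients $\Phi^{(0)}_1$ and $\Phi^{(-s)}_1$ and relating them to $r_i, k_i$ through the expansion \eqref{Phi at infinity}.
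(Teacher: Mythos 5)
Your overall strategy is the same as the paper's: because the jump matrices are constant in $z$, $x$, $s$, the three logarithmic derivatives $\Phi_z\Phi^{-1}$, $\Phi_x\Phi^{-1}$, $\Phi_s\Phi^{-1}$ are rational in $z$ with poles only at $0$, $-s$, $\infty$; their entries are pinned down by the local behaviors \eqref{Phi at infinity}, \eqref{Phi at 0}, \eqref{Phi at -s} (in particular the nilpotency of the residue at $0$ and the determinant value $-\alpha^2$ at $-s$, which is exactly \eqref{relation a-1, b-1,c-1}); and the zero-curvature identities then give \eqref{int-equation v} and \eqref{compatibility2}. This mirrors the paper's proof of Proposition \ref{Lax pair for Phi} step by step.

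There is, however, a concrete flaw in your description of the $s$-equation. First, the residue of $\Phi_s\Phi^{-1}$ at $z=-s$ is \emph{equal} to, not ``essentially the negative of'', the residue of $\Phi_z\Phi^{-1}$ there: writing $\Phi=G(z;x,s)(z+s)^{\alpha\sigma_3}$ locally, both $\partial_z$ and $\partial_s$ act on $(z+s)^{\alpha\sigma_3}$ in the same way, so the usual minus sign from a moving Fuchsian pole is cancelled by $\frac{d}{ds}(-s)=-1$; compare the $\frac{1}{z+s}$ terms of \eqref{Pro:Lax pair-Phi-z} and \eqref{Pro:Lax pair-Phi-s}, which are identical. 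Second, $\Phi_s\Phi^{-1}$ is \emph{not} purely a residue term: because the left prefactor $\bigl(\begin{smallmatrix}1&0\\-ir_2&1\end{smallmatrix}\bigr)$ in \eqref{Phi at infinity} depends on $s$, the $s$-Lax matrix carries the constant term $-ir_{2s}\sigma_-$ at infinity, and identifying $r_{2s}=-v_2$ (the paper's \eqref{r-2s-v relation}, obtained from the $O(1)$ term of the large-$z$ expansion) is what produces the entry $iv_2$ in \eqref{Pro:Lax pair-Phi-s}. This term is not optional: \eqref{compatibility2} is precisely the equality of mixed partials $r_{2xs}=r_{2sx}$ applied to $r_{2x}=v_1+v_2+\frac{x}{2}$ and $r_{2s}=-v_2$, so with your ``no polynomial part'' ansatz (and the wrong sign of the residue) the compatibility $\Phi_{xs}=\Phi_{sx}$ would not yield \eqref{compatibility2} and would in fact be inconsistent. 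The fix stays within your framework: keep an undetermined constant term in $\Phi_s\Phi^{-1}$ and determine it, as the paper does, from the expansion \eqref{Phi at infinity} before imposing compatibility.
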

\begin{proof}
Due to the fact that the jump matrices for $\Phi$ are all constants, then $\Phi_{z}\Phi^{-1}$, $\Phi_{x}\Phi^{-1}$ and $\Phi_{s}\Phi^{-1}$ are meromorphic functions
with possible isolated singular points at $-s$ and $0$. Using the behavior of  $\Phi$ at infinity, $-s$ and $0$, we have
\begin{equation}\label{Lax pair Phi-z}
\Phi_z=\left(
      \begin{array}{cc}
        a(z;x,s) &  b(z;x,s) \\
         c(z;x,s) & - a(z;x,s) \\
      \end{array}
    \right)
\Phi,
\end{equation}
\begin{equation}\label{Lax pair Phi-tao}
\Phi_{x}=\left(
      \begin{array}{cc}
        0 & i \\
        -iz+2ir_1-ir_2^2-ir_{2x} & 0 \\
      \end{array}
    \right)
\Phi,
\end{equation}
and
\begin{equation}\label{Lax pair Phi-s}
\Phi_{s}=\left(
      \begin{array}{cc}
        \frac {a_2(x;s)}{z+s} & -\frac {iv_2(x;s)}{z+s} \\
        -ir_{2s}+\frac {c_2(x;s)}{z+s} & -\frac {a_2(x;s)}{z+s} \\
      \end{array}
    \right)
\Phi,
\end{equation}
where
\begin{align}
\label{a} a(z;x,s)&=\frac {a_1(x;s)}z+\frac {a_2(x;s)}{z+s} \\
\label{b}  b(z;x,s)&=i-\frac {iv_1(x;s)}z-\frac {iv_2(x;s)}{z+s} \\
\label{c}  c(z;x,s)&=-iz-\frac {x i}{2}+2ir_1-ir_2^2+\frac {c_1(x;s)}z+\frac {c_2(x;s)}{z+s}.
\end{align}
Making use of the large-$z$ expansion of $\Phi(z)$ in \eqref{Phi at infinity}, we have from \eqref{Lax pair Phi-tao}
$$\frac {1}{z}\left(\frac d {dx}\Phi_{-1}+i[\Phi_{-1},\sigma_+]-i[\sigma_-\Phi_{-1},\Phi_{-1}]-i[\Phi_{-2},\sigma_-]\right)+O(1/z^2)=0,$$
where $\sigma_{\pm}$ are the constant matrices given below
\begin{equation} \label{sigma+-matrix}
  \sigma_+ = \begin{pmatrix}
    0 & 1 \\ 0 & 0
  \end{pmatrix} \quad \textrm{and} \quad \sigma_- = \begin{pmatrix}
    0 & 0 \\ 1 & 0
  \end{pmatrix}.
\end{equation}
Particularly, the $(1,2)$ entry of the above matrix equation gives us the useful relation
\begin{equation}\label{p-q relation}
 r_1=\frac 12(-r_{2x}+r_2^2).
\end{equation}
Similarly, substituting large-$z$ expansion of $\Phi(z)$ into \eqref{Lax pair Phi-s}, we obtain
\begin{equation}\label{r-2s-v relation}
 r_{2s}=-v_2.
\end{equation}
With the aid of the above two formulas, the equation \eqref{Lax pair Phi-tao} and the expression for $c(z;x,s)$ in \eqref{c} are simplified to
\begin{equation}\label{Lax pair Phi-tao-simplified}
\Phi_{x}=\left(
      \begin{array}{cc}
        0 & i \\
        -iz-2ir_{2x} & 0 \\
      \end{array}
    \right)
\Phi,
\end{equation}
and
\begin{equation}\label{c-simplify}
c(z;x,s)=-iz-\frac {x i}{2}-ir_{2x}+\frac {c_1(x;s)}z+\frac {c_2(x;s)}{z+s}.
\end{equation}

Let us consider the equations \eqref{Lax pair Phi-z} and \eqref{Lax pair Phi-tao}. Their compatibility condition $\Phi_{zx}=\Phi_{x z}$
gives us the following equations:
\begin{align} \label{equation a}a&=\frac i2b_{x}\\
\label{equation c}  c&=-(z+2r_{2x})b+\frac 12 b_{xx}\\
\label{equation c-tau} c_{x}&=-i-2(iz+2ir_{2x})a.
\end{align}
Moreover, from the behavior of $\Phi$ at $0$, $-s$ in \eqref{Phi at 0} and \eqref{Phi at -s}, we have
\begin{equation}\label{relation a-1, b-1,c-1}\det \left(
    \begin{array}{cc}
      a_1 & -iv_1 \\
      c_1 & -a_1 \\
    \end{array}
  \right)
 =0 \quad \textrm{and} \quad
 \det \left(
\begin{array}{cc}
     a_2 & -iv_2 \\
     c_2 & -a_2 \\
   \end{array}
 \right) =-\alpha^2.
\end{equation}
Substituting  \eqref{b} and \eqref{c-simplify} into \eqref{equation c} yields
\begin{equation}\label{equation r and v}
r_{2x}=v_1+v_2+\frac {x}{2}.
\end{equation}
By \eqref{a}, \eqref{r-2s-v relation} and \eqref{relation a-1, b-1,c-1}--\eqref{equation r and v}, we get the  Lax pair \eqref{Pro:Lax pair-Phi-z}--\eqref{Pro:Lax pair-Phi-s}. And the coupled $\textrm{P}_2$ equations \eqref{int-equation v} can be derived
from \eqref{equation a}, \eqref{equation c} and \eqref{relation a-1, b-1,c-1}. Finally, by the compatibility condition $\Phi_{x,s} = \Phi_{s,x}$ and formulas \eqref{r-2s-v relation} and \eqref{equation r and v}, we obtain \eqref{compatibility2}.
\end{proof}




Note that the Lax pair system for $\Phi$ in \eqref{Lax pair Phi-z}-\eqref{Lax pair Phi-s} is equivalent to the one
in \cite[(3.5)-(3.7)]{k-3} by elementary transformation.
The Lax pair for a general couple $\textrm{P}_2$ system with $k$ regular singular points and one irregular singular point can be found in \cite{Claeys:Doer2017}.

The Hamiltonian $H$ in \eqref{eq:Hamiltonian system} will play an important role in the derivation of the large gap asymptotics in Section \ref{sec:large gap asy}. Besides its definition in \eqref{def:Hamiltonian-CPII}, it also has the following simple relation with the RH problem for $\Phi$.
\begin{pro} \label{prop-Hamilton}
  Let $H$ be the Hamiltonian defined in \eqref{def:Hamiltonian-CPII}. We have
 \begin{equation}\label{eq:Hamiltotion -r-2}
    H=\frac{x^2}{4}-r_2,
 \end{equation}
 where $r_2(x;s) = -i(\Phi_{-1})_{12}$ and $\Phi_{-1}$ is the coefficient of $1/z$ term in the large-$z$ expansion of $\Phi(z)$; see \eqref{Phi at infinity Phi-1}.
\end{pro}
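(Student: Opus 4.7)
The plan is to prove the identity $H = x^2/4 - r_2$ by first showing the difference is $x$-independent via a differential identity, and then pinning the residual $s$-dependent quantity to zero by asymptotic matching as $x \to +\infty$.

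First I would differentiate in $x$. Along a solution of the Hamiltonian system \eqref{eq:Hamiltonian system}, Hamilton's equations collapse the total $x$-derivative of $H$ along a solution to its explicit partial derivative, $\frac{dH}{dx} = \frac{\partial H}{\partial x} = -(v_1+v_2)$. From the Lax-pair computation in the proof of Proposition \ref{Lax pair for Phi}, specifically identity \eqref{equation r and v}, one already has $r_{2x} = v_1+v_2+\tfrac{x}{2}$. Combining,
\[
\frac{d}{dx}\Bigl( H + r_2 - \tfrac{x^2}{4} \Bigr) = -(v_1+v_2) + \bigl(v_1+v_2+\tfrac{x}{2}\bigr) - \tfrac{x}{2} = 0,
\]
so there exists $C(s)$ depending only on $s$ with $H(x;s) + r_2(x;s) - x^2/4 \equiv C(s)$.

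Next I would fix $s$ and send $x \to +\infty$ to pin down $C(s)$. Theorem \ref{theorem-v-1} furnishes $H(x;s) = -2\alpha\sqrt{x-s} - \alpha^2/(x-s) + O(x^{-2})$. Substituting the asymptotics \eqref{int-v-1-asy-+infty}--\eqref{int-v-2-asy-+infty} of $v_1$ (exponentially small) and $v_2 = \alpha/\sqrt{x-s} - \alpha^2/(x-s)^2 + O(x^{-3})$ into $r_{2x} = v_1+v_2+x/2$ and integrating gives
\[
r_2(x;s) = \tfrac{x^2}{4} + 2\alpha\sqrt{x-s} + \tfrac{\alpha^2}{x-s} + o(x^{-1}),
\]
where the constant of integration is fixed by the intrinsic normalization $r_2 = -i(\Phi_{-1})_{12}$ appearing in the expansion \eqref{Phi at infinity}--\eqref{Phi at infinity Phi-1}. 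Adding these two expansions yields $C(s) = o(1)$, and since $C$ is $x$-independent, $C(s) \equiv 0$, which establishes the identity.

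The delicate point, and the main obstacle I expect, is confirming that the antiderivative appearing in the expansion of $r_2$ carries no additional constant, because this reflects the RH-problem normalization rather than information from the ODE alone. One clean route is a Deift--Zhou steepest-descent analysis of the $\Phi$ RH problem as $x \to +\infty$, where the outer (Airy-model) parametrix contributes exactly the stated leading behaviour. A more economical alternative is to specialise first to $s = 0$: by the remark at the end of Section \ref{sec:rhp-phi} one has $\Phi(z;t,0) = \Psi(z;t;\alpha,0)$, hence $r_2|_{s=0} = m_2$, and the identity reduces via \eqref{m2t-ut} to $H|_{s=0} = t^2/4 - m_2$, whose $x$-derivatives already match and whose constant is fixed from the $\Psi$-asymptotic of $m_2$ at infinity, so that $C(0) = 0$. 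Propagation to general $s$ then follows either from the companion identity $r_{2s} = -v_2$ of \eqref{r-2s-v relation} together with the $s$-parallel of the first step, or directly by noting that the asymptotic analysis in the second step runs unchanged at each fixed $s$.
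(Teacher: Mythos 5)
Your first step is sound: along solutions of \eqref{eq:Hamiltonian system} one indeed has $\frac{d}{dx}H=\frac{\partial H}{\partial x}=-(v_1+v_2)$, and together with \eqref{equation r and v} this gives $\frac{d}{dx}\bigl(H+r_2-\tfrac{x^2}{4}\bigr)=0$; using \eqref{eq:asy-H} for $H$ is also not circular, since the paper obtains it directly from the $v_i,w_i$ asymptotics. The genuine gap is exactly the point you flag and then do not close: you need $r_2(x;s)=\tfrac{x^2}{4}+2\alpha\sqrt{x-s}+\tfrac{\alpha^2}{x-s}+o(1)$ as $x\to+\infty$ \emph{including the constant term}, and none of your proposed remedies delivers it as stated. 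The $s=0$ reduction does not help, because \eqref{m2t-ut} ties $u$ only to $m_2'$, so the additive constant in $m_2$ is precisely as undetermined as the one in $r_2$, and the paper provides no large-$t$ expansion of $m_2$ itself. The propagation in $s$ via $r_{2s}=-v_2$ requires the total derivative $\frac{d}{ds}H=v_2$, i.e.\ the vanishing of the cross terms $\sum_i\bigl(v_{ix}w_{is}-w_{ix}v_{is}\bigr)$; this is true but is not established by anything you cite (the paper only records \eqref{compatibility2}), so it needs its own argument. Finally, the steepest-descent route would work in principle, but extracting $r_2$ modulo $o(1)$ from the $x\to+\infty$ analysis is more than reading off the outer parametrix: $r_2$ sits in the $1/z$ coefficient before the rescaling $z\mapsto xz$ and the conjugation by $x^{\sigma_3/4}$, so one must track subleading terms of the parametrices and the error matrix to sufficient order — a computation not performed in the paper and not sketched in your proposal.

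For comparison, the paper's proof avoids asymptotics altogether: it expands $\Phi_z\Phi^{-1}$ at $z=\infty$ in two ways, once from the Lax pair \eqref{Pro:Lax pair-Phi-z} and once from the expansion \eqref{Phi at infinity}, equates the $O(1)$, $O(1/z)$ and $O(1/z^2)$ coefficients to get the system \eqref{eq:expand-z-equations}, and eliminates $r_1,k_1,k_2$ to land directly on \eqref{eq:r_2}, i.e.\ $r_2=\tfrac{x^2}{4}-H$ with no integration constant ever appearing. If you want to keep your differential-identity strategy, the cleanest repair is to replace the asymptotic matching by this algebraic comparison (or at least by the single extra relation coming from the $O(1/z^2)$ term), since that is what pins the constant exactly.
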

\begin{proof}
  From \eqref{Pro:Lax pair-Phi-z}, we have
  \begin{eqnarray}
    \Phi_z(z)\Phi^{-1}(z) & = & -iz \sigma_- + \begin{pmatrix}
      0 & i \\
      -i(x+v_1+v_2) & 0
    \end{pmatrix}  + \frac{1}{z} \begin{pmatrix}
      \frac{v_{1x} + v_{2x}}{2} & -i(v_1+v_2) \\ \frac{v_{1x}^2}{4iv_1} + \frac{v_{2x}^2 - 4 \alpha^2}{4iv_2}  & - \frac{v_{1x} + v_{2x}}{2}
    \end{pmatrix}  \nonumber \\
    & & + \frac{1}{z^2} \begin{pmatrix}
      -\frac{s v_{2x}}{2} & is v_2 \\ is \frac{v_{2x}^2 - 4 \alpha^2}{4v_2}  & \frac{s v_{2x}}{2}
    \end{pmatrix} + O(\frac{1}{z^3}),
  \end{eqnarray}
  as $z \to \infty$. On the other hand, from the large-$z$ expansion of $\Phi$ in \eqref{Phi at infinity}, we get
  \begin{align}\label{eq:expand-z}
\Phi_z(z)\Phi^{-1}(z)&=(I -ir_2\sigma_{-})\big\{-iz\sigma_{-}-i[\Phi_{-1},\sigma_{-}] -\frac{i}{2}x\sigma_{-}+i\sigma_{+}\\\nonumber
                      & + \frac{1}{z}(i[\Phi_{-1},-\frac{1}{2}x\sigma_{-}+\sigma_{+}+\sigma_{-}\Phi_{-1}]-i[\Phi_{-2},\sigma_{-}]+\frac{i}{2}x\sigma_{+}-\frac{1}{4}\sigma_{3})  \\\nonumber
                      & +\frac{1}{z^2} \big(i[\Phi_{-1},\frac{x}{2}\sigma_{+}-\sigma_{+}\Phi_{-1}+\frac{x}{2}\sigma_{-}\Phi_{-1}+\sigma_{-}\Phi_{-2}-\sigma_{-}\Phi_{-1}^2+\frac{i}{4}\sigma_{3}]\\\nonumber
                      & +i[\Phi_{-2},-\frac{1}{2}x\sigma_{-}+\sigma_{+}+\sigma_{-}\Phi_{-1}] -i[\Phi_{-3},\sigma_{-}]-\Phi_{-1}\big)+O(1/z^3) \big\}(I +ir_2\sigma_{-}),
                       \end{align}
where the matrices $\sigma_\pm$ are given in \eqref{sigma+-matrix}. Comparing the above two formulas, one obtains the following system of equations involving $v_i$, $r_i$ and $k_i$ (the entries of $\Phi_{-1}$ and $\Phi_{-2}$; see \eqref{Phi at infinity Phi-1}):
  \begin{equation}\label{eq:expand-z-equations}
\left\{\begin{array}{l}
         r_2^2-2r_1=v_1+v_2+\frac{x}{2}, \\
        r_1r_2-\frac{x}{2}r_2-r_2(v_1+v_2)+r_3+\frac{1}{2}(v_{1x}+v_{2x}) -k_2+\frac{1}{4}=0, \\
        r_2^2(v_1+v_2)-r_2(v_{1x}+v_{2x})-r_2r_3-2r_1^2+xr_1+ 2k_1+\frac{v_{1x}^2}{4v_1} +\frac{v_{2x}^2-4\alpha^2}{4v_2}=0, \\
        \frac{x}{2}r_2^2-r_2r_3+2r_2k_2+\frac{1}{2}r_2-2r_1^2-xr_1-2k_1+sv_2=0.
       \end{array}
\right.
\end{equation}
Note that, in the above system, the first equation comes from the $O(1)$ term; the second and third one are from the $O(1/z)$ term; and the last one comes from the (1,2) entry of the $O(1/z^2)$ term.
Eliminating the functions $k_1$, $k_2$ and $r_1$, we obtain
\begin{equation}\label{eq:r_2}
r_2=(v_1+v_2+\frac{x}{2})^2-\frac{v_{1x}^2}{4v_1} -\frac{v_{2x}^2-4\alpha^2}{4v_2}-sv_2=\frac{x^2}{4} - H,
\end{equation}
where the definition of $H$ is given in \eqref{def:Hamiltonian-CPII}. This finishes the proof of our proposition.
\end{proof}

For later use, we derive two differential identities for the functions $v_2$ and $w_2$.
\begin{pro}\label{lem: integrals of w, v}
For $s<0$ and $z\to s$, with the behavior of $\Phi(z)$ given in \eqref{Phi at -s}, we have
\begin{equation}\label{eq:integral w-2}
\frac{d}{dx}\ln \left(\Phi^{(-s)}_0\right)_{11}(x)= w_2(x)
\end{equation}
and
\begin{equation}\label{eq:integral-v-2}
2\alpha \frac{d}{dx} \left(\Phi^{(-s)}_1\right)_{11}(x)=-v_2(x).
\end{equation}
\end{pro}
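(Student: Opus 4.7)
The plan is to extract both identities from the Lax pair for $\Phi$ stated in Proposition \ref{Lax pair for Phi} by substituting the local expansion \eqref{Phi at -s} near $z=-s$ and matching coefficients. Introduce $\tilde\Phi(z) = \Phi(z)(z+s)^{-\alpha\sigma_3}$ so that $\tilde\Phi(z) = A_0(I + A_1(z+s) + O((z+s)^2))$ with $A_j := \Phi^{(-s)}_j$. Since $(z+s)^{\alpha\sigma_3}$ is independent of $x$, the $x$-Lax equation \eqref{Pro:Lax pair-Phi-x} transfers to $\tilde\Phi_x = U(z)\tilde\Phi$, where $U(z) = -iz\sigma_- + i\sigma_+ - 2i(v_1+v_2+\tfrac{x}{2})\sigma_-$.

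Matching the order $(z+s)^0$ terms (that is, setting $z=-s$ in $U$) gives $A_{0,x} = U(-s) A_0$. The $(1,1)$ component of this identity reads $(A_0)_{11,x} = i(A_0)_{21}$, and analogously $(A_0)_{12,x} = i(A_0)_{22}$; a short computation then shows that $D := \det A_0$ is independent of $x$. Matching the order $(z+s)^1$ terms and using $A_{0,x} = U(-s) A_0$ to cancel the $A_0 A_1$ contributions reduces the first-order equation to $A_0 A_{1,x} = -i\sigma_- A_0$, whence
\[
(A_1)_{11,x} = \frac{i\,(A_0)_{11}(A_0)_{12}}{D}.
\]
On the other hand, from the $z$-Lax equation \eqref{Pro:Lax pair-Phi-z}, the residue of $\Phi_z\Phi^{-1}$ at $z=-s$ equals both the explicit matrix $M_{-1}$ read off from \eqref{Pro:Lax pair-Phi-z} and, through the $(z+s)^{\alpha\sigma_3}$ behavior, the matrix $\alpha A_0\sigma_3 A_0^{-1}$. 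Comparing $(1,2)$-entries yields
\[
v_2 = -\frac{2i\alpha\,(A_0)_{11}(A_0)_{12}}{D}.
\]

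The second identity \eqref{eq:integral-v-2} then drops out immediately by combining the two displays: $2\alpha\,(A_1)_{11,x} = 2i\alpha(A_0)_{11}(A_0)_{12}/D = -v_2$. For the first identity, I would appeal to the Hamiltonian relation from \eqref{eq:Hamiltonian system}, namely $v_{2,x} = 2v_2 w_2 + 2\alpha$, so that $w_2 = (v_{2,x}-2\alpha)/(2v_2)$. Differentiating $v_2 = -2i\alpha(A_0)_{11}(A_0)_{12}/D$ in $x$ using $(A_0)_{11,x} = i(A_0)_{21}$ and $(A_0)_{12,x} = i(A_0)_{22}$, and simplifying with the identity $(A_0)_{11}(A_0)_{22} - (A_0)_{12}(A_0)_{21} = D$, produces $v_{2,x} - 2\alpha = 4\alpha (A_0)_{12}(A_0)_{21}/D$. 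Substitution gives
\[
w_2 = \frac{i(A_0)_{21}}{(A_0)_{11}} = \frac{(A_0)_{11,x}}{(A_0)_{11}} = \frac{d}{dx}\ln(\Phi^{(-s)}_0)_{11},
\]
which is \eqref{eq:integral w-2}.

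No analytic subtlety is expected: the proof is a careful algebraic matching, and the main bookkeeping hurdle is merely expanding the $x$-Lax pair to first order near $z=-s$ and verifying the constancy of $D=\det A_0$ in $x$. Once those are in place, the two identities follow directly from the explicit formula for $M_{-1}$ in \eqref{Pro:Lax pair-Phi-z} and the Hamiltonian relation between $v_2$ and $w_2$.
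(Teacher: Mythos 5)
Your proposal is correct and follows essentially the same route as the paper: expanding the $x$-equation of the Lax pair at $z=-s$ to zeroth and first order and computing the residue of the $z$-equation there reproduces exactly the paper's \eqref{eq:deff-Phi-0}, \eqref{eq:deff-Phi-1} and \eqref{eq:Phi-0-v}, from which \eqref{eq:integral-v-2} follows just as in the paper. The only minor difference is the last step for \eqref{eq:integral w-2}: you use the Hamiltonian relation $v_{2x}=2v_2w_2+2\alpha$ to land algebraically on $w_2=\frac{d}{dx}\ln\left(\Phi^{(-s)}_0\right)_{11}$, whereas the paper uses $w_{2x}=2(v_1+v_2+\tfrac{x-s}{2})-w_2^2$ to produce the second-order equation \eqref{eq:phi-0}; your variant pins down the logarithmic derivative directly (assuming $\alpha\neq 0$ so that $v_2\not\equiv 0$), which is a slightly tighter way to close that step.
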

\begin{proof}
From \eqref{Pro:Lax pair-Phi-x} in the Lax pair and \eqref{Phi at -s}, we have
\begin{equation}\label{eq:deff-Phi-0}
\frac{d}{dx}\Phi^{(-s)}_0=\left(
                            \begin{array}{cc}
                              0 & i \\
                              -2i(v_1+v_2+\frac{x-s}{2})& 0 \\
                            \end{array}
                          \right)\Phi^{(-s)}_0,
\end{equation}
\begin{equation}\label{eq:deff-Phi-1}
\frac{d}{dx}\Phi^{(-s)}_1=\left(\Phi^{(-s)}_0\right)^{-1}\left(
                            \begin{array}{cc}
                              0 & 0 \\
                              -i& 0 \\
                            \end{array}
                          \right)\Phi^{(-s)}_0,
\end{equation}
and
\begin{equation}\label{eq:Phi-0-v}
\alpha\Phi^{(-s)}_0\sigma_3\left(\Phi^{(-s)}_0\right)^{-1}=\left(
                            \begin{array}{cc}
                              \frac{1}{2}v_2'& -iv_2 \\
                              \frac{v_2'^2-4\alpha^2}{4iv_2}& -\frac{1}{2}v_2' \\
                            \end{array}
                          \right).
\end{equation}
Then, the above three equations give us
\begin{equation}\label{eq:phi-0}
\frac {d^2}{dx^2}\left(\Phi^{(-s)}_0\right)_{11}=2(v_1+v_2+\frac{x-s}{2})\left(\Phi^{(-s)}_0\right)_{11}=(w_2^2+w_{2x})\left(\Phi^{(-s)}_0\right)_{11}
\end{equation}
and
\begin{equation}\label{eq:phi-1}
\frac {d}{dx}\left(\Phi^{(-s)}_1\right)_{11}=-\frac{1}{2\alpha}v_2.
\end{equation}
Thus, \eqref{eq:integral w-2} and \eqref{eq:integral-v-2} immediately follow from the above two equations.
\end{proof}

\subsection{Vanishing lemma}

\begin{lem}\label{vanishing lemma}
Suppose that the homogeneous RH problem for $\widehat{\Phi}(z)$ shares the same
jump conditions  and  boundary  behaviors  near $0$ and $-s$ as $\Phi(z)$, but satisfies the following  asymptotic behavior at infinity
$$
\widehat{\Phi}(z)=O(\frac 1z)z^{-\frac{1}{4}\sigma_3}\frac{I+i\sigma_1}{\sqrt{2}}
  e^{-(\frac 23 z^{3/2}+x z^{1/2})\sigma_3} \qquad \textrm{as } z \rightarrow \infty .
$$
Then, for $\alpha>-\frac 12$, $\omega\in \mathbb{C}\setminus (-\infty,0)$ and $s, x \in \mathbb{R}$,
the solution is trivial, that is $\widehat{\Phi}(z)\equiv 0$.
\end{lem}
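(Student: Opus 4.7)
The plan is to follow the standard Fokas--Its--Kapaev vanishing lemma strategy, adapted to our contour. The idea is to build an auxiliary matrix from $\widehat{\Phi}$ and its Schwarz reflection, deduce that it extends analytically off a single contour, show that it decays at infinity, and invoke Liouville's theorem to conclude it is identically zero; this then forces $\widehat{\Phi}\equiv 0$.

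First, I would reduce to a contour on the real line. Using the triangular jump matrices on $\Sigma_2$ and $\Sigma_4$ (the rays $\arg(z+s)=\pm 2\pi/3$), I will analytically continue $\widehat{\Phi}$ out of the sectors $\Omega_2$ and $\Omega_3$ into the strip containing the positive real axis. Concretely, set
\[
\widetilde{H}(z) := \widehat{\Phi}(z)\begin{pmatrix} 1 & 0 \\ \mp e^{\pm 2\pi i\alpha} & 1\end{pmatrix}
\]
for $z$ in appropriate sectors, chosen so that the jumps across $\Sigma_2,\Sigma_4$ disappear. The remaining jumps of $\widetilde{H}$ are concentrated on the real axis: on $(-\infty,-s)$ it is the constant matrix $\begin{pmatrix} 0 & 1 \\ -1 & 0\end{pmatrix}$ coming from $\Sigma_3$, on the segment $\hat{\Sigma}_1$ it is either a diagonal phase or an upper triangular matrix with entry $\omega$, and on $(-s,+\infty)\setminus\hat{\Sigma}_1$ it is the identity (with jumps picked up from the absorbed triangular factors, which can be verified to be trivial or computed explicitly).

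Next, define the auxiliary matrix
\[
A(z) := \widetilde{H}(z)\,\bigl[\widetilde{H}(\bar z)\bigr]^{*}, \qquad \operatorname{Im} z>0,
\]
and symmetrically in the lower half plane. Since the jump matrix $\begin{pmatrix} 0 & 1 \\ -1 & 0\end{pmatrix}$ on $\Sigma_3$ is unitary, a direct computation gives $A_+ = A_-$ across $(-\infty,-s)$, so $A$ extends analytically across this ray. On $\hat{\Sigma}_1$ and $(-s,+\infty)$, the jump structure, combined with the hypothesis $\omega \in \mathbb{C}\setminus(-\infty,0)$ (which guarantees a well-defined analytic square root $\omega^{1/2}$), will give a compatible symmetry after multiplying $\widetilde{H}$ on the right by a piecewise-constant diagonal factor like $\operatorname{diag}(1,\omega^{-1/2})$. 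The boundary behaviors at $z=0$ and $z=-s$ in the RH problem (items (d), (e)) ensure integrable, removable singularities, so $A$ extends to an entire function of $z$.

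Finally, the prescribed decay $\widehat{\Phi}(z)=O(1/z)\,z^{-\sigma_3/4}\tfrac{I+i\sigma_1}{\sqrt{2}}e^{-\theta(z)\sigma_3}$ combined with $\bar\theta(\bar z)=\theta(z)$ (for real $x$) makes the exponentials cancel between $\widetilde{H}(z)$ and $[\widetilde{H}(\bar z)]^{*}$, while the extra $O(1/z)$ factor forces $A(z)\to 0$ uniformly as $z\to\infty$. An entire function tending to zero at infinity vanishes identically, so $A\equiv 0$. Taking $z\to x\in\mathbb{R}$ from above then gives $\widetilde{H}(x)\widetilde{H}(x)^{*}\equiv 0$, whence $\widetilde{H}\equiv 0$ and, undoing the sectorial multiplications, $\widehat{\Phi}\equiv 0$. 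The main obstacle I anticipate is bookkeeping in the middle step: verifying that after the sectorial redefinition and the diagonal conjugation by $\omega^{\pm 1/2}$ the jumps on the real axis genuinely organize themselves into the form $J J^{*}=I$ needed for Schwarz symmetry of $A$, and that the possible logarithmic/power singularities at the endpoints $0$ and $-s$ (especially for $\alpha$ close to $-\tfrac12$) remain integrable so that $A$ is truly entire rather than merely meromorphic.
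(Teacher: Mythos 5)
Your overall strategy (symmetrize the contour, form $\widehat{\Phi}(z)\widehat{\Phi}(\bar z)^{*}$, exploit the decay at infinity) is the right family of argument, but the pivotal step as you state it fails. With $A(z):=\widetilde H(z)\,[\widetilde H(\bar z)]^{*}$ defined in both half-planes, continuity across a real interval carrying the jump $\widetilde H_+=\widetilde H_- J$ requires
\begin{equation*}
A_+=\widetilde H_- J\widetilde H_-^{*}\quad\text{to equal}\quad A_-=\widetilde H_- J^{*}\widetilde H_-^{*},
\end{equation*}
i.e.\ $J=J^{*}$ (Hermitian), not $JJ^{*}=I$ (unitary). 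The matrix $\begin{pmatrix}0&1\\-1&0\end{pmatrix}$ on $(-\infty,-s)$ is skew-Hermitian, so there $A_+=-A_-$ rather than $A_+=A_-$; likewise the diagonal jump $e^{2\pi i\alpha\sigma_3}$ (case $s<0$) and the triangular jump containing $\omega$ are not Hermitian for general $\alpha$ and non-real $\omega$, and conjugating by $\operatorname{diag}(1,\omega^{-1/2})$ only replaces $\omega$ by $\omega^{1/2}$, which is still not real. Hence $A$ does not extend to an entire function and Liouville's theorem cannot be invoked. A secondary issue: even if one had $A\equiv 0$, on a jump interval the boundary value is $\widetilde H_+\widetilde H_-^{*}$, not $\widetilde H\widetilde H^{*}$, so positivity is available only on jump-free intervals (this part is repairable, but the entirety claim is not).

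The argument the paper relies on (citing \cite{ik1} and \cite[Lemma 1]{xz}) avoids entirety altogether: after folding the ray jumps onto the real axis, $A(z)=\widehat\Phi(z)\widehat\Phi(\bar z)^{*}$ is analytic in the open upper half-plane and decays fast enough that Cauchy's theorem gives $\int_{\mathbb R}A_+(x)\,dx=0$. Adding the Hermitian conjugate yields $\int_{\mathbb R}\widehat\Phi_-(x)\bigl(J(x)+J(x)^{*}\bigr)\widehat\Phi_-(x)^{*}dx=0$, and the whole point — the ``key point'' recorded in the paper's proof — is that the conjugate-symmetry relations among the entries of the jump matrices in \eqref{U -jumps}, namely $(J_1)_{11}=\overline{(J_1)_{22}}$ and $(J_2)_{12}=\overline{(J_4)_{12}}$, make the relevant combinations $J+J^{*}$ positive semidefinite on the pieces where they must be, while the $\begin{pmatrix}0&1\\-1&0\end{pmatrix}$ piece contributes nothing since there $J+J^{*}=0$. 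One then concludes that certain rows/columns of $\widehat\Phi_\pm$ vanish on part of $\mathbb R$ and bootstraps, via the jump relations, the prescribed endpoint behavior at $0$ and $-s$, and analytic continuation (with a Carlson-type estimate controlling the exponentially growing direction), to $\widehat\Phi\equiv 0$. To repair your write-up you would need to replace the ``entire $+$ Liouville'' step by this integral/positivity argument; as proposed, the proof has a genuine gap.
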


\begin{proof}The proof is similar to \cite{ik1} and \cite[Lemma 1]{xz}. The key point is  that the entries in the jump matrices $J_i$ in \eqref{U -jumps} satisfy the following conjugate relations $(J_1)_{11} = \overline{(J_1)_{22}}$ and $(J_2)_{12} = \overline{(J_4)_{12}}$.
\end{proof}

\begin{cor} \label{Analytic of v-1}
For $\alpha>-\frac 12$, $\omega\in \mathbb{C}\setminus (-\infty,0)$ and $s, x \in \mathbb{R}$, there is unique solution to the RH problem for $\Phi$. And there exist real analytic solutions $v_1(x; s) $ and $ v_2(x;s)$
to the coupled $\textrm{P}_{2}$ equations \eqref{int-equation v} for real values of $x$. Moreover, the Hamiltonian $H$ in \eqref{eq:Hamiltotion -r-2} is also real analytic for real values of $x$.
\end{cor}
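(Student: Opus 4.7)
The plan is to derive the unique solvability of the RH problem for $\Phi$ from the vanishing lemma via the standard singular-integral-equation reformulation, and then to read off the asserted real analyticity of $v_1$, $v_2$, $H$ from the analytic parameter-dependence of $\Phi$. First, I would recast the RH problem as a Fredholm equation $(I-C_w)\mu=f$ on the contour $\cup_i \hat{\Sigma}_i$, following the Deift--Zhou and Fokas--Its--Kitaev scheme, after normalizing out the prescribed behaviours at infinity \eqref{Phi at infinity} and at the finite singular points in \eqref{Phi at 0}--\eqref{Phi at -s}. Standard estimates in an appropriate weighted $L^2$ space make $I-C_w$ a Fredholm operator of index zero; Lemma \ref{vanishing lemma} identifies its kernel with solutions of the homogeneous problem and shows it is trivial, so $(I-C_w)^{-1}$ exists by the Fredholm alternative and produces a unique $\Phi$.

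For analyticity in $(x,s)$, observe that the jump matrices $J_i$ are piecewise constant and all the $(x,s)$-dependence is contained in $\theta(z,x)=\tfrac{2}{3}z^{3/2}+xz^{1/2}$ and the endpoint position $-s$. After absorbing these into the ansatz, $C_w$ becomes an operator-valued analytic function of $(x,s)$ in a complex neighbourhood of $\mathbb{R}^2$, and the vanishing lemma persists in that neighbourhood by continuity. Analytic Fredholm theory then gives that $(I-C_w)^{-1}$, and hence the coefficients $\Phi_{-1}$, $\Phi_{-2}$, $\Phi^{(0)}_j$, $\Phi^{(-s)}_j$ of the local expansions of $\Phi$, are analytic in $(x,s)$ near $\mathbb{R}^2$. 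By Proposition \ref{Lax pair for Phi} the functions $v_1$ and $v_2$ are the residues of the $(1,2)$-entry of $\Phi_z \Phi^{-1}$ at $z=0$ and $z=-s$ respectively, hence analytic, and they satisfy the coupled $\mathrm{P}_2$ system \eqref{int-equation v} by the compatibility $\Phi_{zx}=\Phi_{xz}$ already established there. The analyticity of $H=\tfrac{x^2}{4}-r_2$ then follows from Proposition \ref{prop-Hamilton}, since $r_2=-i(\Phi_{-1})_{12}$ is analytic. For real-valuedness on $\mathbb{R}$, I would define $\widetilde\Phi(z):=\sigma_3\overline{\Phi(\bar z)}\sigma_3$ and verify (for real $\alpha$, $\omega$, $s$, $x$) that $\widetilde\Phi$ satisfies the same RH problem as $\Phi$; uniqueness then forces $\widetilde\Phi=\Phi$, so $r_2$, $v_1$, $v_2$, $H$ are real.

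The main technical obstacle will be the careful treatment of the endpoint singularities at $z=0$ and $z=-s$, especially the mixed power-type local behaviour \eqref{psi-alpha at zero-1}--\eqref{Phi at -s} for $\alpha\geq 0$ and the confluence as $s\to 0$. The choice of weighted $L^2$ space in which $C_w$ is bounded, and on which the vanishing lemma can be invoked uniformly across the real $(x,s)$-plane, is delicate but standard for RH problems of this type; once it is in place, the three ingredients above (Fredholm alternative, analytic Fredholm theory, Schwarz symmetry) combine to give the corollary.
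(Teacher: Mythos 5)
Your proposal is correct and follows essentially the same route as the paper: unique solvability from the vanishing lemma via the standard Fredholm/singular-integral argument (which the paper simply cites), analyticity of $r_2$, $v_1$, $v_2$ and of $H=\frac{x^2}{4}-r_2$ from analytic parameter dependence of the solution together with the residue formulas $v_1=i\lim_{z\to 0}z(\Phi'\Phi^{-1})_{12}$, $v_2=i\lim_{z\to -s}(z+s)(\Phi'\Phi^{-1})_{12}$, and reality from Schwarz reflection plus uniqueness. The only cosmetic difference is in the last step: because the normalization at infinity contains the a priori complex $r_2$, the paper reflects $\Upsilon\Phi$ with $\Upsilon=\left(\begin{smallmatrix}1&0\\ ir_2&1\end{smallmatrix}\right)$ rather than $\Phi$ itself (which makes ``satisfies the same RH problem'' literal, while your direct version needs condition (c) read with $r_2$ as part of the unknown data), and your restriction of the reality argument to real $\omega$ is in fact the honest scope of that step, since the jump $\left(\begin{smallmatrix}1&\omega\\ 0&1\end{smallmatrix}\right)$ for $s>0$ is reflection-symmetric only for real $\omega$.
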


\begin{proof}
The solvability of the RH problem for $\Phi$ follows from  Lemma \ref{vanishing lemma}, namely the  vanishing Lemma; see the similar arguments in \cite{dkmv1,fmz,fz,z}. Then, the functions $r_i(x)$ in \eqref{Phi at infinity Phi-1} are all analytic for real values $x$. Therefore, the Hamiltonian $H$ in \eqref{eq:Hamiltotion -r-2} is also real analytic for real values of $x$. Similarly, from \eqref{Pro:Lax pair-Phi-z}, the functions $v_i(x)$ can be expressed as
\begin{eqnarray}
  v_1(x)&=&i\lim_{z\to 0}z(\Phi'(z)\Phi(z)^{-1})_{12},  \label{v1-Phi'} \\
  v_2(x)&=&i\lim_{z\to -s}(z+s)(\Phi'(z)\Phi(z)^{-1})_{12}, \label{v2-Phi'}
\end{eqnarray}
where $'$ indicates the derivative with respect to $z$. Then, the analyticity of  $v_1(x; s)$ and $ v_2(x;s)$ also follows from the solvability of the RH problem for $\Phi$.

Moreover, once the solution exists, it is easy to prove its uniqueness with the boundary conditions given in the RH problem for $\Phi$. Let $ \Upsilon:= \left(
         \begin{array}{cc}
           1 & 0 \\
           ir_2 & 1 \\
         \end{array}
       \right)$ and note that $\sigma_3\overline{\Upsilon\Phi(\bar{z})}\sigma_3$ also satisfies the RH problem for $\Upsilon\Phi$. From the uniqueness of the RH problem, we have
\begin{equation} \label{phi-phi-bar}
  \sigma_3\overline{\Upsilon\Phi(\bar{z})}\sigma_3= \Upsilon \Phi(z).
\end{equation}
Finally, the above formula ensures that $v_1(x; s), v_2(x;s)$  are real for real values of $x$.
\end{proof}

\subsection{B\"{a}cklund transformations}

From the model RH problem for $\Phi$, we have the following useful B\"{a}cklund transformations for the coupled $\textrm{P}_2$ system. Similar arguments work for the Painlev\'e equations; for example, see \cite{fikn} and \cite[Sec. 6]{dk}.
\begin{pro}\label{Pro:BT} Let $v_i(x;2\alpha+1)$ and $v_i(x;2\alpha)$ be the solutions to the coupled $\textrm{P}_2$ system given in Theorem \ref{theorem-v-1}, we have
\begin{equation}\label{BT-1}
v_1(x;2\alpha+1)+v_2(x;2\alpha+1)-v_1(x;2\alpha)-v_2(x;2\alpha)= -w_2'(x;2\alpha+\frac{1}{2}),
\end{equation}
\begin{equation}\label{BT-2}
w_2(x;2\alpha+\frac{3}{2})+w_2(x;2\alpha+\frac{1}{2})=-\frac{2\alpha+1}{v_2(x;2\alpha+1)} ,
\end{equation}
where 
$'$ indicates derivative with respect to $x$.
\end{pro}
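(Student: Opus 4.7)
The plan is to derive both Bäcklund transformations via Schlesinger-type shift transformations on the model RH problem for $\Phi$, following the template of \cite{fikn} and \cite[Sec.~6]{dk}. The parameter $2\alpha$ enters the RH problem only through the local monodromy exponent $(z+s)^{\alpha\sigma_3}$ at $z=-s$ (cf.\ \eqref{Phi at -s}) and through the constants $e^{\pm 2\pi i\alpha}$ in the jumps $J_2$ and $J_4$. I would construct an intertwining operator---a rational-in-$z$ matrix $R(z;x,s)$ with a unique simple pole at $z=-s$, dressed by constant gauge factors $L,N$---such that
$$\Phi(z;x,s;2\alpha+1)=L\,R(z;x,s)\,\Phi(z;x,s;2\alpha)\,N.$$
The vanishing Lemma \ref{vanishing lemma}, applied to the gauged ratio, will guarantee uniqueness and rationality of $R$, whose single residue is determined by the local data $\Phi^{(-s)}_0$ and $\Phi^{(-s)}_1$ of $\Phi(z;2\alpha)$ near $z=-s$.

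To prove \eqref{BT-1}, I would expand the intertwining identity at $z=\infty$ using \eqref{Phi at infinity} and \eqref{Phi at infinity Phi-1}, and extract the $(1,2)$-entry at order $1/z$. This produces an algebraic relation $\widetilde r_2-r_2=F(\Phi^{(-s)}_0,\Phi^{(-s)}_1)$, with $r_2=r_2(x;2\alpha)$ and $\widetilde r_2=r_2(x;2\alpha+1)$. Differentiating in $x$ and applying $r_{2x}=v_1+v_2+x/2$ from \eqref{equation r and v} converts the left-hand side into $(v_1+v_2)(2\alpha+1)-(v_1+v_2)(2\alpha)$. On the right, Proposition \ref{lem: integrals of w, v} identifies $w_2(x;2\alpha+1/2)=\partial_x\ln(\Phi^{(-s)}_0)_{11}$, so that a further $x$-differentiation of the algebraic expression $F$ produces exactly $-w_2'(x;2\alpha+1/2)$, establishing \eqref{BT-1}.

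For \eqref{BT-2}, I would iterate the shift once more to obtain a two-step intertwiner $R^{(2)}(z)=R(z;2\alpha+1)R(z;2\alpha)$ between parameters $2\alpha$ and $2\alpha+2$. Its local behaviour at $z=-s$ now encodes $w_2$ at both intermediate parameters $2\alpha+1/2$ and $2\alpha+3/2$, while the residue datum attached to the intermediate stage at parameter $2\alpha+1$ is captured by the representation $v_2(x;2\alpha+1)=i\lim_{z\to -s}(z+s)(\Phi'(z)\Phi(z)^{-1})_{12}$ coming from \eqref{v2-Phi'} applied at that parameter. Matching the $(1,1)$-entry of the residue of $R^{(2)}$ against these identifications then delivers $w_2(2\alpha+3/2)+w_2(2\alpha+1/2)=-(2\alpha+1)/v_2(2\alpha+1)$.

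The main obstacle is the explicit construction of the gauge factors $L,N$ and of the dressing $R$ itself: since the shift $2\alpha\to 2\alpha+1$ corresponds to a half-integer change $\alpha\to\alpha+1/2$ of the RH parameter, the naive ratio $\widetilde\Phi\,\Phi^{-1}$ is neither single-valued near $z=-s$ nor jump-matching on $\Sigma_2,\Sigma_4$. The gauge must therefore be tuned piecewise across the four sectors $\Omega_i$ so as to simultaneously absorb the sign flip $e^{\pm 2\pi i\alpha}\mapsto -e^{\pm 2\pi i\alpha}$ in $J_2,J_4$ and cancel the fractional monodromy $(z+s)^{\sigma_3/2}$ at $z=-s$. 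Once this setup is in place, the remaining algebraic extraction of both identities is a routine matching of Taylor coefficients, analogous to the arguments for the classical Painlev\'e equations in \cite[Sec.~6]{dk}.
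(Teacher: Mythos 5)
Your overall strategy is the same as the paper's: realize the shift $2\alpha\to 2\alpha+1$ as a Schlesinger-type transformation of the model RH problem, match the ratio of the two solutions at $z=\infty$ and at $z=-s$, and then translate the resulting algebraic relations into \eqref{BT-1}--\eqref{BT-2} using $r_{2x}=v_1+v_2+\frac{x}{2}$ and the Hamiltonian system. However, the step you yourself flag as ``the main obstacle'' is exactly the missing idea, and your proposed remedy would fail. As written, the ansatz $\Phi(z;2\alpha+1)=L\,R(z)\,\Phi(z;2\alpha)\,N$ with $R$ rational and $L,N$ (or even sector-wise, piecewise-constant gauges) is impossible for generic $\alpha$: any such dressing changes the local monodromy exponents at $z=-s$ only by integers, whereas the exponents must move from $\pm\alpha$ to $\pm\left(\alpha+\frac12\right)$ (cf.\ \eqref{Phi at -s}). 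No tuning of constant factors over the sectors $\Omega_i$ can produce this half-integer shift; you need an explicit multi-valued scalar factor. The paper's resolution is the simple global gauge $\sqrt{z+s}\;\sigma_3\Phi(z;2\alpha)\sigma_3$: the scalar $\sqrt{z+s}$ supplies the exponent shift by $\frac12$, its branch jump across the cut combines with the $\sigma_3$-conjugation to reproduce the jump $J_3$, and the $\sigma_3$-conjugation flips the signs $e^{\pm2\pi i\alpha}\mapsto-e^{\pm2\pi i\alpha}$ in $J_2,J_4$. With this gauge the ratio is genuinely meromorphic with a single simple pole at $z=-s$, and the rest of the computation (your ``routine matching'') goes through; without it, your construction of $R$, and hence both identities, is not established.

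Two further points of comparison. First, once the correct gauge is in place, the paper extracts from the matching at $z=-s$ the explicit local data $\breve{\Phi}(-s;2\alpha)$ in terms of $v_2,v_2'$ (via the $z$-equation of the Lax pair), and obtains the pair of relations \eqref{Relation-shift}; your route through Proposition \ref{lem: integrals of w, v} and \eqref{v2-Phi'} is plausible but you would still have to verify that the residue function $F$ you extract equals $-w_2(x;2\alpha+\frac12)$ (up to a constant) before differentiating, which is precisely what \eqref{Relation-shift} encodes. Second, your two-step intertwiner $2\alpha\to2\alpha+2$ for \eqref{BT-2} is unnecessary: the second equation of \eqref{Relation-shift}, combined with $v_{2x}=2v_2w_2+2\alpha$ from \eqref{eq:Hamiltonian system} at the two parameter values, already yields $w_2(x;2\alpha+\frac32)+w_2(x;2\alpha+\frac12)=-\frac{2\alpha+1}{v_2(x;2\alpha+1)}$, so a single shift suffices. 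The essential gap to repair is the explicit gauge $\sqrt{z+s}\,\sigma_3(\cdot)\sigma_3$; everything else in your outline can then be carried out as in the paper.
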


\begin{proof}
From the RH problem for $\Phi$, we see that $\Phi(z;2\alpha+1)$ and $\sqrt{z+s}~\sigma_3\Phi(z;2\alpha)\sigma_3$ satisfy the same jump conditions.
According to the asymptotic behaviors of $\Phi$ at infinity and the singular points $z=0, -s$, we find that $\Phi(z;2\alpha)(\sqrt{z+s}~\sigma_3\Phi(z;2\alpha)\sigma_3)^{-1}$ is meromorphic with a simple pole at $z=-s$.
Moreover, using the behaviors of $\Phi$ at infinity in \eqref{Phi at infinity}, we have
\begin{align}
  \Phi(z;2\alpha+1) = & \left(
                           \begin{array}{cc}
                             1 & 0\\
                            -ir_2(x;2\alpha+1)& 1 \\
                           \end{array}
                         \right)
\left(
        \begin{array}{cc}
          -r_2(x;2\alpha+1)& i
           \\
          i(z+s)-i(r_1(x;2\alpha+1)-r_1(x;2\alpha)) & -r_2(x;2\alpha) \\
        \end{array}
      \right) \nonumber \\
&  \left(    \begin{array}{cc}
                             1 & 0\\
                            -ir_2(x;2\alpha)& 1 \\
                           \end{array}    \right) \frac{\sigma_3\Phi(z;2\alpha)\sigma_3}{\sqrt{z+s}}, \label{Phi-shift-1}
\end{align}
 where $r_i(x)$ are given in \eqref{Phi at infinity Phi-1}. Similarly, the local behavior of $\Phi$ at $z=-s$ gives us
 \begin{equation}\label{Phi-shift-2}
\lim_{z\to-s}\sqrt{z+s}\Phi(z;2\alpha+1)\sigma_3\Phi(z;2\alpha)^{-1}\sigma_3=\breve{\Phi}(-s;2\alpha+1)\left(
                                                                                        \begin{array}{cc}
                                                                                          0 & 0 \\
                                                                                          0 & 1\\
                                                                                        \end{array}
                                                                                      \right)\sigma_3 \breve{\Phi}(-s;2\alpha)^{-1}\sigma_3,
\end{equation}
where $\breve{\Phi}$ is the series part of the expansion of $\Phi$ near $z=-s$ in \eqref{Phi at -s}. From the equation \eqref{Pro:Lax pair-Phi-z}, we have
\begin{equation}\label{Phi-hat-1}
\breve{\Phi}(-s;2\alpha)=\sqrt{iv_2(x;2\alpha)/2\alpha}\left(
                                                                                        \begin{array}{cc}
                                                                                          1 & 1 \\
                                                                                          \frac{v_2'(x;2\alpha)-2\alpha}{2iv_2(x;2\alpha)} & \frac{v_2'(x;2\alpha)+2\alpha}{2iv_2(x;2\alpha)}\\
                                                                                        \end{array}
                                                                                      \right)d_1^{\sigma_3} \quad  \textrm{for $\alpha\neq0$}
\end{equation}
and
\begin{equation}\label{Phi-hat-2}
\breve{\Phi}(-s;2\alpha)=\sqrt{-iv_2(x;2\alpha)}\left(
                                                                                        \begin{array}{cc}
                                                                                          1 & 0 \\
                                                                                          \frac{v_2'(x;2\alpha)}{2iv_2(x;2\alpha)} & \frac{i}{v_2(x;2\alpha)}\\
                                                                                        \end{array}
                                                                                      \right)\left(
                                                                                               \begin{array}{cc}
                                                                                                 1 & d_2 \\
                                                                                                 0 & 1 \\
                                                                                               \end{array}
                                                                                             \right) \quad \textrm{for $\alpha=0$}
\end{equation}
with certain non-zero constants $d_1$ and $d_2$.
Then, a combination of \eqref{Phi-shift-1}-\eqref{Phi-hat-2} yields the following equations
\begin{equation}\label{Relation-shift}
\begin{cases}
  r_2(x;2\alpha+1)-r_2(x;2\alpha)=\frac{-v'_2(x;2\alpha)+2\alpha}{2v_2(x;2\alpha)}, \vspace{0.2cm} \\
         \frac{v'_2(x;2\alpha+1)+2\alpha+1}{2v_2(x;2\alpha+1)}+\frac{v'_2(x;2\alpha)-2\alpha}{2v_2(x;2\alpha)}=0.
\end{cases}
\end{equation}
Using \eqref{eq:Hamiltotion -r-2}, the first equation of the above formula gives us the B\"{a}cklund transformation for the Hamiltonian $H$:
\begin{equation}
  H(x;2\alpha+1)-H(x;2\alpha)=\frac{v'_2(x;2\alpha)-2\alpha}{2v_2(x;2\alpha)}.
\end{equation}
Finally, \eqref{BT-1} and \eqref{BT-2} follow from \eqref{eq:Hamiltonian system}, \eqref{equation r and v} and \eqref{Relation-shift}.
\end{proof}

{\rem{In the  case $s=0$,  we have $v_1=0$ and $v_2(x;2\alpha,\omega)=u(x;2\alpha,0)$ satisfies
   the $\textrm{P}_{34}$ equation \eqref{int-painleve 34} with parameter $2\alpha$ and $w_2(x;2\alpha+\frac {1}{2})=-2^{1/3} y(-2^{1/3}x;2\alpha+\frac {1}{2})$ satisfies the $\textrm{P}_{2}$ equation \eqref{Painleve II} with parameter $2\alpha+\frac{1}{2}$. From \eqref{BT-1} and \eqref{BT-2}, we recover the following B\"{a}cklund transformations for $\textrm{P}_{2}$ and $\textrm{P}_{34}$ equations
\begin{equation}\label{BT-P34}
u(x;2\alpha+1,0)-u(x;2\alpha,0)=-2^{2/3}y'(-2^{1/3}x;2\alpha+\frac {1}{2}),
\end{equation}
\begin{equation}\label{BT-p2.}
y(x;2\alpha+\frac{3}{2})+y(x;2\alpha+\frac{1}{2})=\frac{2\alpha+1}{y^2(x;2\alpha+\frac{1}{2})+y'(x;2\alpha+\frac{1}{2})+\frac{x}{2}};
\end{equation}
see \cite[(32.7.1)-(32.7.2)]{O}, \cite[(3.23)]{FW-2001} and \cite[(3.11)]{FW-2015}}.}


\section{Asymptotics of $v_i(x)$ as $x\rightarrow+\infty$ }\label{sec:asy of v1-infty}

In the present section and Appendix \ref{sec:asy of v1--infty}, we perform the Deift-Zhou nonlinear steepest descend method \cite{dkmv1,dkmv2,dz} for the model RH problem of $\Phi$ as the parameter $x \to \pm \infty$. Then, based on the connection of $v_i(x)$ to $\Phi$ given in \eqref{v1-Phi'} and \eqref{v2-Phi'}, we obtained their asymptotics as $x\to \pm\infty$. As the steepest descent analysis in each section is independent, we are going
to use the same notation for the functions. We trust that this will not lead to any confusion.

\subsection{Nonlinear steepest descent analysis of $\Phi$ as $x \to +\infty$}

We rescale the variable and introduce the first transformation as follows:
\begin{equation}\label{A-scaling}
  A(z)=x^{\sigma_3/4}\left(
                       \begin{array}{cc}
                         1 & 0 \\
                         ir_2 & 1 \\
                       \end{array}
                     \right)
  \Phi (xz).
\end{equation}
To normalize the large-$z$ behavior of $A(z)$, we define the $g$-function
\begin{equation}\label{g-function}
g_1(z):=\frac 23 (z+1)^{3/2},\quad \arg(z+1)\in(-\pi,\pi).
\end{equation}
It is easy to see that $ g_1(z)-\left( \frac{2z^{3/2}}{3}+z^{1/2} \right)=\frac{1}{4z^{1/2}}  +O( z^{-3/2})$ $\textrm{as } z \to \infty.
$
The second transformation is devoted to  a normalization at infinity and a shift of jump contours
\begin{equation}\label{A-B-scaling}
B(z)=\left(\begin{array}{cc}
1 & 0\\
-\frac i 4 x^{\frac 32} & 1
\end{array}\right)\left\{\begin{array}{ll}
A(z)e^{x^{\frac 32}g_1(z)\sigma_3}, & z\in I \cup III\cup IV,\\
A(z)e^{x^{\frac 32}g_1(z)\sigma_3}\left(\begin{array}{cc}
1 & 0\\
\pm e^{2x^{\frac 32}g_1(z)\sigma_3} e^{\pm 2\pi i\alpha}  &1
\end{array}\right), & z\in II\cup V,\end{array}\right.
\end{equation}
where the regions are illustrated in Fig. \ref{Figure-B-2}.
\begin{figure}[h]
 \begin{center}
   \includegraphics[width=6cm]{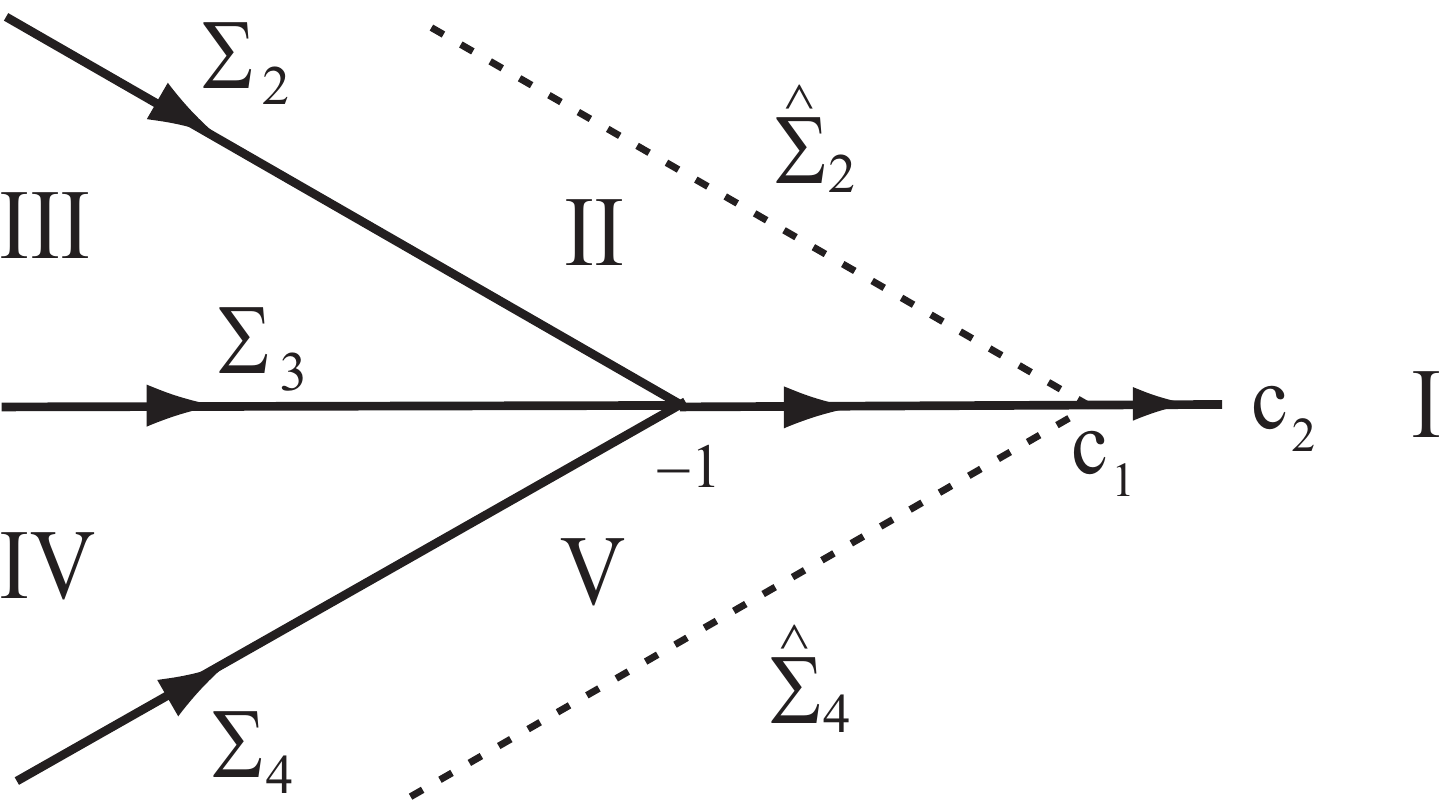} \end{center}
 \caption{\small{Regions and contours  for   $B$ ($c_1=\min(0,-\frac {s}{x})$ and $c_2=\max(0,-\frac {s}{x})$).}}
 \label{Figure-B-2}
 \end{figure}
Then, $B(z)$ satisfies the following RH problem.
\begin{rhp} The function $B(z)$ satisfies the following properties:
\begin{itemize}
\item[(a)] $B(z)$ is analytic in $\mathbb{C}\backslash \{ \Sigma_2\cup \Sigma_4 \cup (-\infty,\max(0,-\frac {s}{x})] \}$;

\item[(b)] $B(z)$ satisfies the jump condition
\begin{equation}\label{B-jump}
B_{+}(z)=B_{-}(z)J_B(z),
\end{equation}
where
\begin{equation*}
J_B(z)=\left\{\begin{array}{lll}
\left(\begin{array}{cc} 1 & 0\\ e^{2x^{\frac 32}g_1(z)}e^{2\pi i\alpha} & 1 \end{array}\right), & z\in \Sigma_2,\\
\left(\begin{array}{cc} 1 & 0\\e^{2x^{\frac 32}g_1(z)} e^{-2\pi i\alpha} & 1 \end{array}\right), & z\in \Sigma_4,\\
\left(\begin{array}{cc} 0 & 1\\ -1 & 0 \end{array}\right), & z\in (-\infty,-1),\\
\left(\begin{array}{cc} e^{2\pi i\alpha} & e^{-2x^{\frac 32}g_1(z)}\\ 0 & e^{-2\pi i\alpha} \end{array}\right), & z\in (-1,\min(0,-\frac {s}{x})),\\
\end{array}\right.
\end{equation*}
and
\begin{equation*}
J_B(z)=\left\{\begin{array}{ll}
e^{2\pi i\alpha\sigma_3}  & z\in (0,-\frac {s}{x})\quad \mbox{if}\quad  s<0,\\
\left(\begin{array}{cc} 1 & \omega e^{-2x^{\frac 32}g_1(z)}\\ 0 & 1 \end{array}\right), & z\in (-\frac {s}{x},0)\quad \mbox{if}\quad  s>0;
\end{array}\right.
\end{equation*}

\item[(c)]  The asymptotic behavior of $B(z)$ at infinity:
\begin{equation}\label{Apos-infinity}
B(z)=
    \left (I+O\left (\frac 1 z\right )\right)
 z^{-\frac{1}{4}\sigma_3}\frac{I+i\sigma_1}{\sqrt{2}} \qquad \textrm{as } z \to \infty.
\end{equation}
\end{itemize}
\end{rhp}

Note that
\begin{equation}
  \Re g_1(z)  < 0, \quad \textrm{for } z \in \Sigma_2 \cup \Sigma_4, \quad \textrm{and} \quad \Re g_1(z)  >  0 \quad \textrm{for } z \in (-1, +\infty),
\end{equation}
then, the off-diagonal entries of the jump matrices  are exponentially small as $x\to +\infty$. Neglecting the exponential small terms, we arrive at the following outer parametrix.
\begin{rhp} \label{rhp:outer1} The function $B^{(\infty)}(z)$ satisfies the following properties:
\begin{itemize}
  \item[(a)]   $B^{(\infty)}(z)$ is analytic in
  $\mathbb{C}\backslash (-\infty,-\frac{s}{x}]$;
\item[(b)]   $B^{(\infty)}(z)$  satisfies the jump condition
 \begin{equation}
B^{(\infty)}_+(z)=B^{(\infty)}_-(z)\left\{\begin{array}{lll}
\left(\begin{array}{cc} 0 & 1\\ -1 & 0 \end{array}\right), & z\in (-\infty,-1), \vspace{0.2cm} \\
e^{2\pi i\alpha \sigma_3}, & z\in (-1,-\frac {s}{x});
\end{array}\right.
\end{equation}
\item[(c)] At infinity, $B^{(\infty)}(z)$ satisfies the same asymptotics as $B(z)$ in \eqref{Apos-infinity}.

\end{itemize}
\end{rhp}

To construct a solution to the above RH problem, let us first introduce a scalar function $h(z)$ as follows:
\begin{equation}\label{h-function}
h(z):=\left(\frac {\sqrt{z}-1}{\sqrt{z}+1}\right)^{\alpha},\quad z\in \mathbb{C}\setminus (-\infty,1],
\end{equation}
where the power function $z^c, c \notin \mathbb{Z},$ takes the principle branch with the cut along  $(-\infty,0)$.
Note that, $h(z)$ satisfies the following jump conditions
\begin{equation}\label{h-jump} \left\{\begin{array}{c}
                                        h_+(x)=h_-(x)e^{2\pi i\alpha}, \quad x\in(0,1) \\
                                         h_+(x)h_-(x)=1, \quad x\in(-\infty,0).
                                      \end{array}
\right.
\end{equation}
Then, a solution to the RH problem for $B^{(\infty)}(z)$ is given explicitly as
\begin{equation}\label{B-infiniy}
B^{(\infty)}(z)=\left(\begin{array}{cc} 1 & 0\\ 2\alpha i\sqrt{1-\frac {s}{x}}& 1\end{array}\right)(z+1)^{-\sigma_3/4}\frac{I+i\sigma_1}{\sqrt{2}}
h_1(z)^{\sigma_3},
\end{equation}
where
\begin{equation}\label{h-2}
h_1(z)=h\left(\frac {z+1}{1-\frac {s}{x}}\right)=\left(\frac {\sqrt{z+1}-\sqrt{1-\frac {s}{x}}}{\sqrt{z+1}+\sqrt{1-\frac {s}{x}}}\right)^{\alpha}, \quad z\in \mathbb{C}\setminus \left(-\infty,-\frac {s}{x}\right].
\end{equation}


The jump matrices of $B(z)B^{(\infty)}(z)^{-1}$ are not uniformly close to the unit matrix near the end-points
$-1$, 0 and $-\frac {s}{x}$. Then, local parametrices have to be constructed in neighborhoods of the end-points.

Let $U_{-1}:= \{z: \ |z + 1| < \delta\}$ for certain small $0 < \delta < 1/2$. The local parametrix near $-1$ should share the same jumps \eqref{B-jump} with $B$ in the neighborhood $U_{-1}$, and match with $B^{(\infty)}(z)$ on $|z + 1| = \delta.$ It is readily verified that such a parametrix can be represented as follows
\begin{equation}\label{parametrix at -1 }
B^{(-1)}(z)=E_0(z)Z_{A}(x(z+1))\left\{\begin{array}{ll}
                                                     e^{(\frac {2}{3}x^{\frac 32}(z+1)^{\frac 32}+\alpha\pi i)\sigma_3}, \quad &\arg z\in (0, \pi),\\
                                                     e^{(\frac {2}{3}x^{\frac 32}(z+1)^{\frac 32}-\alpha\pi i)\sigma_3},\quad &\arg z\in (-\pi, 0),
                                                   \end{array}
\right.
\end{equation}
where the pre-factor $E_0(z)$ is an analytic function in $U_{-1}$. Here, $Z_{A}(z)$ is the following solution to the well-known Airy model RH problem:
\begin{figure}[h]
 \begin{center}
   \includegraphics[width=5cm]{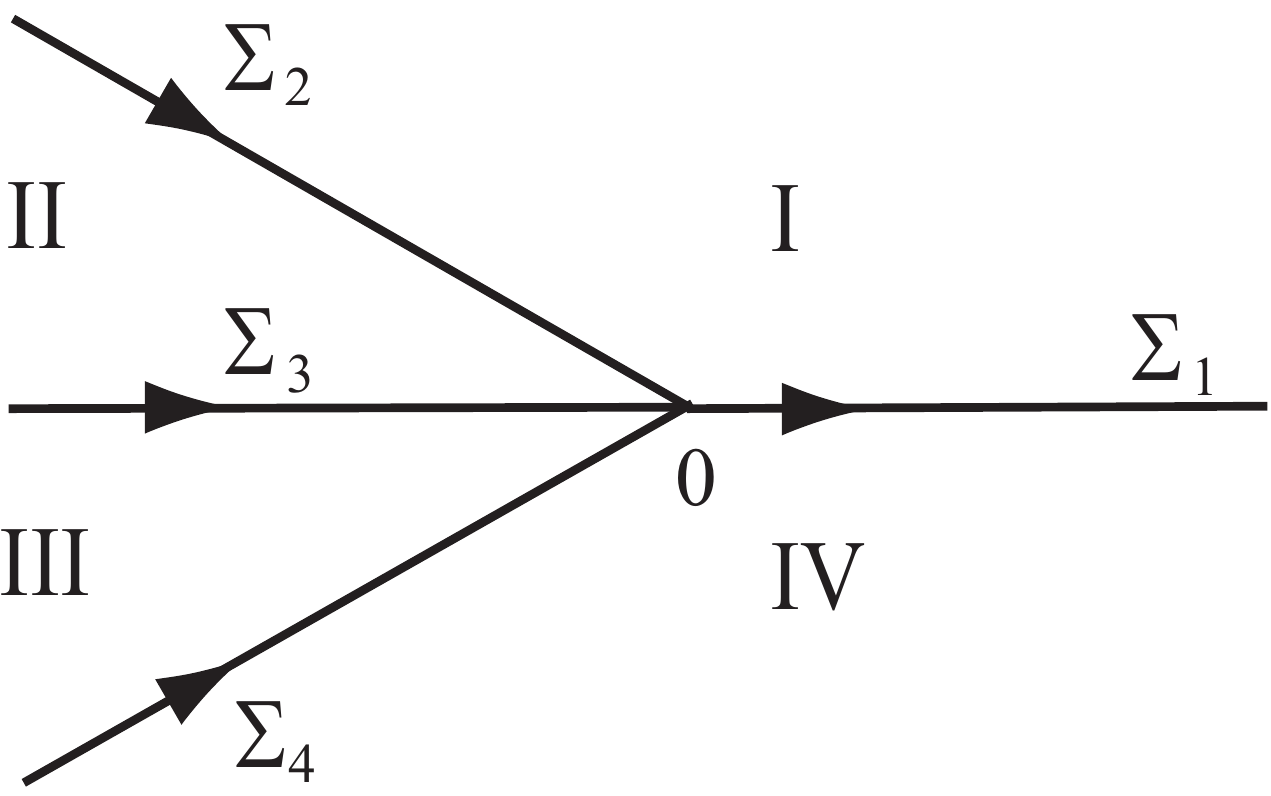} \end{center}
 \caption{\small{Regions and contours  for   $Z_A$.}}
 \label{Figure-Airy}
 \end{figure}
\begin{equation}\label{Airy-model-solution}
  Z_A(\lambda)=M_A\left\{
                 \begin{array}{ll}
                  \left(
                     \begin{array}{cc}
                        \Ai(\lambda) & \Ai(\omega^2 \lambda) \\
                   \Ai'(\lambda)&\omega^2 \Ai'(\omega^2 \lambda) \\
                      \end{array}
                 \right)e^{-\frac{\pi i}{6}\sigma_3}, &\lambda\in I \\[.4cm]
                       \left(
                     \begin{array}{cc}
                        \Ai(\lambda) & \Ai(\omega^2 \lambda) \\
                      \Ai'(\lambda)&\omega^2 \Ai'(\omega^2\lambda) \\
                     \end{array}
                 \right)e^{-\frac{\pi i}{6}\sigma_3}\left(
                                                       \begin{array}{cc}
                                                    1 & 0 \\
                                                        -1 & 1 \\
                                                      \end{array}
                                                     \right)
                 , & \lambda\in II \\[.4cm]
                            \left(
                    \begin{array}{cc}
                         \Ai(\lambda) & -\omega^2 \Ai(\omega\lambda) \\
                        \Ai'(\lambda)&- \Ai'(\omega \lambda) \\
                    \end{array}
                 \right)e^{-\frac{\pi i}{6}\sigma_3}\left(
                                                      \begin{array}{cc}
                                                       1 & 0 \\
                                                      1 & 1 \\
                                                      \end{array}
                                                   \right)
                 , &\lambda\in III\\[.4cm]
                       \left(
                     \begin{array}{cc}
                    \Ai(\lambda) & -\omega^2 \Ai(\omega \lambda) \\
                        \Ai'(\lambda)&- \Ai'(\omega \lambda) \\
                      \end{array}
                  \right)e^{-\frac{\pi i}{6}\sigma_3}, & \lambda\in IV,
                 \end{array} \right.
   \end{equation}
where  $\omega=e^\frac{2\pi i}{3}$, the constant matrix $M_A=\sqrt{2\pi} e^{\frac 1 6\pi i}  \left(
                                                                               \begin{array}{cc}
                                                                                 1 & 0 \\
                                                                                 0 & -i \\
                                                                               \end{array}
                                                                             \right)$ and the regions are indicated in Fig. \ref{Figure-Airy}; cf. \cite[(7.9)]{dkmv2}.

As $x \to +\infty$, the other two endpoints $0$ and $-\frac {s}{x}$ are every close to each other. Then, we consider them together and look for the following local parametrix in $U_0:= \{z: \ |z| < \delta\}$.
\begin{rhp} The function $B^{(0)}(z)$ satisfies the following properties:
\begin{itemize}
  \item[(a)]   $B^{(0)}(z)$ is analytic in
  $U_0\backslash (-\infty,\max(0,-\frac {s}{x})]$;
\item[(b)] $B^{(\infty)}(z)$  satisfies  the jump condition
 \begin{equation}
B^{(0)}_+(z)=B^{(0)}_-(z)\left\{\begin{array}{ll}
e^{2\pi i\alpha\sigma_3}, & z\in (0, -\frac {s}{x}),\\
\left(\begin{array}{cc} e^{2\pi i\alpha} & e^{-2x^{\frac 32}g_1(z)}\\ 0 & e^{-2\pi i\alpha} \end{array}\right), & z\in (-1,0);
\end{array}\right. \quad  \textrm{for $s<0$},
\end{equation}
and
\begin{equation}
B^{(0)}_+(z)=B^{(0)}_-(z)\left\{\begin{array}{ll}
\left(\begin{array}{cc} e^{2\pi i\alpha} & e^{-2x^{\frac 32}g_1(z)}\\ 0 & e^{-2\pi i\alpha} \end{array}\right), & z\in (-1,-\frac {s}{x}),\\
\left(\begin{array}{cc} 1 & \omega e^{-2x^{\frac 32}g_1(z)}\\ 0 & 1\end{array}\right), & z\in (-\frac {s}{x},0),
\end{array}\right. \quad \textrm{for $s>0$};
\end{equation}

\item[(c)] $B^{(0)}(z)$ fulfils the following matching condition on $\partial U_0$:
$$
B^{(0)}(z) B^{(\infty)}(z)^{-1} = I + o(1), \qquad \textrm{as } x \to +\infty .
$$
\end{itemize}
\end{rhp}

A solution to the above RH problem is given explicitly as follows:
\begin{equation} \label{b0-x-positive}
B^{(0)}(z)=\left(\begin{array}{cc} 1 & 0\\ 2\alpha i\sqrt{1-\frac {s}{x}}& 1\end{array}\right)(z+1)^{-\sigma_3/4}\frac {I+i\sigma_1}{\sqrt{2}}
\left(
  \begin{array}{cc}
    1 & j(z) \\
    0 &1 \\
  \end{array}
\right)
h_1(z)^{\sigma_3},
\end{equation}
where
\begin{equation*}\label{j}
j(z)=\left\{\begin{array}{ll}
\frac {\omega}{2\pi i}\int_{-\frac {s}{x}}^0 \frac {\exp(-\frac {4}{3}x^{\frac 32}(\zeta+1)^{\frac {3}{2}})|h_1(\zeta)|^2}{\zeta-z}d\zeta+\frac {1}{2\pi i}\int_{-\frac {1}{2}}^{-\frac {s}{x}} \frac {\exp(-\frac {4}{3}x^{\frac 32}(\zeta+1)^{\frac {3}{2}})|h_1(\zeta)|^2}{\zeta-z}d\zeta, & s>0\\
\frac {1}{2\pi i}\int_{-\frac {1}{2}}^0 \frac {\exp(-\frac {4}{3}x^{\frac 32}(\zeta+1)^{\frac {3}{2}})|h_1(\zeta)|^2}{\zeta-z}d\zeta, &s<0,
\end{array}\right.
\end{equation*}
for $|z|<\delta$ and $h_1(z)$ is defined in \eqref{h-2}.

Now the final transformation is given by
\begin{equation} \label{trans:C-final}
C(z)=\left\{\begin{array}{lll}
B(z)B^{(\infty)}(z)^{-1}, & \mathbb{C}\backslash (U_{-1}\cup U_0),\\
B(z)(B^{(0)}(z))^{-1}, & z\in U_{0},\\
B(z)(B^{(-1)}(z))^{-1}, & z\in U_{-1}.\end{array}\right.
\end{equation}
Then, $C(z)$ satisfies the following RH problem:
\begin{figure}[h]
 \begin{center}
   \includegraphics[width=5cm]{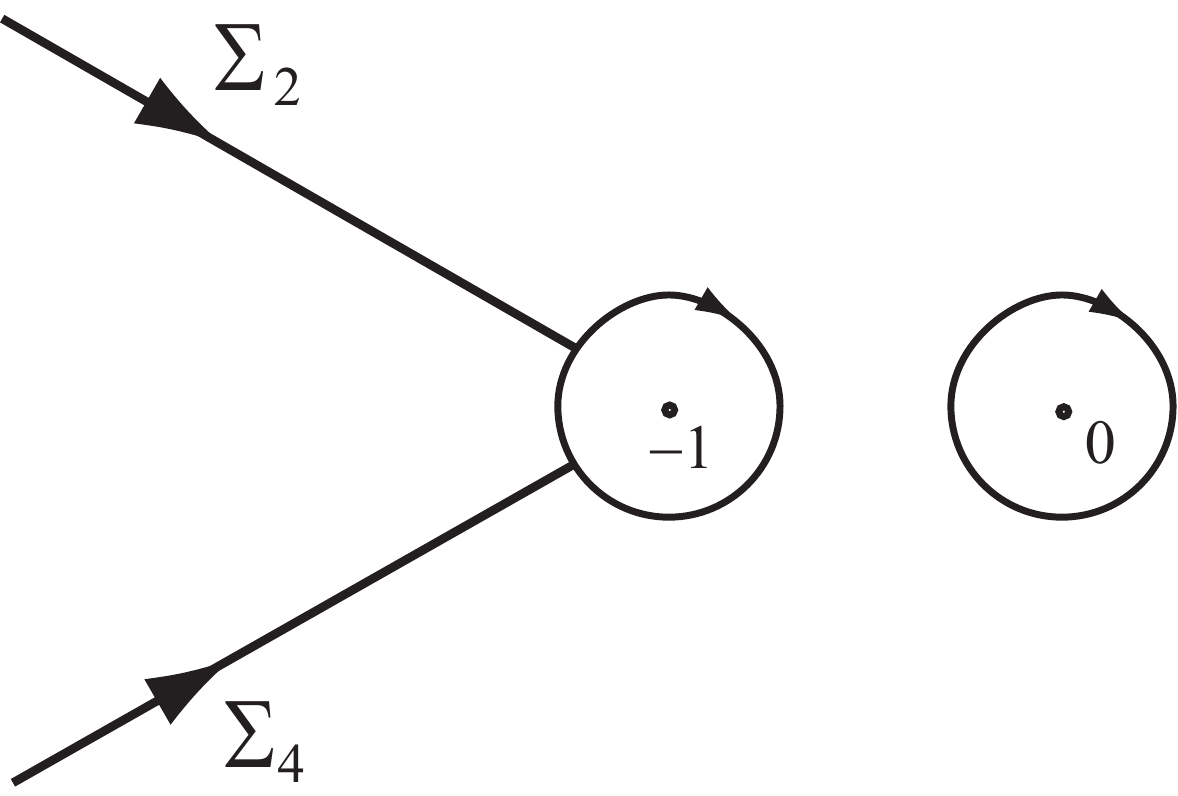} \end{center}
 \caption{\small{Contour $\Sigma_C$.}}
 \label{Figure-c-infty}
 \end{figure}
\begin{rhp} \label{rhp:c-final} The function $C(z)$ satisfies the following properties:
  \item[(a)] $C(z)$ is analytic in $\mathbb{C} \setminus \Sigma_C$; see Fig.\ref{Figure-c-infty};

  \item[(b)] $C(z)$ satisfies the jump condition  $ C_+(z)=C_-(z)J_C(z)$,
  $$\begin{array}{ll}
                             J_C(z)= B^{(0)}(z)(B^{(\infty)}(z))^{-1}, &  z \in \partial U_0,\\
                             J_C(z)= B^{(-1)}(z)(B^{(\infty)}(z))^{-1}, &  z \in \partial U_{-1}, \\
                             J_C(z)= B^{(\infty)}(z) J_B(z) (B^{(\infty)}(z))^{-1}, &  z\in \Sigma_{C}\setminus \partial (U_0 \cup U_{-1});
                             \end{array}$$

  \item[(c)] As $z \to \infty$, $C(z)=I+O(1/z).$
\end{rhp}

It follows from the properties of the local parametrices that
\begin{equation}\label{R-jump-approx}
J_C(z)=\left\{ \begin{array}{ll}
                     I+O\left (x^{-3/2}\right ),&  z\in\partial
                    U_{-1},\\[.1cm]
I+O(e^{-cx}), ~& z \in  \Sigma_R\setminus \partial U_{-1},
                 \end{array}\right .
\end{equation}
where $c$ is a positive constant, and the error term is uniform for $z$ on the corresponding contours. Then, 
it follows from the above formula
\begin{equation} \label{C-estimate-positive x}
C(z)=I+O(x^{-3/2}),
\end{equation}
uniformly  for $z$ in the complex plane;  see \cite{deift,dkmv2}.

\subsection{Proof of Theorem \ref{theorem-v-1}} \label{sec:vi+}

The analyticity of $v_i(x)$ for real values of $x$ is proved in Corollary \ref{Analytic of v-1}. Next, we compute their asymptotics.

Based on the transformations $\Phi \mapsto A \mapsto B$ in \eqref{A-scaling} and \eqref{A-B-scaling}, we have from \eqref{v1-Phi'} and \eqref{v2-Phi'}
\begin{equation}\label{v1-A}
v_1(x)=i\lim_{z\to 0}z(\Phi'(z)\Phi(z)^{-1})_{12}=\frac {i}{\sqrt{x}}\lim_{z\to 0}z(B'(z)B(z)^{-1})_{12},
\end{equation}
\begin{equation}\label{v2-A}
v_2(x)=i\lim_{z\to -s}(z+s)(\Phi'(z)\Phi(z)^{-1})_{12}=\frac {i}{\sqrt{x}}\lim_{z\to -s/x}(z+\frac {s}{x})(B'(z)B(z)^{-1})_{12}.
\end{equation}
By the transformation \eqref{trans:C-final} and the approximation \eqref{C-estimate-positive x}, we have for $|z|<\delta$
\begin{equation}\label{B-C-0}
B(z)=(I+O(x^{-3/2})) B^{(0)}(z) .
\end{equation}
Recalling the expression of $B^{(0)}(z)$ in \eqref{b0-x-positive}, we find the asymptotics of $v_1$ from \eqref{v1-A} and \eqref{B-C-0}
\begin{equation}\label{v-1-+infty}
v_1(x)=\left\{\begin{array}{ll}
\frac {\omega}{4\pi \sqrt{x}}e^{-\frac {4}{3}x^{\frac {3}{2}}}|h_1(0)|^2(1+O(x^{-\frac {3}{2}})), & s>0,\\
\frac {1}{4\pi \sqrt{x}}e^{-\frac {4}{3}x^{\frac {3}{2}}}|h_1(0)|^2(1+O(x^{-\frac {3}{2}})), &s<0,
\end{array}\right. \qquad \textrm{as } x \to +\infty,
\end{equation}
where
$$|h_1(0)|^2=\left|\frac {1-\sqrt{1-\frac {s}{x}}}{1+\sqrt{1-\frac {s}{x}}}\right|^{2\alpha}=\frac {1}{2^{4\alpha}}\left|\frac {s}{x}\right|^{2\alpha}\left(1+\alpha\frac {s}{x}+O\left(\frac {1}{x^2}\right)\right).$$
Similarly, we get the asymptotics of $v_2(x)$
\begin{equation}\label{v-2-+infty}
v_2(x)=\frac{\alpha}{\sqrt{x-s}}-\frac{\alpha^2}{(x-s)^2}+O(x^{-3}), \quad \textrm{as } x\to+\infty, \ \textrm{ if } \alpha \neq 0.
\end{equation}
If $\alpha=0$ and $s<0$, the function $\Phi(z)$ is analytic at the point $z = -s$. By \eqref{v2-A}, we have $v_2(x)=0$.
If $\alpha=0$ and $s>0$, we have the following
asymptotics
\begin{equation} \label{v2-asy-+infty2}
v_2(x)=\frac {1-\omega}{4\pi \sqrt{x}}e^{-\frac {4}{3}(x-s)^{\frac {3}{2}}}(1+O(x^{-3/2})).
\end{equation}
Recall the relations among $v_i$, $w_i$ and $H$ in \eqref{eq:Hamiltonian system} and \eqref{def:Hamiltonian-CPII}. Then, the asymptotics of $w_i$ in \eqref{eq:asy-w-1} and \eqref{eq:asy-w-2}, as well as the Hamiltonian $H$ in \eqref{eq:asy-H},  are derived by direct computations. This completes the proof of Theorem \ref{theorem-v-1}.

\medskip

To obtain the large gap asymptotics in Section \ref{sec:large gap asy}, let us compute two more quantities, namely $\left(\Phi^{(-s)}_0\right)_{11}(x;s)$ and $\left(\Phi^{(-s)}_1\right)_{11}(x;s)$. When $s<0$, recalling the expansion of $\Phi(z)$ in \eqref{Phi at -s}, we have from \eqref{A-scaling}, \eqref{A-B-scaling}, \eqref{b0-x-positive} and \eqref{B-C-0}
\begin{equation}\label{eq:asy-phi-0}
\left(\Phi^{(-s)}_0\right)_{11}(x;s)=2^{-2\alpha-\frac{1}{2}}\exp(-\frac{2}{3}(x-s)^{\frac{3}{2}})(x-s)^{-(\alpha+\frac{1}{4})}(1+O(x^{-\frac{3}{2}})) \quad \textrm{as } x\to \infty
\end{equation}
and
\begin{equation}\label{eq:asy-phi-1}
\left(\Phi^{(-s)}_1\right)_{11}(x;s)=-\sqrt{x-s}+O(1/x) \qquad \textrm{as } x\to \infty.
\end{equation}


\section{Proof of Theorem \ref{thm:TW-P34}-\ref{thm:TW-P2} } \label{sec: tw}

%

\subsection{Tracy-Widom formula for $\textrm{P}_{34}$ kernel: proof of Theorem \ref{thm:TW-P34} }

Before we use the differential identity in \eqref{ logrithmic derivaive t related to Y}, let us first compute the coefficient $Y_{-1}$. Tracing back the transformation $Y\to \widetilde{\Phi}\to \Phi$ in \eqref{U-1}, \eqref{U} and \eqref{U to Phi}, we obtain
\begin{equation}\label{Phi and Y}
Y(z)=\left(
       \begin{array}{cc}
         1 & 0 \\
         -\eta & 1\\
       \end{array}
     \right)\Phi(z-s)
\Psi(z)^{-1}.
\end{equation}
From the expansions of $\Psi$ and $\Phi$ at infinity in \eqref{psi-alpha infinity} and \eqref{Phi at infinity}, the differential identity in \eqref{ logrithmic derivaive t related to Y} yields the following more explicit form
\begin{equation}\label{differential identity-2}
 \frac{d}{dt}\ln \det[I-K^{P34}_{\alpha,\omega,s}]=-r_2(s+t;s)+m_2(t) - \frac{t^2}{4} +\frac {(s+t)^2}4,
\end{equation}
where $m_2$ and $r_2$ are coefficients in the asymptotics of $\Psi$ and $\Phi$ near infinity in  \eqref{psi-alpha infinity} and \eqref{Phi at infinity}, respectively. Note that $x^2/4- r_2(x)$ and $x^2/4-m_2(x)$ are Hamiltonians for the coupled $\textrm{P}_2$ and the $\textrm{P}_2$ equations, respectively; see \eqref{eq:Hamiltotion -r-2}. The above formula and \eqref{equation r and v} give us
\begin{equation}\label{sencond Dt}
\frac{d^2}{dt^2}\ln \det[I-K^{P34}_{\alpha,\omega,s}]= u(t)-v_1(s+t)-v_2(s+t),
\end{equation}
where $v_i(x)=v_i(x,s;2\alpha,\omega)$ are  the solutions to the coupled $\textrm{P}_{2}$ equations \eqref{int-equation v}  with the properties stated in Theorem \ref{theorem-v-1}
and $u(x)=u(x;2\alpha,\omega)$ is the $\textrm{P}_{34}$ transcendent with the boundary condition \eqref{thm: painleve  positive infinity}.

For fixed $s\in \mathbb{R}$, the limit below
\begin{equation}\label{FD P34-t large}
\lim_{t\to+\infty}\ln \det[I-K^{P34}_{\alpha,\omega,s}]=0
\end{equation}
follows from the estimation of functions $\psi_i$ in the $\textrm{P}_{34}$ kernel in \eqref{psi-kernel}
\begin{equation}\label{asy of Psi}
|\psi_i(z,t)|=O( e^{-\frac{2}{3} (z+t)^{2/3}}|z|^{\alpha}), \qquad \textrm{as } t \to +\infty,
\end{equation}
uniformly for $z>c_0$, $c_0 \in \mathbb{R}$. The above estimation can be found in the derivation of the large-$t$ asymptotics  of the $\textrm{P}_{34}$ transcendent in Its, Kuijlaars  and \"{O}stensson \cite{ik2}; see also Section \ref{sec:asy of v1-infty} with the parameter $s=0$ therein.

Denote $f(t,s):=\ln \det[I-K^{P34}_{\alpha,\omega,s}]$. From \eqref{sencond Dt} and \eqref{FD P34-t large}, an integration by parts gives us
\begin{align}
  \ln \det[I-K^{P34}_{\alpha,\omega,s}] & = - \int_t^{+\infty} \frac{d}{dx} f(x,s) dx \nonumber \\
  & = -\int_{t}^{+\infty}(v_1(x+s)+v_2(x+s)-u(x))(x-t)dx, \label{tw-int-parts} 
\end{align}
where the convergence of the above integral is guaranteed by the properties of $u$ in Proposition \ref{pro- p34} and $v_i$ in Theorem \ref{theorem-v-1}. This completes the proof of Theorem \ref{thm:TW-P34}.

\subsection{Tracy-Widom formula for $\textrm{P}_{2}$ kernel: proof of Theorem \ref{thm:TW-P2} }

Let us first prove Lemma \ref{lem:kp2-kp34}, which indicates the connection between the Fredholm determinants of the $\textrm{P}_{2}$ and the $\textrm{P}_{34}$ kernels.

\medskip

\noindent\emph{Proof of Lemma \ref{lem:kp2-kp34}.}
By definition, we have the gap probability of the random matrices \eqref{quartic-UE} near zero
\begin{align}
  &\mbox{Prob}[\mbox{$M$ has no eigenvalues in }(-s(n),s(n))]=  \nonumber \\
  & \hspace{4cm} \frac {1}{Z_n}\int_{I^n}\prod_{k=1}^n |x_k|^{2\alpha}e^{-nV(x_k)}\prod_{1\leq i<j\leq n} |x_i-x_j|^2 \prod_{k=1}^n dx_k, \label{gap pro zero-int}
\end{align}
where $I:= (-\infty,-s(n))\cup(s(n),+\infty)$, $s(n)=\frac {s}{2^{-2/3}n^{1/3}}$, $V(x)=\frac {x^4}{4}+\frac {g}{2}x^2$
and  the constant $Z_n=\int_{\mathbb{R}^n}\prod_{k=1}^n |x_k|^{2\alpha}e^{-nV(x_k)}\prod_{1\leq i<j\leq n} |x_i-x_j|^2 \prod_{k=1}^n dx_k$.
The multiple integrals can be written in a form of  the following Hankel determinants
 \begin{equation}\label{gap pro zero-Hankel}
\mbox{Prob}[\mbox{$M$ has no eigenvalues in }(-s(n),s(n))
]=
\frac { \det\left[\int_{I}|x|^{2\alpha}e^{-nV(x)}x^{i+j}dx\right]_{i,j=0}^{n-1}}{ \det\left[\int_{\mathbb{R}}|x|^{2\alpha}e^{-nV(x)}x^{i+j}dx\right]_{i,j=0}^{n-1}}.
 \end{equation}
Using the fact that the entries in the Hankel determinant vanish when $i$ and $j$ have different parity, we rearrange the rows and columns in the Hankel determinant and obtain
\begin{equation*}\label{Hankel-split}
 \begin{split}
\det D:&=\det\left[\int_{I}|x|^{2\alpha}e^{-nV(x)}x^{i+j}dx\right]_{i,j=0}^{n-1}\\
&=\det\left[
                       \begin{array}{cc}
                        \left[\int_{I}|x|^{2\alpha}e^{-nV(x)}x^{2(i+j)}dx\right]_{i,j=0}^{[(n+1)/2]-1} & 0 \\
                         0 & \left[\int_{I}|x|^{2\alpha}e^{-nV(x)}x^{2(i+j+1)}dx\right]_{i,j=0}^{[n/2]-1} \\
                       \end{array}
                     \right]\\
&=\det \left[\int_{I}|x|^{2\alpha}e^{-nV(x)}x^{2(i+j)}dx\right]_{i,j=0}^{[(n+1)/2]-1} \det\left[\int_{I}|x|^{2\alpha}e^{-nV(x)}x^{2(i+j+1)}dx\right]_{i,j=0}^{[n/2]-1};
\end{split}
\end{equation*}
see similar arguments in Forrester \cite{F}. From the above formula, we have
\begin{equation}\label{gap pro zero-Hankel-split}
\mbox{Prob}[\mbox{$M$ has no eigenvalues in }(-s(n),s(n))]=\mathbb{P}_{\frac {\alpha}2+\frac {1}{4}}\mathbb{P}_{\frac{\alpha}{2}-\frac 14},
\end{equation}
where
\begin{equation}\label{pro-i}
\mathbb{P}_{\frac {\alpha}{2}\pm\frac 14}=\frac { \det \left[\int_{s(n)^2}^{+\infty}|x|^{\alpha\pm\frac {1}{2}}e^{-nV(\sqrt{x})}x^{i+j}dx\right]_{i,j=0}^{l_{\pm}-1}}{ \det\left[\int_{0}^{+\infty}|x|^{\alpha\pm\frac {1}{2}}e^{-nV(\sqrt{x})}x^{i+j}dx\right]_{i,j=0}^{l_{\pm}-1}},
 \end{equation}
with $l_+=[n/2]$ and $l_-=[(n+1)/2]$.
Then $\mathbb{P}_{\frac {\alpha}{2}\pm\frac 14}$ describe the gap probability of the following unitary ensembles of positive definite Hermitian matrices of size $n/2$
\begin{equation}\label{pGUE-positive}\frac{1}{Z_{n/2}}|\det(M)|^{\alpha\pm\frac 12}e^{-\frac {n}{2} \mathrm{ Tr}( \frac {M^2}{2}+g M)}dM,\end{equation}
where $n$ is even.  For $g_{cr}=-2$, the density of the limiting eigenvalue distribution is given by $\frac{1}{2\pi} \sqrt{x(4-x)}$, $x \in [0,4]$. At the origin,
the soft edge and the hard edge  coalesce.  In the critical regime $g=-2+\frac {2^{1/3}t}{n^{2/3}}$, the limiting eigenvalue correlation kernel is the $\textrm{P}_{34}$ kernel  $ K^{P34}_{\frac {\alpha}{2}\pm\frac 14, 0}(x,y; -2^{-1/3}t)$ in  \eqref{psi-kernel}; see \cite{ck-2}. Then, the gap probability of these unitary ensembles is given by
\begin{equation}\label{gap pro-pGUE-pos-p34}
\lim_{n\rightarrow\infty}\mathbb{P}_{\frac {\alpha}{2}\pm\frac 14}
=\det[I-K^{P34}_{\frac {\alpha}{2}\pm\frac 14,0,s'}],
 \end{equation}
 where $K^{P34}_{\frac {\alpha}{2}\pm\frac 14,0,s'}$ is the trace-class operator acting on $L^2(s', + \infty)$ with $s'=-2^{2/3}s^2$. Meanwhile, the large-$n$ limit of the gap probability for the unitary ensemble with quartic potential is expressed as the Fredholm determinant of the  $\textrm{P}_{2}$ kernel in \eqref{gap pro zero}. Then, the uniqueness of the limit of \eqref{gap pro zero-Hankel-split} as $n \to \infty$ implies
 \eqref{Fredholm det-P2-P34}. \hfill $\Box$

\medskip

Now, we are ready to derive the Tracy-Widom formula for the $\textrm{P}_{2}$-kernel determinant.

\medskip

\noindent\emph{Proof of Theorem \ref{thm:TW-P2}.} By Theorem \ref{thm:TW-P34} and Lemma \ref{lem:kp2-kp34}, we have  for real values  $t$ and $s\geq0$
\begin{equation*}
 \begin{split}\det[I-K^{P2}_{\alpha,s}]&= \exp\left(-\int_{t'}^{+\infty}(\sum_{i=1}^2v_i(\tau+s';s';\alpha-\frac {1}{2},0)-u(\tau;\alpha-\frac {1}{2},0))(\tau-t')d\tau\right)\\
 &\quad \exp\left(-\int_{t'}^{+\infty}(\sum_{i=1}^2v_i(\tau+s';s';\alpha+\frac {1}{2},0)-u(\tau;\alpha+\frac {1}{2},0))(\tau-t')d\tau\right),
 \end{split}
\end{equation*}
where $s'=-2^{2/3}s^2$, $t'=-2^{-1/3}t$,
 $v_i(x)=v_i(x;s;\alpha,\omega)$ are  the solutions to the coupled $\textrm{P}_{2}$ equations \eqref{int-equation v}  with the properties stated in Theorem \ref{theorem-v-1} and
$u(t)=u(t;\alpha,\omega)$ is the $\textrm{P}_{34}$ transcendent determined by the boundary condition \eqref{thm: painleve  positive infinity}.
By the Hamiltonian systems of equations \eqref{eq:Hamiltonian system} and the B\"{a}cklund transformations \eqref{BT-1},
we have
\begin{equation}\label{q-vi}
\sum_{i=1}^2(v_i(x;\alpha+\frac {1}{2})+v_i(x;\alpha-\frac{1}{2}))= w_2^2(x;\alpha)-(x-s).
\end{equation}
Similarly, using \eqref{BT-P34} and the relation between the $\textrm{P}_2$ and $\textrm{P}_{34}$ transcendents in \eqref{p34-p2-relation}, we get
\begin{equation} \label{u1u2-y-HM}
u(x;\alpha+\frac {1}{2},0)+u(x;\alpha-\frac{1}{2},0)=2^{2/3}y^2(-2^{1/3}x;\alpha)-x,
\end{equation}
where $y(x;\alpha)$ is the Hastings-McLeod solution to the  $\textrm{P}_{2}$ equation described in \eqref{HM solution}.
Combining the above three formulas, we arrive at \eqref{int-TW-pII}. This completes the proof of Theorem \ref{thm:TW-P2}. \hfill $\Box$


\section{Large gap asymptotics}\label{sec:large gap asy}

Finally, we apply the Deift-Zhou nonlinear steepest descent method to the RH problem for $\Phi$ and obtain the asymptotics of $\Phi(z; s+t,s)$ as $s\to-\infty$. 
Then, we evaluate the large gap asymptotics for the Fredholm determinants as $s\to-\infty$.

\subsection{Nonlinear steepest descent analysis of $\Phi$ as $s\rightarrow-\infty$} \label{Sec:steepest-large-gap}

As we need the asymptotics of $\Phi(z; s+t,s)$ for $t$ fixed and  $s\to-\infty$, let us focus on the case $s<0$.
In the first transformation, we rescale the variable and define:
\begin{equation}\label{S}
S(\xi)=\left(
         \begin{array}{cc}
           1 & 0 \\
           ir_2 & 1 \\
         \end{array}
       \right)
\Phi(|s|\xi)e^{|s|^{\frac 32}\lambda(\xi)\sigma_3} ,\end{equation}
where
$$\lambda(\xi):=\frac{2}{3}\xi^{\frac {3}{2}}+\frac {s+t}{|s|}\xi^{\frac {1}{2}}=\xi^{\frac12}\left(\frac23\xi-1-\frac{t} {s} \right).$$
\begin{figure}[h]
 \begin{center}
   \includegraphics[width=5 cm]{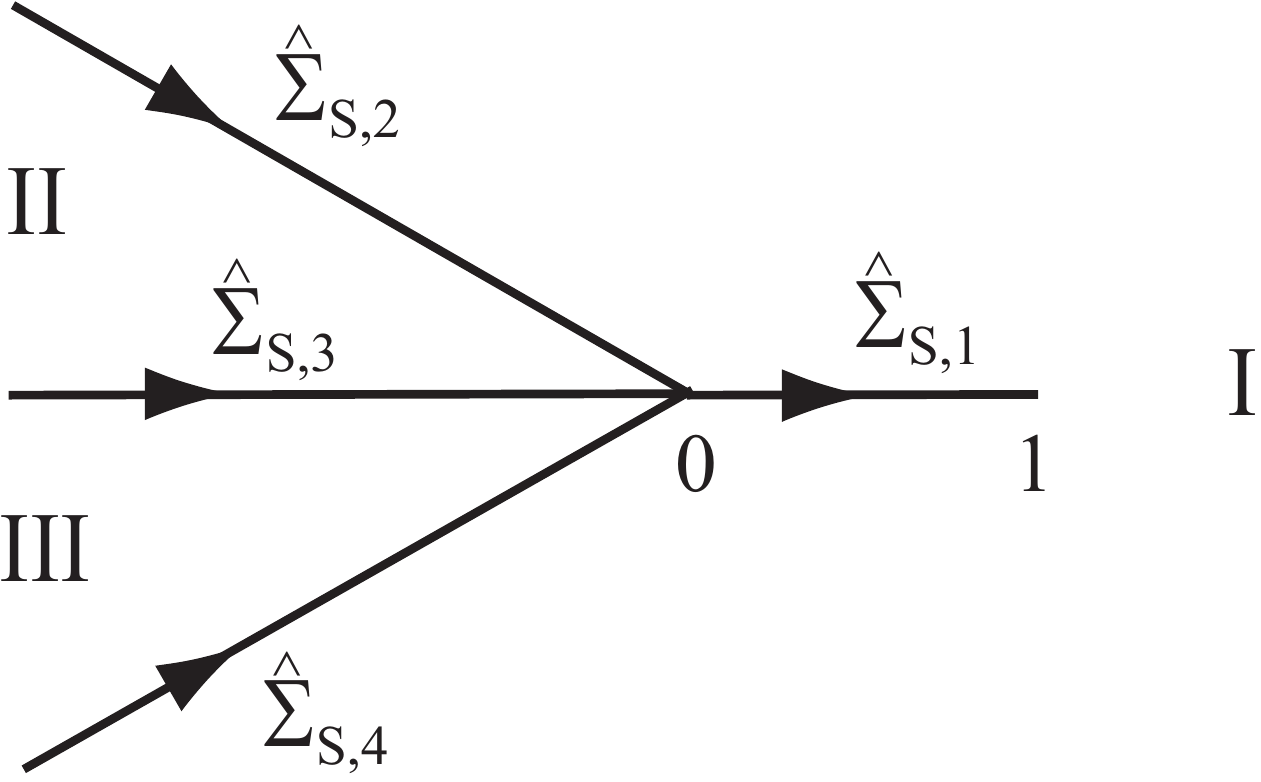} \end{center}
 \caption{\small{Contours  for the RH problem S.}}
 \label{figure-S}
 \end{figure}
Then $S$ satisfies the following RH problem.
\begin{rhp} The function $S(\xi)$ satisfies the following properties:
\begin{itemize}
  \item[(a)]  $S(\xi)$  is analytic in
  $\mathbb{C}\backslash \hat{\Sigma}_{S,i}$, where $\hat{\Sigma}_{S,i}$ are indicated in Fig. \ref{figure-S};

  \item[(b)]  $S(\xi)$ satisfies the jump condition
  \begin{equation}\label{S-jumps}
  S_+(\xi)=S_-(\xi) J_{S,i}(\xi) \qquad  \textrm{for } \xi\in\hat{\Sigma}_{S,i}
  \end{equation}
with
$$J_{S,1}(\xi)=\left(\begin{array}{cc}
                                                              e^{2\pi i \alpha} & 0 \\
                                                            0 &  e^{-2\pi i \alpha}
                                                           \end{array}\right),  \quad  J_{S,2}(\xi)=\left(\begin{array}{cc}
                                                             1 & 0 \\
                                                            e^{2|s|^{\frac 32}\lambda(\xi)} e^{2\pi i \alpha} & 1
                                                           \end{array}\right), $$
                                                           $$      J_{S,3}(\xi)=\left(\begin{array}{cc}
                                                             0 & 1 \\
                                                            -1 & 0
                                                           \end{array}\right),\quad J_{S,4}(\xi)=\left(\begin{array}{cc}
                                                             1 & 0 \\
                                                             e^{2|s|^{\frac 32}\lambda(\xi)}e^{-2\pi i \alpha} & 1
                                                           \end{array}\right);$$

  \item[(c)]   The asymptotic behavior of $S(\xi)$  at infinity
  \begin{equation}\label{S at infinity} S(\xi)=
    \left (I+O\left (\frac 1 \xi\right )\right)
  (|s|\xi)^{-\frac{1}{4}\sigma_3}\frac{I+i\sigma_1}{\sqrt{2}}  \qquad \textrm{as } \xi \rightarrow \infty ;
  \end{equation}

\item[(d)] The asymptotic behavior of $S(\xi)$  at $\xi=0,1$ is the same as that of $\Phi(z)$ at $z=0,|s|$, given in \eqref{Phi at 0} and \eqref{Phi at -s}.
\end{itemize}
\end{rhp}

Note that
\begin{equation}
  \Re \lambda(\xi)  < 0, \qquad \textrm{for } \xi \in \hat{\Sigma}_{S,2} \cup \hat{\Sigma}_{S,4},
\end{equation}
the off-diagonal entries of the jump matrices  are exponentially small as $|s| \to +\infty$. Neglecting the exponential small terms, we arrive at the following outer parametrix.

\begin{rhp}  The   function $S^{(\infty)}(\xi)$ satisfies the following properties:
\begin{itemize}
  \item[(a)]   $S^{(\infty)}(\xi)$ is analytic in
  $\mathbb{C}\backslash (-\infty,1]$;

  \item[(b)]  $S^{(\infty)}(\xi)$ satisfies the jump condition
  \begin{align}\label{Parametrix outside-jump}
  S^{(\infty)}_+(\xi)=S^{(\infty)}_-(\xi) \begin{cases}
    \left(\begin{array}{cc}
                                                             0 &  1\\
                                                            -1& 0
                                                           \end{array}\right), &   \xi\in (-\infty,0), \vspace{0.2cm}  \\
      e^{2\pi \alpha i\sigma_3}, & \xi\in (0,1);
  \end{cases}
\end{align}

  \item[(c)] At infinity, $S^{(\infty)}(\xi)$ satisfies the same asymptotics as $S(\xi)$ in \eqref{S at infinity}.

\end{itemize}
\end{rhp}

Similar to \textbf{RH problem \ref{rhp:outer1}}, the solution is given explicitly as
\begin{equation}\label{parametrix outside solution}
S^{(\infty)}(\xi)=\left(
                                                                         \begin{array}{cc}
                                                                           1 & 0 \\
                                                                           2\alpha i |s|^{1/2} & 1 \\
                                                                         \end{array}
                                                                       \right)
(|s|\xi)^{-\frac{1}{4}\sigma_3}\frac{I+i\sigma_1}{\sqrt{2}}
   h(\xi)^{\sigma_3}  ,
\end{equation}
where $ h(\xi)=\left(\frac {\sqrt{\xi}-1}{\sqrt{\xi}+1}\right)^{\alpha} $ is defined in \eqref{h-function} and satisfies the following expansion
\begin{equation}\label{h-expansion-infty}
h(\xi)=1-2\alpha\frac {1}{ \xi^{\frac 12}}+2\alpha^2\frac {1}{\xi}-\frac 23(\alpha+2\alpha^3) \frac {1}{\xi^{\frac 32}}+\cdots, \quad \textrm{ as $\xi \to \infty$.}
\end{equation}
For later use, we also compute the following refined expansion
\begin{equation}\label{Parametrix-outside-expand} S^{(\infty)}(\xi)=
    \left (I+\frac {1}{\xi}\left(
                             \begin{array}{cc}
                               2\alpha^2 & 2\alpha i|s|^{-1/2} \\
                             -\frac 23(\alpha-4\alpha^3)i |s|^{1/2} & -2\alpha^2\\
                             \end{array}
                           \right)
    +O\left (\frac 1 {\xi^2}\right )\right)
  (|s|\xi)^{-\frac{1}{4}\sigma_3}\frac{I+i\sigma_1}{\sqrt{2}}.
  \end{equation}

To construct the local parametrix near $\xi = 0$, we take the conformal mapping
\begin{equation}\label{conformal mapping}
\lambda^2(\xi)=\xi(1+\frac t {s}-\frac23\xi)^2,\quad |\xi|\leq\frac12.
\end{equation}
The local parametrix is constructed in terms of the Bessel functions as follows
\begin{equation}\label{parametrix local}
S^{(0)}(\xi)=E(\xi)Z_0(|s|^3\lambda(\xi)^2)\left\{\begin{array}{ll}
                                                     e^{(|s|^{\frac 32}\lambda(\xi)+\alpha\pi i)\sigma_3}, \quad &\arg \xi\in (0, \pi),\\
                                                     e^{(|s|^{\frac 32}\lambda(\xi)-\alpha\pi i)\sigma_3},\quad &\arg \xi\in (-\pi, 0),
                                                   \end{array}
\right.
    \end{equation}
where $E(\xi)$ is an analytic function in the disk $|\xi|\leq 1/2$, and the function $Z_0(z)$ is explicitly given in terms of the modified Bessel functions as follows:
 \begin{equation}\label{phi related to bessel}
Z_0(z)=\pi^{\frac 12 \sigma_3} \left\{
 \begin{array}{ll}
   \left(
                               \begin{array}{cc}
                                 I_{0}(\sqrt{z}) &\frac{i}{\pi} K_{0}(\sqrt{z})  \\
                                 \pi i\sqrt{z}    I'_{0}(\sqrt{z})& -\sqrt{z}    K'_{0}(\sqrt{z}) \\
                                 \end{array}
                             \right),  &  \quad \mbox{for} ~  z \in \mathrm{I}, \\[0.5cm]
  \left(
                               \begin{array}{cc}
                                 I_{0}(\sqrt{z}) &\frac{i}{\pi} K_{0}(\sqrt{z})  \\
                                 \pi i\sqrt{z}    I'_{0}(\sqrt{z})& -\sqrt{z}    K'_{0}(\sqrt{z}) \\
                                 \end{array}
                             \right)  \left(
                               \begin{array}{cc}
                                 1 & 0 \\
                                 -1 & 1 \\
                                 \end{array}
                             \right),    & \mbox{for}~  z\in \mathrm{II},  \\[0.5cm]
   \left(
                               \begin{array}{cc}
                                 I_{0}(\sqrt{z}) &\frac{i}{\pi} K_{0}(\sqrt{z})  \\
                                 \pi i\sqrt{z}    I'_{0}(\sqrt{z})& -\sqrt{z}    K'_{0}(\sqrt{z}) \\
                                 \end{array}
                             \right)\left(
                               \begin{array}{cc}
                                 1 & 0 \\
                                1 & 1 \\
                                 \end{array}
                             \right), & \mbox{for}~z\in \mathrm{III},
 \end{array}
 \right .
\end{equation}
for $\arg z\in (-\pi, \pi)$. It is well-known that, the above function $Z_0(z)$ satisfies the following model RH problem; see \cite{kmvv}.

\begin{figure}[ht]
 \begin{center}
   \includegraphics[width=5 cm]{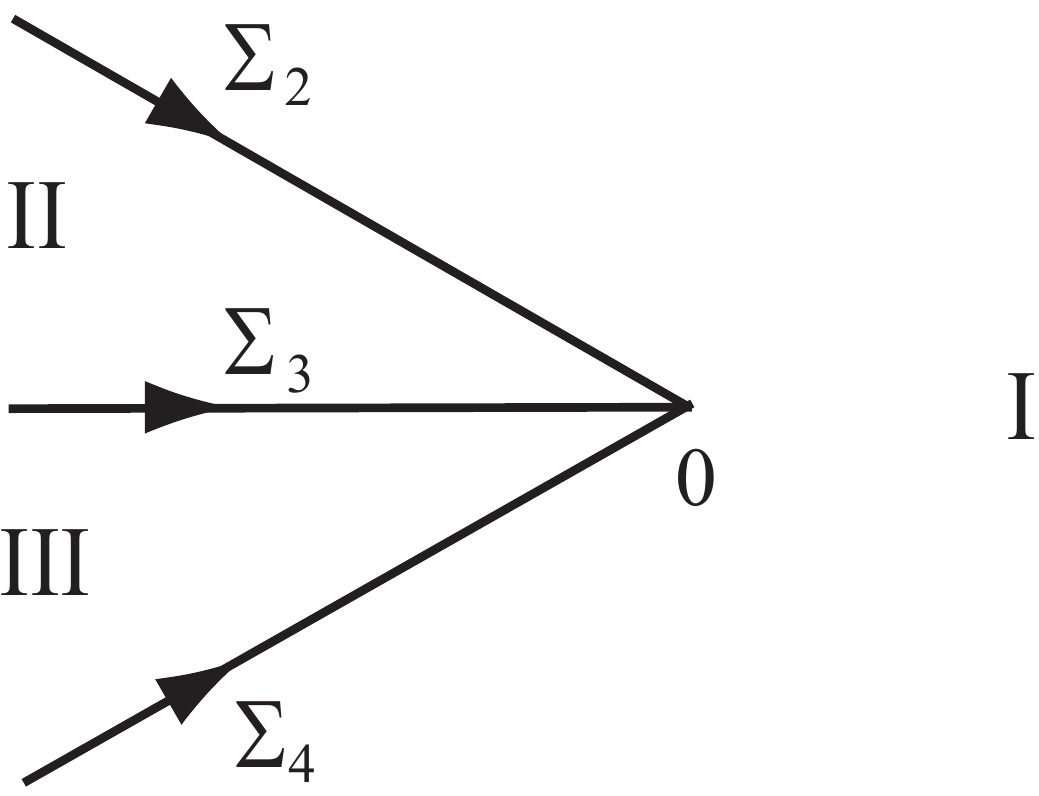} \end{center}
 \caption{\small{Contours  for the Bessel model RH problem of $Z_0$}}
 \label{figure: Bessel model RHP}
 \end{figure}
\begin{rhp} $Z_0(z)$ satisfies the following properties:
\begin{itemize}
  \item[(a)]   $Z_0(z)$ is analytic in
  $\mathbb{C}\backslash \Sigma_{i}$, where the contours $\Sigma_{i}$ are illustrated in Fig. \ref{figure: Bessel model RHP};

  \item[(b)]   $Z_0(z)$  satisfies the jump condition
  \begin{equation}
  Z_{0+}(z)=Z_{0-}(z) J_{i}(z), \quad  z\in \Sigma_i, \quad i=2,3,4,
  \end{equation}
where
$$J_{2}=\left(\begin{array}{cc}
                                                             1 & 0 \\
                                                         1 & 1
                                                           \end{array}\right), \quad J_{3}=\left(\begin{array}{cc}
                                                             1 & 0 \\
                                                            1 & 1
                                                           \end{array}\right), \quad J_{4}=\left(\begin{array}{cc}
                                                             0 & 1 \\
                                                            -1 & 0
                                                           \end{array}\right);$$

  \item[(c)]   The asymptotic behavior of $Z_0(z)$  at infinity
  \begin{equation}\label{phi at infinity-refine}
Z_0(z)=z^{-\frac{1}{4}\sigma_3}\frac{I+i\sigma_1}{\sqrt{2}}
   \left (I+\frac {1
   }{8\sqrt{z}}\left(
                     \begin{array}{cc}
                       -1 & -2i \\
                       -2i & 1 \\
                     \end{array}
                   \right)+O\left (\frac 1{z}\right )\right)e^{\sqrt{z}\sigma_3} \qquad \textrm{as } z \rightarrow \infty .
 \end{equation}
\end{itemize}
\end{rhp}

To match the local parametrix $S^{(0)}(\xi)$ with the outer parametrix $S^{(\infty)}(\xi)$ in \eqref{parametrix outside solution} on $|\xi|=\frac12$, we choose the analytic pre-factor $E(\xi)$ in \eqref{parametrix local} as
\begin{equation}\label{E-1 matrix}
E(\xi)= S^{(\infty)}(\xi)\left\{\begin{array}{ll}
                                                     e^{ -\pi i\alpha \sigma_3}\frac{I-i\sigma_1}{\sqrt{2}} \left ( |s|^3\lambda(\xi)^2\right )^{\frac{1}{4}\sigma_3}, \quad& \arg \xi\in (0, \pi),\\[.2cm]
                                                     e^{ \pi i\alpha\sigma_3}\frac{I-i\sigma_1}{\sqrt{2}}\left ( |s|^3\lambda(\xi)^2\right )^{\frac{1}{4}\sigma_3}, \quad&   \arg \xi\in (-\pi, 0).
                                                   \end{array}
\right.
\end{equation}
Then,  $E(\xi)$ is analytic in  the disk $|\xi|\leq 1/2$. Moreover the following matching condition is fulfilled
\begin{equation}\label{matching condition}
 S^{(0)}(\xi)=(I+O(1/s) )S^{(\infty)}(\xi),\quad |\xi|=\frac12.
\end{equation}

For later use, we compute $E(0)$ and $E'(0)$. Let
$$\hat{E}(\xi)=\left(
                                                                         \begin{array}{cc}
                                                                           1 & 0 \\
                                                                           -2\alpha i |s|^{1/2} & 1 \\
                                                                         \end{array}
                                                                       \right)E(\xi),$$
then we get from \eqref{E-1 matrix}
\begin{equation}
\hat{E}(\xi)=\frac 12(\hat{h}(\xi)+\hat{h}(\xi)^{-1})(s^2\lambda(\xi)^2/\xi)^{\frac14\sigma_3}+\frac 12 (|s|\xi)^{-\frac 14\sigma_3}(\hat{h}(\xi)-\hat{h}(\xi)^{-1})\sigma_2(|s|^3\lambda(\xi)^2)^{\frac 14\sigma_3},
\end{equation}
where
\begin{equation}\label{h-hat}
\hat{h}(\xi):=\left(\frac {1-\sqrt{\xi}}{1+\sqrt{\xi}}\right)^{\alpha}=1-2\alpha \xi^{\frac 12}+2\alpha^2\xi-\frac 23(\alpha+2\alpha^3) \xi^{\frac 32}+O(\xi)^2
\end{equation}
as $\xi \to 0.$ Thus, we find
\begin{equation}\label{E at zero-1}
 \hat{E}(0)=\left(\begin{array}{cc}
    |s|^{\frac {1}{2}}(1+\frac {t}{s})^{\frac {1}{2}} & 2\alpha i|s|^{-1}(1+\frac {t}{s})^{-\frac{1}{2}} \\
    0& |s|^{-\frac {1}{2}}(1+\frac {t}{s})^{-\frac {1}{2}} \\
  \end{array}
\right), \end{equation}
and
\begin{equation}\label{E at zero-2}
\hat{E}_{21}'(0)=
    - 2\alpha i|s|(1+\frac {t}{s})^{1/2} .
\end{equation}

In the final transformation, we define
\begin{equation}\label{R}
R(\xi)= \begin{cases}
  S(\xi)(S^{(\infty)}(\xi))^{-1}, &  |\xi|>1/2 \\
 S(\xi)(S^{(0)}(\xi))^{-1}, & |\xi|< 1/2.
\end{cases}
\end{equation}
Then, $R$ satisfies the following RH problem.

\begin{rhp} The function $R(\xi)$ defined in \eqref{R} satisfies the following properties:
\begin{itemize}
  \item[(a)]   $R(\xi)$ is analytic in
  $\mathbb{C}\backslash \Sigma_{R}$;

  \item[(b)]  $R(\xi)$ satisfies the jump condition  $ R_+(\xi)=R_-(\xi)J_R(\xi)$,
  $$\begin{array}{ll}
                             J_R(\xi)=S^{(0)}(\xi)(S^{(\infty)}(\xi))^{-1}, &  |\xi|=\frac 12,\\
                             J_R(\xi)=S^{(\infty)}(\xi) J_S (S^{(\infty)}(\xi))^{-1}, &  \xi\in \Sigma_{R}\setminus \{|\xi|=\frac 12\};
                             \end{array}$$

  \item[(c)] As $\xi \to \infty$, $R(\xi)=I+O(1/\xi).$
\end{itemize}
\end{rhp}
The jump  $J_R-I$ is exponentially small for $\xi \in \Sigma_{R}\setminus \{|\xi|=\frac 12\}$. On the circle $|\xi|=\frac 12$, from  asymptotics of the Bessel parametrix in \eqref{phi at infinity-refine}, we have
\begin{equation}\label{R-jump-estimate}
J_R=|s|^{-\frac 14\sigma_3}\biggl(I+|s|^{-3/2}\left(
                                                                         \begin{array}{cc}
                                                                           1 & 0 \\
                                                                           2\alpha i  & 1 \\
                                                                         \end{array}
                                                                       \right)
J_1(\xi)\left(
                                                                         \begin{array}{cc}
                                                                           1 & 0 \\
                                                                           -2\alpha i & 1 \\
                                                                         \end{array}
                                                                       \right)+O((-s)^{-5/2})\biggr)|s|^{\frac 14\sigma_3}
,
 \end{equation}
where $J_1(\xi)$ is given by
\begin{equation}J_{1}(\xi)=\left(
  \begin{array}{cc}
    \frac {-\hat{h}(\xi)^2+\hat{h}(\xi)^{-2}}{8\xi^{1/2} (1+\frac {t}{s}-\frac {2}{3}\xi)} & \frac{-i((\hat{h}(\xi)+\hat{h}(\xi)^{-1})^2-3)}{8\xi (1+\frac {t}{s}-\frac {2}{3}\xi)}  \\
    \frac{-i((\hat{h}(\xi)+\hat{h}(\xi)^{-1})^2-1)}{8 (1+\frac {t}{s}-\frac {2}{3}\xi)} &\frac {\hat{h}(\xi)^2-\hat{h}(\xi)^{-2}}{8\xi^{1/2} (1+\frac {t}{s}-\frac {2}{3}\xi)} \\
  \end{array}
\right)
\end{equation}
with $\hat{h}(\xi)$ defined in \eqref{h-hat}. Note that, the functions $\hat{h}(\xi)+\hat{h}(\xi)^{-1}$, $(\hat{h}(\xi)-\hat{h}(\xi)^{-1})/\xi^{1/2}$
are analytic near the origin and satisfy
\begin{equation}\label{h-estimate-2}\begin{array}{l}
    \hat{h}(\xi)+\hat{h}(\xi)^{-1}=2+4\alpha^2\xi+O(\xi^2), \\
    (\hat{h}(\xi)-\hat{h}(\xi)^{-1})/\xi^{1/2}=-4\alpha-\frac 43 (\alpha+2\alpha^3)\xi+O(\xi^2).
  \end{array}
\end{equation}
From the expansion of the jump $J_R$ in \eqref{R-jump-estimate}, we get the following expansion
\begin{equation}\label{R-expand}R(\xi)=|s|^{-\frac 14\sigma_3}\left(I
+\frac 1 {|s|^{3/2}}\left(
\begin{array}{cc}
1 & 0 \\
 2\alpha i  & 1 \\
 \end{array}
 \right) R_1(\xi)\left(
  \begin{array}{cc}
  1 & 0 \\
  -2\alpha i  & 1 \\
  \end{array}
  \right)+O(s^{-3})\right)|s|^{\frac 14\sigma_3},
\end{equation}
where $R_1(\xi)$ satisfies a RH problem as follows.
\begin{rhp} The function $R_1(\xi)$ satisfies the following properties:
  \begin{itemize}
  \item[(a)]   $R_1(\xi)$ is analytic in   $\mathbb{C}\backslash \{|\xi|=1/2\}$;

  \item[(b)]  $R_{1+}(\xi)-R_{1-}(\xi)=J_1(\xi)$, for  $|\xi|=1/2;$

  \item[(c)] As $\xi \to \infty$, $R_1(\xi)=O(1/\xi)$.
\end{itemize}
\end{rhp}

Performing a residue computation, we get the solution to the above RH problem
\begin{equation}\label{R-1}
R_1(\xi)=-\left(
  \begin{array}{cc}
    \frac {-\hat{h}(\xi)^2+\hat{h}(\xi)^{-2}}{8\xi^{1/2} (1+\frac {t}{s}-\frac {2}{3}\xi)} & \frac{-i((\hat{h}(\xi)+\hat{h}(\xi)^{-1})^2-4+\frac 23\xi(1+\frac ts)^{-1})}{8\xi  (1+\frac {t}{s}-\frac {2}{3}\xi)}  \\
    \frac{-i((\hat{h}(\xi)+\hat{h}(\xi)^{-1})^2-1)}{8 (1+\frac {t}{s}-\frac {2}{3}\xi)} &\frac {\hat{h}(\xi)^2-\hat{h}(\xi)^{-2}}{8\xi^{1/2} (1+\frac {t}{s}-\frac {2}{3}\xi)} \\
  \end{array}
\right) \quad \textrm{for  $|\xi|< \frac 12$}
\end{equation}
and
\begin{equation}\label{R-1-outside} R_1(\xi)=\left(
                                                                       \begin{array}{cc}
                                                                         0 & \frac {1}{8i\xi(1+\frac ts)} \\
                                                                         0 & 0 \\
                                                                       \end{array}
                                                                     \right) \quad \textrm{ for  $|\xi|>\frac 12$. }
\end{equation}
 Particularly,  we get by substituting \eqref{h-estimate-2} into \eqref{R-1}
\begin{equation}\label{R-1 at zero}
 (R_1'(0))_{21}=
             i(2\alpha^2(1+\frac ts)^{-1}+\frac 14(1+\frac ts)^{-2}) .
\end{equation}
Tracing back the transformation $ \Phi\to S\to R$ gives
\begin{equation}\label{Phi and R-P}
\Phi(|s|\xi)=\left(
                                                                              \begin{array}{cc}
                                                                                1 & 0 \\
                                                                               -i r_2 & 1 \\
                                                                              \end{array}
                                                                            \right)
R(\xi) S^{(\infty)}(\xi)e^{-|s|^{3/2}\lambda(\xi)\sigma_3}.\end{equation}
By \eqref{R-expand} and \eqref{R-1-outside}, we get
\begin{equation}\label{R-expand-infty}
R(\xi)=I+\frac 1{8(1+\frac ts)\xi}\left(
\begin{array}{cc}
-2\alpha  |s|^{-3/2} & -i|s|^{-2} \\
 -4\alpha^2 i|s|^{-1}  & 2\alpha |s|^{-3/2} \\
 \end{array}
 \right) +O(|s|^{-5/2}),
\end{equation}
for $|\xi|>1/2$.

From the steepest descent analysis done above, we are able to derive the following asymptotics for the functions $v_i$, $w_i$ and the Hamiltonian $H$. These results will be used in the derivation of the large gap asymptotics for the Fredholm determinants in the next subsection.
\begin{pro}\label{Pro: asy phi negative infty}
For fixed $t$, we have the following asymptotics, as $s\to -\infty$
\begin{equation}\label{eq:asy-v-1-negative infty}
v_1(s+t;s)=-\frac{s+t}{2}+O(s^{-1/2}),
\end{equation}
\begin{equation}\label{eq:asy-v-2-negative infty}
v_2(s+t;s)=\alpha \frac{1}{\sqrt{|s|}}+O(1/s),
\end{equation}
\begin{equation}\label{eq:asy-w-1-negative infty}
w_1(s+t;s)=\frac{1}{2(s+t)}+O(s^{-3/2}),
\end{equation}
\begin{equation}\label{eq:asy-w-2-negative infty}
w_2(s+t;s)=-\sqrt{|s|}+O(s^{-1/2}),
\end{equation}
\begin{equation}\label{eq:asy-H-negative infty}
H(s+t;s)=\frac{(s+t)^2}{4}-2\alpha\sqrt{|s|}-\frac{1}{8}\frac{1}{s+t}+O(s^{-3/2}),
\end{equation}
\end{pro}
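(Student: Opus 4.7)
The plan is to unwind the steepest descent analysis of Section~\ref{Sec:steepest-large-gap} and read off the desired asymptotic data from the small-norm estimate $R(\xi)=I+O(|s|^{-3/2})$ together with the explicit outer and Bessel parametrices. Structurally this parallels the derivation in Section~\ref{sec:vi+}, but uses the $s\to-\infty$ steepest descent in place of the $x\to+\infty$ one.

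First I would derive the asymptotic of $r_2$, and hence of $H$. Tracing back \eqref{Phi and R-P} in the outer region $|\xi|>1/2$ and comparing the $1/z$ coefficient of the result with \eqref{Phi at infinity}, combining the $1/\xi$ coefficient of $S^{(\infty)}$ from \eqref{Parametrix-outside-expand} with that of $R$ from \eqref{R-expand-infty}, gives $(\Phi_{-1})_{12} = i/(8(s+t)) + 2\alpha i\sqrt{|s|} + O(|s|^{-3/2})$. Hence $r_2 = -i(\Phi_{-1})_{12}$ produces the claimed leading behaviour, and Proposition~\ref{prop-Hamilton} ($H=x^2/4-r_2$) yields \eqref{eq:asy-H-negative infty}.

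Next, for $v_2$, the point $z=-s$ corresponds to $\xi=1$, which sits in the outer region. Factoring $h(\xi)^{\sigma_3}$ near $\xi=1$ as a diagonal analytic piece times $(\xi-1)^{\alpha\sigma_3}$, and using $(\xi-1)^{\alpha\sigma_3}=|s|^{-\alpha\sigma_3}(z+s)^{\alpha\sigma_3}$, I would read $\Phi^{(-s)}_0(s+t;s)$ off the representation \eqref{Phi and R-P} in a form directly comparable to \eqref{Phi at -s}. Then \eqref{eq:Phi-0-v} yields $v_2(s+t;s)=\alpha/\sqrt{|s|}+O(1/s)$. Having $r_2$ and $v_2$, the identity \eqref{equation r and v} produces $v_1=r_{2x}-v_2-x/2$; differentiating the leading $1/(8x)$ piece of $r_2$ gives $r_{2x}=O(|s|^{-2})$, so $v_1(s+t;s)=-(s+t)/2+O(|s|^{-1/2})$.

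For $w_1$ and $w_2$ I would use the Riccati form of the Hamilton equations \eqref{eq:Hamiltonian system}, namely $w_1'+w_1^2 = 2(v_1+v_2+x/2)$ and $w_2'+w_2^2 = 2(v_1+v_2+(x-s)/2)$. Plugging in the asymptotics of $v_1,v_2$ already derived gives $w_2^2=|s|+o(|s|)$; the sign $w_2\sim-\sqrt{|s|}$ is pinned down from $w_2=(v_2'-2\alpha)/(2v_2)$ together with $v_2'=o(1)$. Similarly, balancing $w_1^2$ against $w_1'$ in the first Riccati gives $w_1\sim 1/(2(s+t))$ with error $O(|s|^{-3/2})$. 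The principal technical obstacle is justifying the differentiation of the $r_2$-expansion in $x$, which requires the steepest descent to be uniform in $x$ near $x=s+t$; this uniformity is standard since $\lambda(\xi)$, the outer parametrix, and the local Bessel parametrix all depend smoothly on $t$, and the matching estimate \eqref{matching condition} is uniform for $t$ in compact sets.
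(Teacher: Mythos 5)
Your route for $H$, $v_2$ and $v_1$ is sound, and partly cleaner than the paper's own proof: extracting $r_2=-i(\Phi_{-1})_{12}$ from the $1/\xi$ terms of \eqref{Parametrix-outside-expand} and \eqref{R-expand-infty} indeed gives $r_2=2\alpha\sqrt{|s|}+\tfrac{1}{8(s+t)}+O(|s|^{-3/2})$ and hence \eqref{eq:asy-H-negative infty}; reading $\Phi^{(-s)}_0$ off \eqref{Phi and R-P} at $\xi=1$ (where $R$ is analytic and $I+O(|s|^{-3/2})$) and using \eqref{eq:Phi-0-v}, equivalently \eqref{v2-Phi'}, gives \eqref{eq:asy-v-2-negative infty} without differentiating any expansion; and the differentiation needed for $v_1$ is justifiable by the analyticity-in-$t$/Cauchy-estimate argument you indicate. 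For comparison, the paper instead evaluates $(\Phi^{(-s)}_0)_{11}$ and $(\Phi^{(-s)}_1)_{11}$ from $S^{(\infty)}(1)$ and feeds them into the exact identities \eqref{eq:integral w-2}--\eqref{eq:integral-v-2}, then uses the Hamiltonian relations for the remaining quantities.

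The genuine gap is in your step for $w_1$ (and, at the margin, $w_2$). The Riccati identity $w_1'+w_1^2=2(v_1+v_2+\tfrac{x}{2})=2r_{2x}$, with right-hand side known only up to $O(|s|^{-3/2})$, cannot by itself determine $w_1$: the solutions of $w'+w^2=0$ are $1/(x-C)$ with arbitrary $C$, and even for the exact model $w'+w^2=-\tfrac{1}{4x^2}$ the general solution is $\tfrac{1}{2x}+\tfrac{1}{x(\ln|x|+C)}$, so ``balancing $w_1'$ against $w_1^2$'' yields neither the constant $\tfrac12$ nor the claimed $O(s^{-3/2})$ error in \eqref{eq:asy-w-1-negative infty}. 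What pins $w_1$ down is the first-order, non-Riccati relation $v_{1x}=2v_1w_1$ in \eqref{eq:Hamiltonian system}, i.e.\ $w_1=v_{1x}/(2v_1)$, combined with $v_1=-\tfrac{x}{2}+O(|s|^{-1/2})$ and $v_{1x}=-\tfrac12+O(|s|^{-1/2})$ (the latter again via your uniform-in-$t$ differentiation argument); this gives \eqref{eq:asy-w-1-negative infty} with the stated error. Similarly, your sign-fixing formula $w_2=(v_2'-2\alpha)/(2v_2)$ degenerates at $\alpha=0$ (where $v_2\equiv 0$), and with $v_2$ known only to $O(1/s)$ it gives $w_2=-\sqrt{|s|}+O(1)$, weaker than \eqref{eq:asy-w-2-negative infty}; to close this you either need the sharper error for $v_2$ that your own RH computation actually yields (the residue computation at $\xi=1$ gives an $O(|s|^{-2})$ error), or, uniformly in $\alpha$, the exact identity \eqref{eq:integral w-2}, $w_2=\tfrac{d}{dx}\ln(\Phi^{(-s)}_0)_{11}$, which is how the paper obtains $w_2$.
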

\begin{proof}
Recall that the outer parametrix $S^{(\infty)}(\xi)$ is given explicitly in \eqref{parametrix outside solution}. From the relation \eqref{S}, the asymptotics of $\Phi(z)$ at $z=-s$ are obtained from $S^{(\infty)}(1)$. More precisely, we have
\begin{equation}\label{eq:asy-phi-0-negative infty}
\ln\left(\Phi^{(-s)}_0\right)_{11}(s+t;s)=\frac{1}{3}|s|^{3/2}-t\sqrt{|s|}-(\alpha+\frac{1}{4})\ln|s|-(2\alpha+\frac{1}{2})\ln 2+O(1/s),
\end{equation}
and
\begin{equation}\label{eq:asy-phi-1-negative infty}
\left(\Phi^{(-s)}_1\right)_{11}(s+t;s)=-\frac{1}{2}\sqrt{|s|}-\frac{1}{2}\frac{t}{\sqrt{|s|}}-\frac{\alpha}{2}\frac{1}{|s|}+O(s^{-3/2}).
\end{equation}
Then, the expansions for $v_2$ in \eqref{eq:asy-v-2-negative infty} and $w_2$ in \eqref{eq:asy-w-2-negative infty} follow from the above formulas and the differential identities in \eqref{eq:integral w-2} and \eqref{eq:integral-v-2}. Based on the relations among $v_i$, $w_i$ and $H$ in \eqref{eq:Hamiltonian system} and \eqref{def:Hamiltonian-CPII}, the other asymptotic expansions follow directly.
\end{proof}

\subsection{Large gap asymptotics: proof of Theorem \ref{theorem-large gap asy-P34} }

In Theorem \ref{thm:TW-P34}, we have successfully expressed the Fredholm determinant of the $\textrm{P}_{34}$ kernel as an integral of solutions to the coupled $\textrm{P}_{2}$ equations \eqref{int-equation v}. This important representation can be further rewritten in terms of the tau function for the coupled $\textrm{P}_{2}$ system. Quite recently, in \cite{Bot:Its:Prok2017,Its:Lis:Pro,Its:Lis:Tyk2015,Its:Prok2016}, the asymptotics of the tau functions for the classical Painlev\'e equations have been successfully evaluated including the constant terms. In this section, we will derive the large gap asymptotics by evaluating the asymptotic of the tau function for the coupled $\textrm{P}_{2}$ system.

From \eqref{equation r and v} and \eqref{eq:Hamiltotion -r-2}, we have
\begin{equation} \label{Hamilton-prime}
  H'(x) = -v_1(x) - v_2(x).
\end{equation}
Then, the Tracy-Widom formula \eqref{int-TW-p34-1} obtained in Theorem \ref{thm:TW-P34} can be written as
\begin{equation*}
  \ln\det[I-K^{P34}_{\alpha, \omega, s}] = \int_t^{+\infty} \biggl( H'(x+s) + \frac{\alpha}{\sqrt{x}} - \frac{\alpha^2}{x^2} \biggr) (x-t) dx + \int_t^{+\infty} \biggl( u(x) - \frac{\alpha}{\sqrt{x}} + \frac{\alpha^2}{x^2} \biggr) (x-t) dx.
\end{equation*}
Note that both integrals above are convergent due to the asymptotics of $u(x)$ in \eqref{thm: painleve  positive infinity} and $H(x)$ in \eqref{eq:asy-H}. Moreover, an integration by parts of the first integral gives us
\begin{equation} \label{p34-fred-new-expression}
  \ln\det[I-K^{P34}_{\alpha, \omega, s}] = - \int_{s+t}^{+\infty} \biggl( H(\tau) + 2\alpha \sqrt{|\tau -s |} + \frac{\alpha^2}{\tau -s } \biggr)  d\tau +\int_t^{+\infty}(\tau-t)\left(u(\tau)-\frac {\alpha}{|\tau|^{1/2}}+\frac {\alpha^2}{\tau^2}\right )d\tau.
\end{equation}
Note that, according to  the theory of isomonodromic tau-functions in the sense of Jimbo-Miwa-Ueno \cite{Jimbo:Miwa:Ueno1981}, the tau function can be defined as $d_x \ln \tau=H(x)dx$, where $H$ is the Hamiltonian. Now, the only task for us is to compute the asymptotics of the first integral as $s \to -\infty$.

From the Hamiltonian system \eqref{eq:Hamiltonian system}, we have
\begin{equation}\label{eq:total differential}
H=  \frac{1}{3}(v_1w_1+v_2w_2+2xH)_{x}+2\alpha w_2+\frac {2}{3}sv_2- (v_1w_{1x}+v_2w_{2x}+H).
\end{equation}
Next, let us integrate both sides of the above formula from $s+t$ to $+\infty$. To ensure the convergence, we need to add a few terms according to the asymptotics of the functions $v_i$, $w_i$ and $H$ in Theorem \ref{theorem-v-1}. More precisely, we get
\begin{align}\nonumber
&-\int_{s+t} ^{\infty}\biggl(H(\tau)+2\alpha\sqrt{|\tau-s|}+\frac{\alpha^2}{\tau-s} \biggr)d\tau = -2\alpha I_1(s+t;\alpha,\omega) -\frac {2s}{3}I_2(s+t;\alpha,\omega)+I_3(s+t;\alpha,\omega) \\
&\hspace{2cm} +\frac{1}{3}\left(v_1w_1+v_2w_2+\alpha+2(s+t)H+4\alpha s \, \mathrm{sgn}(t) \sqrt{|t|}+4\alpha \, \mathrm{sgn}(t) |t|^{3/2}+2\alpha^2\right) , \label{eq:total integral H-1}
\end{align}
where
\begin{align}
I_1(s+t;\alpha,\omega)&=\int_{s+t}^{\infty}\biggl(w_2(\tau)+\sqrt{|\tau-s|}+\frac{\alpha+\frac{1}{4}}{\tau-s} \biggr)d\tau \nonumber \\
&=-\frac{2}{3}\, \mathrm{sgn}(t) |t|^{3/2}-(\alpha+\frac{1}{4})\ln |t|-(2\alpha+\frac{1}{2})\ln 2-\ln\left(\Phi^{(-s)}_0\right)_{11}(s+t;s), \label{eq:total integral w-2}
\end{align}
\begin{equation}\label{eq:total integral-v-2}
I_2(s+t;\alpha,\omega)=\int_{s+t}^{\infty} \biggl( v_2(\tau)-\frac {\alpha}{\sqrt{|\tau-s|}} \biggr)d\tau=2\alpha \left(\left(\Phi^{(-s)}_1\right)_{11}(s+t;s)+ \mathrm{sgn}(t) \sqrt{|t|}\right),
\end{equation}
and
\begin{equation}\label{def:I-3}
I_3(s+t;\alpha,\omega)=\int_{s+t}^{\infty} \biggl(v_1(\tau)w_{1x}(\tau)+v_2(\tau)w_{2x}(\tau)+H(\tau)+2\alpha\sqrt{|\tau-s|}+\frac{\alpha(2\alpha+1)}{2(\tau-s)} \biggr)d\tau.
\end{equation}
Here, to obtain the explicit expressions of $I_1(s+t;\alpha,\omega)$ in \eqref{eq:total integral w-2} and $I_2(s+t;\alpha,\omega)$ in \eqref{eq:total integral-v-2}, we use the differential identity in \eqref{eq:integral w-2} and \eqref{eq:integral-v-2}, as well as the asymptotics of $\left(\Phi^{(-s)}_0\right)_{11}$ and $\left(\Phi^{(-s)}_1\right)_{11}$ in \eqref{eq:asy-phi-0} and \eqref{eq:asy-phi-1}.

Although the exact expression of the integral $I_3(s+t;\alpha,\omega)$ is unavailable now, we may consider its derivative with respect to the parameter $\alpha$. From the Hamiltonian system \eqref{eq:Hamiltonian system} and \eqref{def:Hamiltonian-CPII}, we have
\begin{equation}\label{eq:differential alpha}
(v_1w_{1x}+v_2w_{2x}+H)_{\alpha}=(v_1w_{1\alpha}+v_2w_{2\alpha})_{x}+2w_2.
\end{equation}
The above formula implies
\begin{equation}\label{def:I-3-deff}
\frac{\partial}{\partial\alpha}I_3(s+t;\alpha,\omega)=-\left(v_1(s+t)w_{1\alpha}(s+t)+v_2(s+t)w_{2\alpha}(s+t)\right)+2I_1(s+t;\alpha,\omega).
\end{equation}
For $s<0$, the function $\Phi(z;x,s)$ is independent of the parameter $\omega$; see the model Riemann-Hilbert problem for $\Phi$ in Sec. \ref{sec:rhp-phi}. Then, $I_3(s;\alpha,\omega)$ is also independent of the parameter $\omega$ for negative $s$. For $\alpha=0$ and $\omega=1$, we have $v_2=0$ and $v_1(x)=y^2(x;0)$, where $y(x;0)$ is the classical Hastings-McLeod solution to $\textrm{P}_2$ equation; see Remark \ref{remark:p2HM}. Moreover, the Hamiltonian $H$ in \eqref{def:Hamiltonian-CPII} is reduced to the Hamiltonian $\mathcal{H}$ for the $\textrm{P}_2$ equation. By \eqref{eq:total integral H-1}, we have
\begin{align}
I_3(x;0,1)=-\int_x ^{\infty}\mathcal{H}(\tau)d\tau-\frac{1}{3}(v_1(x)w_1(x)+2x \mathcal{H}(x)) \nonumber
\end{align}
Note from \eqref{Hamilton-prime}, $\mathcal{H}'(x) = - v_1(x)$. An integration by parts gives us
\begin{equation}
  -\int_x ^{\infty}\mathcal{H}(\tau)d\tau = - \int_x^{\infty} (\tau - x ) \mathcal{H}'(\tau) d \tau = -\int_x^{\infty} (\tau - x ) y^2(\tau;0) d \tau,
\end{equation}
which is exactly the exponent of the Tracy-Widom distribution in \eqref{Tracy-Widom formula}. Therefore, using \eqref{large gap asy-airy}, as well as \eqref{eq:asy-v-1-negative infty} and \eqref{eq:asy-w-1-negative infty}, we have
\begin{equation} \label{def:I-initial value}
  I_3(x;0,1)=-\frac{|x|^3}{12}-\frac{1}{8}\ln|x|+\zeta'(-1)+\frac{1}{24}\ln2+\frac{1}{6}+o(1), \quad \textrm{as } x \to -\infty.
\end{equation}

Recalling the approximations in \eqref{eq:asy-v-1-negative infty}-\eqref{eq:asy-phi-1-negative infty}, we have asymptotics of the  integrals in \eqref{eq:total integral w-2}, \eqref{eq:total integral-v-2} and \eqref{def:I-3-deff}
\begin{align}\label{eq: I-1 asy}
I_1(s+t;\alpha,\omega)=-\frac{1}{3}|s|^{3/2}+t\sqrt{|s|}+(\alpha+\frac{1}{4})\ln|s|-\frac{2}{3}\mathrm{sgn}(t) |t|^{3/2}-(\alpha+\frac{1}{4})\ln |t|+O(1/s),
\end{align}
\begin{equation}\label{eq: I-2 asy}
I_2(s+t;\alpha,\omega)=2\alpha\left(-\frac{1}{2}\sqrt{|s|}+\mathrm{sgn}(t) \sqrt{|t|}-\frac{1}{2}\frac{t}{\sqrt{|s|}}-
\frac{\alpha}{2}\frac{1}{|s|}+O(s^{-3/2})\right),
\end{equation}
\begin{equation}\label{eq: I-3 diff asy}
\frac{\partial}{\partial\alpha}I_3(s+t;\alpha,\omega)=-\frac{2}{3}|s|^{3/2}+2t\sqrt{|s|}+2(\alpha+\frac{1}{4})\ln|s|
-\frac{4}{3}\mathrm{sgn}(t) |t|^{3/2}-2(\alpha+\frac{1}{4})\ln |t|+o(1).
\end{equation}
Integrating the above formula about $\alpha$, we have
\begin{equation}\label{eq: I-3 asy}
\begin{split}
  I_3(s+t;\alpha,\omega)&= I_3(s+t;0,\omega)-\frac{2\alpha}{3}|s|^{3/2}+2\alpha t\sqrt{|s|}+(\alpha^2+\frac{\alpha}{2})\ln|s| \\
  & \quad -\frac{4\alpha}{3}\mathrm{sgn}(t)  |t|^{3/2}-(\alpha^2+\frac{\alpha}{2})\ln |t|+o(1).
\end{split}
\end{equation}
Substituting the approximations \eqref{def:I-initial value}-\eqref{eq: I-2 asy} and \eqref{eq: I-3 asy} into \eqref{eq:total integral H-1}, we obtain the asymptotics
\begin{align}\label{eq:total integral H asy}\nonumber
& -\int_{s+t} ^{\infty}\biggl( H(\tau)+2\alpha\sqrt{\tau-s}+\frac{\alpha^2}{\tau-s} \biggr)d\tau =-\frac 1{12}|s+t|^3 +\frac 23 \alpha|s|^{\frac 32}-2\alpha|s|^{1/2}t\\
 &\hspace{3cm} -(\alpha^2+\frac {1}{8}) \ln|s+t|+\frac 43 \alpha \, \mathrm{sgn}(t) |t|^{\frac 32}+\alpha^2\ln|t|+c_0+o(1),
\end{align}
where the constant $c_0$ is given in \eqref{def:constant-0}.

This completes the proof of Theorem \ref{theorem-large gap asy-P34}.

\subsection{Large gap asymptotics: proof of Theorem \ref{theorem-large gap asy-P2}}

From Theorem \ref{theorem-large gap asy-P34} and the relation \eqref{Fredholm det-P2-P34}, we obtain the following asymptotic
 expansion for the Fredholm determinant of the $\textrm{P}_{2}$ kernel as $s\to +\infty$
\begin{eqnarray}
 \ln \det[I-K^{P2}_{\alpha,s}]&=& -\frac {2}{3}(s^2+\frac {t}{2})^3 +\frac 43 \alpha s^{3}+2\alpha st-(\alpha^2+\frac {3}{4}) \ln s +\frac{2\sqrt{2}}{3}\alpha \, \mathrm{sgn}(t) |t|^{3/2} \nonumber \\
 &&+(\frac{\alpha^2}{2}+\frac{1}{8})\ln|t| +c_1 +I^*+o(1),
\end{eqnarray}
where $c_1$ is given in \eqref{def:constant-1}, and the integral $I^*$ is given by
\begin{equation*}
 \begin{split}
   I^* &=  \int_{-2^{-\frac{1}{3}}t}^{+ \infty} (\tau + 2^{-\frac{1}{3}} t) \biggl( u(\tau, \alpha-\frac {1}{2},0) + u(\tau, \alpha+\frac {1}{2},0) -\frac{\alpha}{\sqrt{|\tau|}}+\frac{\frac{\alpha^2}{2}+\frac{1}{8}}{\tau^2} \biggr) d\tau \\
  &  =  - \int_{- \infty}^t (\tau - t) \biggl(  \frac{u(-2^{-\frac{1}{3}} \tau, \alpha-\frac {1}{2},0)}{2^{\frac{2}{3}}} + \frac{u(-2^{-\frac{1}{3}}\tau, \alpha+\frac {1}{2},0)}{2^{\frac{2}{3}}} -\frac{\alpha}{\sqrt{|2\tau|}}+\frac{\frac{\alpha^2}{2}+\frac{1}{8}}{\tau^2} \biggr) d\tau.
 \end{split}
\end{equation*}
Then, the above formula and \eqref{u1u2-y-HM} gives us the integral in \eqref{lag gap asy-P2}.

This completes the proof of Theorem \ref{theorem-large gap asy-P2}.

\section*{Acknowledgements}
 We are grateful to Tom Claeys for  useful discussions. Shuai-Xia Xu was partially supported by the National Natural Science Foundation
of China under grant number 11571376, GuangDong Natural Science
Foundation under grant number 2014A030313176. Dan Dai was partially supported by grants from the Research Grants Council of the Hong
Kong Special Administrative Region, China (Project No. CityU 11300814, CityU 11300115,
CityU 11303016).


\begin{appendices}

\section{Asymptotics of $v_i(x)$ as $x\rightarrow-\infty$ }\label{sec:asy of v1--infty} 

In this appendix, we derive the following asymptotics of $v_i(x)$ as $x\to -\infty$.
Similar to Section \ref{sec:asy of v1-infty}, the Deift-Zhou nonlinear steepest descend method is applied to obtain the asymptotics. Depending on the sign of $s$, we divide our computations into two parts.

\subsection{Case I: $s<0$}

\subsubsection{Nonlinear steepest descent analysis of $\Phi$ as $x \to -\infty$}

We first remove the exponential term in the large-$z$ expansion $e^{-\theta(z,x)\sigma_3}$ of $\Phi(z)$ in \eqref{Phi at infinity} by introducing the following transformation:
\begin{equation}\label{A}
A(z)=\left(
         \begin{array}{cc}
           1 & 0 \\
           ir_2 & 1 \\
         \end{array}
       \right)
\Phi(z)e^{\theta(z,x)\sigma_3},
\end{equation}
 \begin{figure}[h]
 \begin{center}
   \includegraphics[width=5cm]{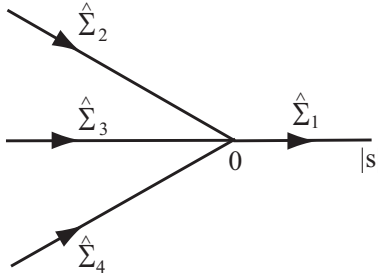} \end{center}
 \caption{\small{Regions and contours  for  $A$ with $s<0$.}}
 \label{Figure-s negative }
 \end{figure}
Then, we have a RH problem as follows.
\begin{rhp} The function $A(z)$ satisfies the following properties:
\begin{itemize}
  \item[(a)]   $A(z)$ is analytic in
  $\mathbb{C}\backslash \Sigma_A $,  where $\Sigma_A = \cup_{i=1}^4 \hat{\Sigma}_{i}$ is indicated in Fig. \ref{Figure-s negative };

  \item[(b)]  $A(z)$ satisfies the jump condition
  \begin{equation}\label{A-jumps}
     A_+(z)=A_-(z) J_{A,i}(z) \qquad \textrm{for } z\in\hat{\Sigma}_{i}
  \end{equation}
with
$$J_{A,1}=\left(\begin{array}{cc}
                                                              e^{2\pi i \alpha} & 0 \\
                                                            0 &  e^{-2\pi i \alpha}
                                                           \end{array}\right),  \quad  J_{A,2}=\left(\begin{array}{cc}
                                                             1 & 0 \\
                                                            e^{2\theta(z,x)} e^{2\pi i \alpha} & 1
                                                           \end{array}\right), $$
                                                           $$      J_{A,3}=\left(\begin{array}{cc}
                                                             0 & 1 \\
                                                            -1 & 0
                                                           \end{array}\right),\quad J_{A,4}=\left(\begin{array}{cc}
                                                             1 & 0 \\
                                                             e^{2\theta(z,x)}e^{-2\pi i \alpha} & 1
                                                           \end{array}\right);$$

  \item[(c)]   The asymptotic behavior of $A(z)$  at infinity:
  \begin{equation}\label{A at infinity}
  A(z)=
    \left (I+O\left (\frac 1 z\right )\right)
  z^{-\frac{1}{4}\sigma_3}\frac{I+i\sigma_1}{\sqrt{2}} \qquad \textrm{as } z \rightarrow \infty  ;
  \end{equation}

\item[(d)] $A(z)$ has the same  behavior near $z=0,-s$  as  $\Phi(z)$, given in \eqref{Phi at 0} and \eqref{Phi at -s}.
\end{itemize}
\end{rhp}

Because
\begin{equation} \label{theta-z-realpart}
  \Re \theta(z,x) = \Re \biggl(\frac{2}{3}z^{\frac{3}{2}}+x z^{\frac{1}{2}}\biggr) < 0, \qquad \textrm{for } z \in \mathbb{C}^{\pm},
\end{equation}
as $x\to -\infty$, the jumps $J_{A,i}(z)$ are exponentially close to the identity matrix except the ones on the real line. Neglecting the exponential small terms, we arrive at the following outer parametrix.

\begin{rhp} \label{rhp:outer2} The  function $A^{(\infty)}(z)$ satisfies the following properties:
\begin{itemize}
  \item[(a)]   $A^{(\infty)}(z)$ is analytic in
  $\mathbb{C}\backslash (-\infty,|s|]$;

  \item[(b)]  $A^{(\infty)}(z)$ satisfies the jump condition
  \begin{align}\label{A outside-jump}
  &A^{(\infty)}_+(z)=A^{(\infty)}_-(z) \begin{cases}
    \left(\begin{array}{cc}
                                                             0 &  1\\
                                                            -1& 0
                                                           \end{array}\right), &  z\in (-\infty,0), \vspace{0.2cm}  \\
                                     e^{2\pi \alpha i\sigma_3},& z\in (0,|s|);
  \end{cases}
\end{align}

  \item[(c)] At infinity, $A^{(\infty)}(z)$ satisfies the same asymptotics as $A(z)$ in \eqref{A at infinity}.

\end{itemize}
\end{rhp}

Similar to \textbf{RH problem \ref{rhp:outer1}}, a solution to the above RH problem can be constructed explicitly as
\begin{equation}\label{A outside solution}
A^{(\infty)}(z)=\left(
                                                                         \begin{array}{cc}
                                                                           1 & 0 \\
                                                                           2\alpha i |s|^{1/2} & 1 \\
                                                                         \end{array}
                                                                       \right)
z^{-\frac{1}{4}\sigma_3}\frac{I+i\sigma_1}{\sqrt{2}}
   h(z/|s|)^{\sigma_3},
\end{equation}
where $h(z)$ is defined in \eqref{h-function}.

Then, we turn to the local parametrix near the origin. Let $\theta(z,x)^2 = z (\frac{2}{3}z + x)^2$ be a conformal mapping in the neighbourhood of the origin. Then, similar to $S^{(0)}(\xi)$ in \eqref{parametrix local} of Section \ref{Sec:steepest-large-gap}, the local parametrix $A^{(0)}(z)$ is given explicitly as follows:
\begin{equation}\label{A-parametrix local}
A^{(0)}(z)=E_1(z)Z_0\left(\theta(z,x)^2\right)\left\{\begin{array}{ll}
                                                     e^{\theta(z,x)\sigma_3}e^{\pi i\alpha\sigma_3}, \quad &\arg z\in (0, \pi),\\
                                                     e^{\theta(z,x)\sigma_3}e^{-\pi i\alpha\sigma_3},\quad &\arg z\in (-\pi, 0),
                                                   \end{array}
\right.
\end{equation}
where the pre-factor $E_1(z)$ is an analytic function in $U_0:= \{z: \ |z| < \delta\}$ and $Z_0(z)$ is given in \eqref{phi related to bessel}.

To match the local parametrix $A^{(0)}(z)$ with the outer parametrix $A^{(\infty)}(z)$ in \eqref{A outside solution} on $\partial U_0$, we choose the analytic pre-factor $E_1(z)$ in \eqref{A-parametrix local} as
\begin{equation}\label{E-1}
E_1(z)= A^{(\infty)}(z)\left\{\begin{array}{ll}
                                                     e^{ -\pi i\alpha \sigma_3}\frac{I-i\sigma_1}{\sqrt{2}} z^{\frac {1}{4}\sigma_3}\left (|x|-\frac 23 z\right)^{\frac{1}{2}\sigma_3}, \quad& \arg z\in (0, \pi),\\[.2cm]
                                                     e^{ \pi i\alpha\sigma_3}\frac{I-i\sigma_1}{\sqrt{2}}z^{\frac {1}{4}\sigma_3}\left (|x|-\frac 23 z\right)^{\frac{1}{2}\sigma_3}, \quad&   \arg z\in (-\pi, 0),
                                                   \end{array}
\right. \textrm{for } z \in U_0.
\end{equation}
Then,  the following matching condition is fulfilled
\begin{equation}\label{mathching-1}
A^{(0)}(z)=\left(I+O(\frac 1x)\right)A^{(\infty)}(z) \qquad \textrm{as } x\to-\infty,
\end{equation}
uniformly for $z \in \partial U_0$.

With the outer and local parametrices constructed explicitly in \eqref{A outside solution} and \eqref{A-parametrix local}, we introduce the final transformation as follows:
\begin{equation}\label{B}
 B(z)= \begin{cases}
    A(z)(A^{(\infty)}(z))^{-1}, &  |z|>\delta, \\  A(z)(A^{(0)}(z))^{-1}, & |z|< \delta.
 \end{cases}
\end{equation}
By the matching condition \eqref{mathching-1}, one can verify that the jump of $B(z)$ is
$$J_B(z)=I+O(\frac 1x) \qquad \textrm{as } x\to-\infty,$$
which implies
\begin{equation}\label{B-estimate}
B(z)=I+O(\frac 1x) \qquad \textrm{as } x\to-\infty,
\end{equation}
uniformly  for $z$ in the complex plane; see the similar analysis in \textbf{RH problem \ref{rhp:c-final}}.

\subsubsection{Asymptotics of $v_i(x)$} \label{sec:vi-1}

Recall the transformation \eqref{A} and the representations of $v_i(x)$ in \eqref{v1-Phi'} and \eqref{v2-Phi'}, we have
\begin{eqnarray}
  v_1(x)&=&i\lim_{z\to 0}z(A'(z)A(z)^{-1})_{12}, \label{v-1 estimate} \\
  v_2(x)&=&i\lim_{z\to -s}(z+s)(A'(z)A(z)^{-1})_{12}. \label{v-2 estimate}
\end{eqnarray}
By the transformation \eqref{B}, we obtain
\begin{equation}\label{A-expression}
A(z)=B(z)A^{(0)}(z) \qquad \textrm{for } |z|< \delta.
\end{equation}
Note that $B(z)$ is analytic at the origin and satisfies the approximation \eqref{B-estimate}. Moreover, by \eqref{A-parametrix local},
\eqref{phi related to bessel} and \eqref{E-1} in  the expression of $A^{(0)}(z)$, the pre-factor $E_1(z)$   is analytic at the origin and the Bessel parametrix satisfies the following relation
$$\lim_{z\to 0}z Z_0'(z) Z_0(z)^{-1}=\left(
                                                                                                                       \begin{array}{cc}
                                                                                                                         0 & \frac {1}{2i} \\
                                                                                                                         0 & 0 \\
                                                                                                                       \end{array}
                                                                                                                     \right).$$
Thus, we have from \eqref{v-1 estimate} and \eqref{A-expression}
\begin{equation}\label{v-estimate-1}
v_1(x)=\left(B(0)E_1(0)\left(
                                                                                                                       \begin{array}{cc}
                                                                                                                         0 & \frac 1{2} \\
                                                                                                                         0 & 0 \\
                                                                                                                       \end{array}
                                                                                                                     \right)
E_1(0)^{-1}B(0)^{-1}\right)_{12}=\frac {1}{2}(E_1(0))_{11}^2(1+O(\frac {1}{x})).
\end{equation}
From \eqref{A outside solution} and \eqref{E-1}, we obtain
\begin{equation}\label{E-1-estimate}
(E_1(0))_{11}^2=|x|.
\end{equation}
Finally, we get the asymptotics
\begin{equation}\label{v-1-asy}
v_1(x)=-\frac {x}{2}(1+O(1/x)), \quad x\to-\infty.
\end{equation}

Similarly, by using the relation \eqref{B}, \eqref{v-2 estimate}, the definition of $A^{(\infty)}$ in \eqref{A outside solution} and the approximation \eqref{B-estimate}, we
have
\begin{equation}\label{v-2-asy}
v_2(x)=\frac {\alpha}{\sqrt{|s|}} +O(1/x), \quad x\to-\infty.
\end{equation}


\subsection {Case II: $s>0$  }

For $s>0$ and $\omega=0$, it is easy to see that $\Phi(z)$ is analytic at $z=s$ and $v_1(x)=0$. Therefore, in this section, we consider the case $\omega=e^{-2\pi i\beta}$ with $|\mathrm{Re}\beta|<\frac {1}{2}$.

\subsubsection{Nonlinear steepest descent analysis of $\Phi$ as $x \to -\infty$}

When $s<0$, the function $A(z)$ defined in \eqref{A} satisfies the following RH problem.
\begin{figure}[h]
 \begin{center}
   \includegraphics[width=5cm]{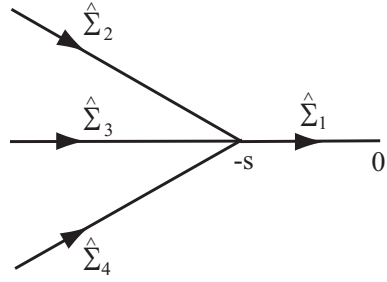} \end{center}
 \caption{\small{Regions and contours  for   $A$ with $s>0$.}}
 \label{Figure-s positive }
 \end{figure}
\begin{rhp} The function $A(z)$ satisfies the following properties:
\begin{itemize}
  \item[(a)]   $A(z)$ is analytic in
  $\mathbb{C}\backslash \hat{\Sigma}_{i}$, where the contours $\hat{\Sigma}_{i}$ are indicated in Fig. \ref{Figure-s positive };

  \item[(b)]  $A(z)$ satisfies the jump condition
  \begin{equation}\label{A-III-jumps}
  A_+(z)=A_-(z) J_{A,i}(z) \qquad \textrm{for } z\in\hat{\Sigma}_{i}
  \end{equation}
with
$$J_{A,1}=\left(\begin{array}{cc}
                                                             e^{g_{0+}(z)-g_{0-}(z)} & e^{-2\pi i\beta}  \\
                                                            0 &   e^{g_{0-}(z)-g_{0+}(z)}
                                                           \end{array}\right),  \quad  J_{A,2}=\left(\begin{array}{cc}
                                                             1 & 0 \\
                                                            e^{2\theta(z,x)} e^{2\pi i \alpha} & 1
                                                           \end{array}\right), $$
                                                           $$      J_{A,3}=\left(\begin{array}{cc}
                                                             0 & 1 \\
                                                            -1 & 0
                                                           \end{array}\right),\quad J_{A,4}=\left(\begin{array}{cc}
                                                             1 & 0 \\
                                                             e^{2\theta(z,x)}e^{-2\pi i \alpha} & 1
                                                           \end{array}\right);$$

  \item[(c)]   The asymptotic behavior of $A(z)$  at infinity
  \begin{equation}\label{A-IIII at infinity}
  A(z)=
    \left (I+O\left (\frac 1 z\right )\right)
  z^{-\frac{1}{4}\sigma_3}\frac{I+i\sigma_1}{\sqrt{2}}  \qquad \textrm{as } z \rightarrow \infty ;
  \end{equation}

\item[(d)] The asymptotic behavior of $A(z)$  at $z=0,-s$ is the same as that of $\Phi(z)$.
\end{itemize}
\end{rhp}

Based on the factorization
$$\left(\begin{array}{cc}
                                                             e^{g_{0+}(z)-g_{0-}(z)} & e^{-2\pi i\beta}  \\
                                                            0 &   e^{g_{0-}(z)-g_{0+}(z)}
                                                           \end{array}\right)=\left(\begin{array}{cc}
                                                             1 & 0\\
                                                             e^{2\pi i\beta} e^{2g_{0-}(z)} &  1
                                                           \end{array}\right)\left(\begin{array}{cc}
                                                             0 & e^{-2\pi i\beta} \\
                                                           -e^{2\pi i\beta} &  0
                                                           \end{array}\right)\left(\begin{array}{cc}
                                                             1 & 0\\
                                                             e^{2\pi i\beta} e^{2g_{0+}(z)} &  1
                                                           \end{array}\right),$$
we introduce the second transformation
\begin{equation}\label{B-II} B(z)=\left\{\begin{array}{cc}
                                               A(z)\left(\begin{array}{cc}
                                                             1 & 0\\
                                                               e^{2\pi i\beta} e^{2\theta(z,x)} &  1
                                                           \end{array}\right), & \mbox{$z$ in the lower lens-shaped region;}\\
                                              A(z)\left(\begin{array}{cc}
                                                             1 & 0\\
                                                             -   e^{2\pi i\beta}e^{2\theta(z,x)} &  1
                                                           \end{array}\right), & \mbox{$z$ in the upper lens-shaped region;}\\
                                               A(z), & \mbox{otherwise}.
                                             \end{array}
\right.
  \end{equation}
  \begin{figure}[h]
 \begin{center}
   \includegraphics[width=6cm]{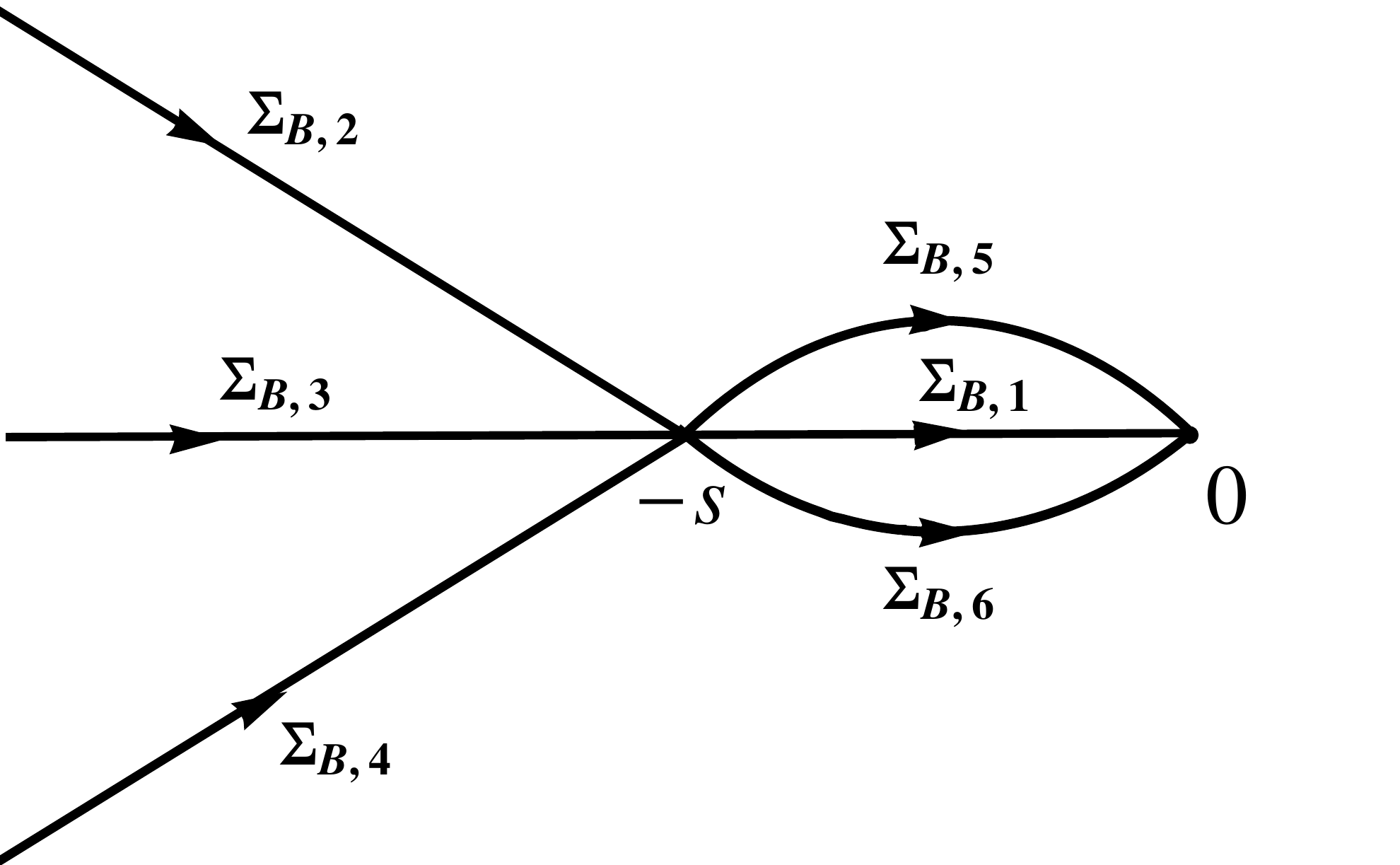} \end{center}
 \caption{\small{Regions and contours  for   $B$.}}
 \label{Figure-B-1 }
 \end{figure}
Then, $B(z)$ satisfies  the following RH problem.
\begin{rhp} The function $B(z)$ satisfies the following properties:
  \begin{itemize}
  \item[(a)]   $B(z)$ is analytic in
  $\mathbb{C}\backslash \Sigma_{B,i}$; see Fig. \ref{Figure-B-1 } for the contours $ \Sigma_{B,i}$;

  \item[(b)]  $B(z)$ satisfies the jump condition
  \begin{equation}\label{B-III-jumps}
  B_+(z)=B_-(z) J_{B,i}(z) \qquad \textrm{for } z\in\Sigma_{B,i}
  \end{equation}
with
$$J_{B,1}=\left(\begin{array}{cc}
                                                             0 &   e^{-2\pi i\beta} \\
                                                            -  e^{2\pi i\beta} &  0
                                                           \end{array}\right),  \quad  J_{5}=J_{6}=\left(\begin{array}{cc}
                                                             1 & 0 \\
                                                            e^{2\pi i\beta}e^{2\theta(z,x)} & 1
                                                           \end{array}\right), $$
                                                           $$  J_{B,2}=\left(\begin{array}{cc}
                                                             1 & 0 \\
                                                            e^{2\theta(z,x)} e^{2\pi i \alpha} & 1
                                                           \end{array}\right),     J_{B,3}=\left(\begin{array}{cc}
                                                             0 & 1 \\
                                                            -1 & 0
                                                           \end{array}\right), J_{B,4}=\left(\begin{array}{cc}
                                                             1 & 0 \\
                                                             e^{2\theta(z,x)}e^{-2\pi i \alpha} & 1
                                                           \end{array}\right);$$

  \item[(c)] At infinity, $B(z)$ satisfies the same asymptotics as $A(z)$ in \eqref{A-IIII at infinity};

\item[(d)] $B(z)$  has the same behavior near  $z=0,-s$ with $\Phi(z)$ given in \eqref{Phi at 0} and \eqref{Phi at -s}.
\end{itemize}
\end{rhp}

Due to \eqref{theta-z-realpart}, as $x\to -\infty$, the jumps are close to the identity matrix except the ones on the real line.
So, we arrive at the following outer parametrix.
\begin{rhp} The function $B^{(\infty)}(z)$ satisfies the following properties:
\begin{itemize}
  \item[(a)]   $B^{(\infty)}(z)$ is analytic in
  $\mathbb{C}\backslash (-\infty,0]$;

  \item[(b)]  $B^{(\infty)}(z)$ satisfies the following jump condition
  \begin{align}\label{B outside-jump}
  &B^{(\infty)}_+(z)=B^{(\infty)}_-(z) \left(\begin{array}{cc}
                                                             0 &  1\\
                                                            -1& 0
                                                           \end{array}\right),\quad  z\in (-\infty,-s),\nonumber\\
&B^{(\infty)}_+(z)=B^{(\infty)}_-(z)\left(\begin{array}{cc}
                                                             0 &  e^{-2\pi i\beta}\\
                                                            -e^{2\pi i\beta}& 0
                                                           \end{array}\right) ,\quad z\in (-s,0);
\end{align}

  \item[(c)] At infinity, $B^{(\infty)}(z)$ satisfies the same asymptotics as $A(z)$ in \eqref{A-IIII at infinity};

\end{itemize}
\end{rhp}

To construct the outer parametrix, let us define the following scalar function $h_2(z)$ by
\begin{equation}\label{h1-function}
 h_2(z)=\left(\frac {\sqrt{z}+i\sqrt{s}}{\sqrt{z}-i\sqrt{s}}\right)^{\beta},\quad z\in \mathbb{C}\setminus (-\infty,0],
\end{equation}
where the power function $z^c, c \notin \mathbb{Z},$ takes the principle branch with the branch cut along  $(-\infty,0)$.
Then, $h_2(z)$ satisfies the following jump condition
\begin{equation}
h_{2+}(x)h_{2-}(x)=\left\{\begin{array}{c}
                                        e^{2\pi i\beta}, \quad x\in(-s,0) \\
                                         1, \quad x\in(-\infty,-s).
                                      \end{array} \right.
\end{equation}
With the function $h_2(z)$, one solution to the above RH problem is explicitly given by
\begin{equation}\label{B-infty- solution}
 B^{(\infty)}(z)=\left(
                                                                         \begin{array}{cc}
                                                                           1 & 0 \\
                                                                          2\beta \sqrt{s} & 1 \\
                                                                         \end{array}
                                                                       \right)
z^{-\frac{1}{4}\sigma_3}\frac{I+i\sigma_1}{\sqrt{2}}
 h_2(z)^{\sigma_3}.
\end{equation}

In the neighborhood of the origin,  the local parametrix is similar to \eqref{A-parametrix local} and given explicitly as follows:
\begin{equation}\label{B-0}
  B^{(0)}(z)=E_2(z)Z_0(\theta(z,x)^2)e^{(\frac {2}{3}z^{\frac {3}{2}}+xz^{\frac {1}{2}})\sigma_3}e^{\pi i \beta\sigma_3},
\end{equation}
where  $Z_0$ is defined in terms of  the Bessel functions \eqref{phi related to bessel}.
 The analytic pre-factor $E_2(z)$ is given by
\begin{equation}\label{E-2}
 E_2(z)= B^{(\infty)}(z)\left\{\begin{array}{ll}
                                                     e^{ -\pi i\beta\sigma_3}\frac{I-i\sigma_1}{\sqrt{2}} z^{\frac {1}{4}\sigma_3}\left (|x|-\frac 23 z\right)^{\frac{1}{2}\sigma_3}, \quad& \arg z\in (0, \pi),\\[.2cm]
                                                     e^{- \pi i\beta\sigma_3}\frac{I-i\sigma_1}{\sqrt{2}} z^{\frac {1}{4}\sigma_3}\left (|x|-\frac 23 z\right)^{\frac{1}{2}\sigma_3}, \quad&   \arg z\in (-\pi, 0),
                                                   \end{array}
\right.
\end{equation}
for $|z|<\delta$. With the properties of the Bessel functions, one can verify the  following matching condition
\begin{equation}\label{Matching -2}
B^{(0)}(z)=(I+O(1/x))B^{(\infty)}(z),
\end{equation}
 uniform for $|z|=\delta$  as $x\to -\infty$.

 In the neighborhood of $-s$, we seek a local parametrix of the following form
\begin{equation}\label{B-1}
B^{(1)}(z)=E_3(z)\hat{B}^{(1)}(f_0(z))e^{\frac {1}{2}\pi i \beta\sigma_3}e^{\theta(z,x)\sigma_3},
 \end{equation}
where $f_0(z)$ is the  conformal mapping near $-s<0$:
 \begin{equation}\label{B-1-conforml mapping}
 \begin{split}
   f_0(z)&:=(|x|i z^{\frac {1}{2}}-\frac {2i}{3}z^{\frac {3}{2}})+(|x|s^{\frac {1}{2}}+\frac {2}{3}s^{\frac {3}{2}})\\
   &=(s^{1/2}+\frac {1}{2}|x|s^{-1/2})(z+s)+O((z+s)^2),
   \end{split}
    \end{equation}
 with $\arg z\in(0,2\pi)$ and   $E_3(z)$ is analytic for $|z+s|<\frac {s}{2}$.
 Let
\begin{equation}\label{B-tilde}
\tilde{B}^{(1)}(z)=\hat{B}^{(1)}(z)\left\{\begin{array}{c}
                                                                            e^{-\pi i\alpha\sigma_3}, \quad \arg z\in(0,\frac {\pi}{2}) \\
                                                                             e^{\pi i\alpha\sigma_3},\quad \arg z\in(\frac {3\pi}{2},2\pi)\\
                                                                            I, \quad \arg z\in(\frac {\pi}{2},\frac {3\pi}{2}).
                                                                           \end{array}
\right.
    \end{equation}
Then, $\tilde{B}^{(1)}$ satisfies  the following RH problem.
\begin{figure}[h]
 \begin{center}
   \includegraphics[width=5cm]{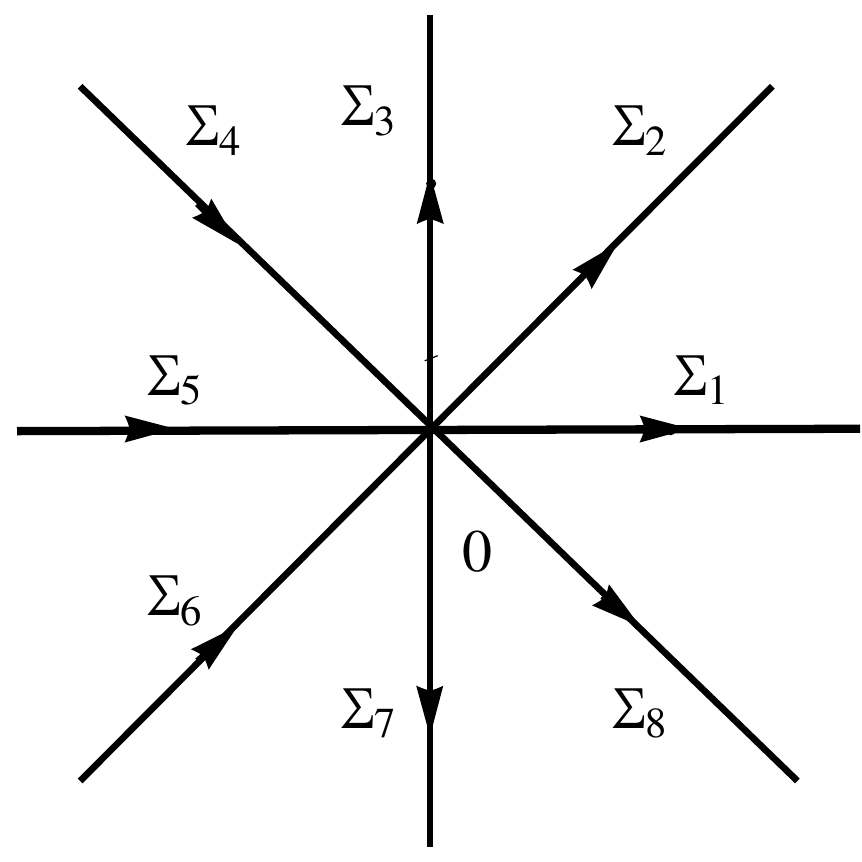} \end{center}
 \caption{\small{Regions and contours  for   $\tilde{B}$.}}
 \label{Figure-CHF}
 \end{figure}
\begin{rhp} The function $\tilde{B}^{(1)}(z)$ satisfies the following properties:
  \begin{itemize}
  \item[(a)]   $\tilde{B}^{(1)}(z)$ is analytic in
  $\mathbb{C}\backslash \Sigma_{i}$, where $\Sigma_{i}$ are indicated in Fig. \ref{Figure-CHF};

  \item[(b)] $\tilde{B}^{(1)}(z)$ satisfies the following jump condition
  \begin{equation}
    \tilde{B}_+^{(1)}(z)=\tilde{B}_-^{(1)}(z) J_i(z), \qquad \textrm{for } z \in \Sigma_i
  \end{equation}
  with
  \begin{equation*}
    J_1 = \left(\begin{array}{cc}
                                                             0 &   e^{-\pi i\beta} \\
                                                            -  e^{\pi i\beta} &  0
                                                           \end{array}\right), \quad J_2 = \left(\begin{array}{cc}
                                                             1 & 0 \\
                                                            e^{ \pi i(\beta-2\alpha)} & 1
                                                           \end{array}\right), \quad J_3 = J_7 = e^{\pi i\alpha\sigma_3},
  \end{equation*}
  \begin{equation*}
    J_4 = \left(\begin{array}{cc}
                                                             1 & 0 \\
                                                            e^{ -\pi i(\beta-2\alpha)} & 1
                                                           \end{array}\right), \quad J_5 = \left(\begin{array}{cc}
                                                             0 &   e^{\pi i\beta} \\
                                                            -  e^{-\pi i\beta} &  0
                                                           \end{array}\right), \quad J_6 = \left(\begin{array}{cc}
                                                             1 & 0 \\
                                                            e^{- \pi i(\beta+2\alpha)} & 1
                                                           \end{array}\right),
  \end{equation*}
  and $J_8 = \left(\begin{array}{cc}
                                                             1 & 0 \\
                                                            e^{\pi i(\beta+2\alpha)} & 1
                                                           \end{array}\right);$

\item[(c)] As $z\to 0$,
$$\tilde{B}^{(1)}(z)=\left(
                       \begin{array}{cc}
                         O(|z|^{\alpha}) & O(|z|^{-|\alpha|}) \\
                         O(|z|^{\alpha}) & O(|z|^{-|\alpha|}) \\
                       \end{array}
                     \right), \quad \alpha\neq 0,$$
and $$\tilde{B}^{(1)}(z)=\left(
                       \begin{array}{cc}
                         O(1) & O(\ln z) \\
                         O(1) & O(\ln z) \\
                       \end{array}
                     \right), \quad \alpha=0.$$
\end{itemize}
\end{rhp}
The above RH problem is the same as the one in \cite[(4.25)-(4.31)]{DIK1} up to a rotation.
The solution  $\tilde{B}^{(1)}(z)$ can be constructed explicitly in terms of
the confluent hypergeometric functions $\psi(a,b,z)$
\begin{align*}
\tilde{B}^{(1)}(z)&=C_0\left(\begin{array}{ll}
(2e^{\pi i/2}z)^\alpha\psi(\alpha+\beta,1+2\alpha,2e^{\pi i/2}z)e^{i\pi(\alpha+2\beta)}e^{-iz}\\
-\frac{\Gamma(1+\alpha+\beta)}{\Gamma(\alpha-\beta)}(2e^{\pi i/2}z)^{-\alpha}\psi(1-\alpha+\beta,1-2\alpha,2e^{\pi i/2}z)e^{i\pi(-3\alpha+\beta)}e^{-iz}\end{array}\right.\nonumber\\
&\quad\quad\left.\begin{array}{rr}
-\frac{\Gamma(1+\alpha-\beta)}{\Gamma(\alpha+\beta)}(2e^{\pi i/2}z)^\alpha\psi(1+\alpha-\beta,1+2\alpha,2e^{-\pi i/2}\zeta)e^{i\pi(\alpha+\beta)}e^{iz}\\
(2e^{\pi i/2}z)^{-\alpha}\psi(-\alpha-\beta,1-2\alpha,2e^{-\pi i/2}z)e^{-i\pi\alpha}e^{iz}\end{array}\right),
\end{align*}
for $z$ in the sector with boundary $\Sigma_1$ and $\Sigma_2$, and the constant matrix $C_0$ is
 $$
 C_0=2^{\beta \sigma_3}e^{\beta\pi i\sigma_3/2}
\left(\begin{array}{cc}e^{-i\pi(\alpha+2\beta)} & 0\\ 0 & e^{i\pi(2\alpha+\beta)}\end{array}\right);
$$
see \cite{DIK1,ik}. The expression of the solution in the other sectors is then determined by using the jump conditions. Note that, from the properties of the confluent hypergeometric functions $\psi(a,b,z)$, the asymptotics of $\tilde{B}^{(1)}(z)$ as $z\to\infty$ is given by
\begin{equation}\label{CHF at infinity}
\begin{split}
  \tilde{B}^{(1)}(z)=  &  \left [I+\frac {1}{z} \left(
                                                        \begin{array}{cc}
                                                         - \frac {i(\alpha^2-\beta^2)}{2} & -i2^{2\beta-1}\frac {\Gamma(1+\alpha-\beta)}{\Gamma(\alpha+\beta)}e^{i\pi(\alpha-\beta)} \\
                                                           i2^{-2\beta-1}\frac {\Gamma(1+\alpha+\beta)}{\Gamma(\alpha-\beta)}e^{-i\pi(\alpha-\beta)} &  \frac {i(\alpha^2-\beta^2)}{2} \\
                                                        \end{array}
                                                      \right)\right.\\
&\left. +O\left (\frac 1 {z^2}\right )\right ]
 z^{-\beta \sigma_3}e^{-iz\sigma_3},
\end{split}
\end{equation}
in the region $\arg z \in(0,\pi/2)$. 
Moreover, $\tilde{B}^{(1)}(z)$ satisfies the following differential equation
\begin{equation}\label{CHF-equation}
\frac{d}{dz}\tilde{B}^{(1)}(z)=\left(-i\sigma_3+\frac {1}{z}\left(
                                                                      \begin{array}{cc}
                                                                        -\beta & 2^{2\beta}\frac {\Gamma(1+\alpha-\beta)}{\Gamma(\alpha+\beta)}e^{i\pi(\alpha-\beta)}  \\
                                                                        2^{-2\beta}\frac {\Gamma(1+\alpha+\beta)}{\Gamma(\alpha-\beta)}e^{-i\pi(\alpha-\beta)} & \beta \\
                                                                      \end{array}
                                                                    \right)\right)
\tilde{B}^{(1)}(z).\end{equation}
We take the analytic pre-factor $E_3(z)$ in \eqref{B-1} as
\begin{equation}\label{E-3}
E_3(z)=\left\{\begin{array}{ll}
B^{(\infty)}(z)e^{-\left (\alpha+\frac {\beta}2\right )\pi i\sigma_3}e^{i(\frac {2}{3}|s|^{\frac{3}{2}}-x|s|^{\frac{1}{2}} )\sigma_3}f_0(z)^{\beta \sigma_3},& \Im z>0,\\
B^{(\infty)}(z)\left(\begin{array}{cc}0 & 1\\-1 & 0\end{array}\right)
e^{-\left (\alpha+\frac {\beta}2\right )\pi i\sigma_3}e^{i(\frac {2}{3}|s|^{\frac{3}{2}}-x|s|^{\frac{1}{2}} )\sigma_3} f_0(z)^{\beta \sigma_3}, & \Im z<0,  \end{array}\right.
\end{equation}
with $\arg f_0(z)\in (0,2\pi)$. Particularly, we have
\begin{equation}\label{E-3-s}
E_3(-s)=s^{-\frac {1}{4}\sigma_3}\left(
         \begin{array}{cc}
           1 & 0 \\
          2\beta & 1\\
         \end{array}
       \right)\frac {1}{\sqrt{2}}\left(
                \begin{array}{cc}
                  1 & 1 \\
                  -1 & 1 \\
                \end{array}
              \right)e^{-(\alpha-\frac {\beta}{2}+\frac  {1}{4})\pi i\sigma_3}e^{i(\frac {2}{3}s^{3/2}+|x|s^{1/2})\sigma_3}(2\sqrt{s}|x-2s|)^{\beta\sigma_3}.
\end{equation}
Using the expressions of $B^{(\infty)}(z)$ in \eqref{B-infty- solution}, the asymptotics of $\tilde{B}^{(1)}(z)$ in \eqref{CHF at infinity} and the definition of $B^{(1)}(z)$  in \eqref{B-1}, we have the following matching condition
\begin{equation}\label{Matching-CHG}
B^{(1)}(z)B^{(\infty)}(z)^{-1}=I+x^{-\mathrm{Re}\beta\sigma_3}O(1/x)x^{\mathrm{Re}\beta\sigma_3}
=I+O(x^{2|\mathrm{Re}\beta|-1}).
\end{equation}

In the final transformation, we define
\begin{equation}\label{C}
C(z)=\begin{cases}
  B(z)(B^{(0)}(z))^{-1}, & |z|<\delta \\
  B(z)(B^{(1)}(z))^{-1}, & |z+s|< \delta \\
  B(z)(B^{(\infty)}(z))^{-1},& \mbox{ otherwise}.
\end{cases}
\end{equation}
By the matching conditions \eqref{Matching -2} and \eqref{Matching-CHG}, one can verify that the jump of $C(z)$ is
$$J_C(z)=I+O(x^{2|\mathrm{Re}\beta|-1}),$$
where $2|\mathrm{Re}\beta|-1<0$. Therefore, we have
\begin{equation}\label{C-estimate}
C(z)=I+O(x^{2|\mathrm{Re}\beta|-1}),
\end{equation}
uniformly  for $z$ in the complex plane; see the similar analysis in \textbf{RH problem \ref{rhp:c-final}}.

\subsubsection{Asymptotics of $v_i(x)$} \label{sec:vi-2}

Recall the transformations \eqref{A} and \eqref{B-II}, and the representations of $v_i(x)$ in \eqref{v1-Phi'} and \eqref{v2-Phi'}, we have
\begin{equation}\label{v-estimate}
 v_1(x)=i\lim_{z\to 0^{+}}z(B'(z)B(z)^{-1})_{12},
 \end{equation}
and
\begin{equation}\label{v-2-estimate-1}
 v_2(x)=i\lim_{z\to -s}(z+s)(B'(z)B(z)^{-1})_{12}.
 \end{equation}
From the transformation  \eqref{C}, we get
\begin{equation}\label{A-II-expression}
B(z)=C(z)B^{(0)}(z), \qquad \textrm{for $|z|< \delta$.}
\end{equation}
Thus, the asymptotics of $v_1(x)$ follows from \eqref{B-0} and the approximation \eqref{C-estimate}
\begin{equation}\label{v-1-asy-case II}
v_1(x)=-\frac {x}{2}(1+O(x^{2|\mathrm{Re}\beta|-1})), \quad x\to-\infty.
\end{equation}
Combining \eqref{B-1}, \eqref{CHF-equation} and \eqref{E-3}, we get from \eqref{v-2-estimate-1}
\begin{equation}\label{v-2-estimate-2}
 v_2(x)=i\left(C(-s)E_3(-s)\left(
                                                                      \begin{array}{cc}
                                                                        -\beta & 2^{2\beta}\frac {\Gamma(1+\alpha-\beta)}{\Gamma(\alpha+\beta)}e^{i\pi(\alpha-\beta)}  \\
                                                                        2^{-2\beta}\frac {\Gamma(1+\alpha+\beta)}{\Gamma(\alpha-\beta)}e^{-i\pi(\alpha-\beta)} & \beta \\
                                                                      \end{array} \right)
 E_3^{-1}(-s)C^{-1}(-s)\right)_{12}.
 \end{equation}
Some straightforward computations give us
\begin{equation}\label{v-2-asy-case II}
v_2(x)=\frac{1}{\sqrt{s}}\left(i\beta+\frac {1}{2}\frac {\Gamma(1+\alpha-\beta)}{\Gamma(\alpha+\beta)}e^{i\tilde{\vartheta}(x,s,\alpha,\beta)}+\frac {1}{2}\frac {\Gamma(1+\alpha+\beta)}{\Gamma(\alpha-\beta)}e^{-i\tilde{\vartheta}(x,s,\alpha,\beta)}\right)(1+O(x^{2\mathrm{Re} \beta-1})),
\end{equation}
where $\tilde{\vartheta}(x,s,\alpha,\beta)=-2xs^{1/2}+\frac {4}{3}s^{3/2}-\alpha\pi -6i\beta\ln 2-i\beta\ln(s)-2i\beta \ln|\frac {x}{2}-s| $.

\medskip

Finally, summarizing the asymptotics of $v_i(x)$ as $x\to - \infty $ in \eqref{v-1-asy}, \eqref{v-2-asy}, \eqref{v-1-asy-case II} and \eqref{v-2-asy-case II}, we obtain the following results.

{\thm{\label{theorem-v-2}Under the same conditions as in Theorem \ref{theorem-v-1}, the asymptotic behaviors of $v_i(x)$ to the coupled $\textrm{P}_{2}$ equations \eqref{int-equation v} are given by
\begin{equation}\label{int-v-1-asy-infty}
v_1(x,s;2\alpha,\omega)=-\frac {x}2 \biggl( 1+o(1) \biggr),
\end{equation}
as $x\to -\infty$; and
\begin{equation}
v_2(x,s;2\alpha,\omega)= \begin{cases}
    \frac {1}{\sqrt{s}} \biggl( i\beta+\frac {\Gamma(1+\alpha-\beta)}{2\Gamma(\alpha+\beta)}e^{i\tilde{\vartheta}(x,s,\alpha,\beta)} & \mbox{if}~ s>0, \\
  \quad \quad +\frac {\Gamma(1+\alpha+\beta)}{2\Gamma(\alpha-\beta)}e^{-i\tilde{\vartheta}(x,s,\alpha,\beta)} \biggr)(1+O(x^{2\mathrm{Re} \beta-1})), \\
  \frac {\alpha}{ \sqrt{|s| } }+O(1/x), & \mbox{if}~ s<0,
\end{cases}
\end{equation}
as $x\to -\infty$, where $\omega=e^{-2\pi i\beta}$, $|\Re \beta|<1/2$ and $\tilde{\vartheta}(x,s,\alpha,\beta)=-2xs^{1/2}+\frac {4}{3}s^{3/2}-\alpha\pi -6i\beta\ln 2-i\beta\ln(s)-2i\beta \ln|\frac {x}{2}-s| $. }}

\end{appendices}


\end{document}